\date{}
\DeclareMathOperator{\Tr}{Tr}
\newcommand{\ben}{\begin{equation}}
\newcommand{\een}{\end{equation}}
\def\tilde{\widetilde}
\def\a{\alpha}
\def\b{\beta}
\def\g{\gamma}
\def\Ga{\Gamma}
\def\l{\lambda}
\def\La{\Lambda}
\def\r{\rho}
\def\th{\theta}
\def\om{\omega}
\def\Om{\Omega}
\def\setminus{\smallsetminus}
\def\A{{\cal A}}
\def\B{{\cal B}}
\def\F{{\cal F}}
\def\M{{\cal M}}
\def\N{{\cal N}}
\def\R{{\cal R}}
\def\L{{\mathcal L}}
\def\H{{\cal H}}
\def\K{{\cal K}}
\def\S{{\cal S}}
\def\s{{\sigma}}
\def\l{{\lambda}}
\def\La{{\Lambda}}
\def\x{{\xi}}
\def\PSL{{{\rm PSL}(2,\mathbb R)}}
\def\S2{S^{1(2)}}
\def\Reali{\mathbb R}
\newtheorem{theorem}{Theorem}[section]
\newtheorem{lemma}[theorem]{Lemma}
\newtheorem{corollary}[theorem]{Corollary}
\newtheorem{proposition}[theorem]{Proposition}
\theoremstyle{definition} 
\theoremstyle{remark} \newtheorem{remark}[theorem]{Remark}
\def\setminus{\smallsetminus}
\def\PSL{PSU(1,1)}
\def\sp{{\mathrm {sp}}}
\def\SL2{{{\rm SL}(2,\R)}}
\def\PSL2{{{\rm PSL}(2,\Reali)}}
\def\mob{{\rm Mob}}
\def\U1{{{\rm V}(1)}}
\def\SU2{{{\rm SV}(2)}}
\def\SU{{{\rm SU}}}
\def\A{{\mathcal A}}
\def\B{{\mathcal B}}
\def\F{{\mathcal F}}
\def\H{{\mathcal H}}
\def\S{{\mathcal S}}
\def\K{{\mathcal K}}
\def\M{{\mathcal M}}
\def\N{{\mathcal N}}
\title{\Huge{Von Neumann Entropy in QFT}}
\author{{\sc Roberto Longo}\thanks{Supported by the ERC Advanced Grant 669240 QUEST ``Quantum Algebraic Structures and Models'', MIUR FARE R16X5RB55W  QUEST-NET and GNAMPA-INdAM.} \\ Dipartimento di Matematica, Universit\`a di Roma Tor Vergata \\
Via della Ricerca Scientifica, 1 - 00133 Roma, Italy
\\ Email:
{\tt longo@mat.uniroma2.it} \\ [4mm] {\sc Feng Xu}\footnote{Supported in part by NSF grant DMS-1764157.} \\
Department of Mathematics, University of California at Riverside\\
Riverside, CA 92521\\
E-mail: {\tt xufeng@math.ucr.edu}
}
\date{}
\begin{document}

\maketitle

\begin{abstract}
In the framework of Quantum Field Theory, we provide a rigorous, operator algebraic notion of entanglement entropy associated with a pair of open double cones $O \subset \tilde O$ of the spacetime, where the closure of $O$ is contained in $\tilde O$.
Given a QFT net $\A$ of local von Neumann algebras $\A(O)$,
we consider the von Neumann entropy $S_\A(O, \tilde O)$ of the restriction of the vacuum state to the canonical intermediate type $I$ factor for the inclusion of von Neumann algebras $\A(O)\subset \A(\tilde O)$ (split property).
We show that this canonical entanglement entropy $S_\A(O, \tilde O)$ is finite for the chiral conformal net on the circle generated by finitely many free Fermions (here double cones are intervals).
To this end, we first study the notion of von Neumann entropy of a closed real linear subspace of a complex Hilbert space, that we then estimate for the local free fermion subspaces.
We further consider the lower entanglement entropy $\underline S_\A(O, \tilde O)$, the infimum of the vacuum von Neumann entropy of $\F$, where $\F$ here runs over all the intermediate, discrete type $I$ von Neumann algebras. We prove that $\underline S_\A(O, \tilde O)$ is finite for the local chiral conformal net generated by finitely many commuting $U(1)$-currents.
\end{abstract}
\newpage

\section{Introduction}
Von Neumann entropy is the basic concept in quantum information and extends the classical Shannon's information entropy notion to the non commutative setting.

As is well known, a state $\omega$ on a matrix algebra $M$ is given by a density matrix $\r$, namely $\omega(T) = \Tr(\r T)$, $T\in M$. The von Neumann entropy of $\omega$ is given by
\[
S(\omega) = - \Tr(\r\log\r) \ .
\]
Entanglement entropy is a measure of quantum information by the degree of quantum entanglement of a quantum state.

Let's consider a bipartite, finite-dimensional quantum system $M=A\otimes B$, where $A$ and $B$ are matrix algebras. Given a pure state $\omega$ on the matrix algebra $M$, let $\r_A$ and $\r_B$ the density matrices associated with the restrictions of $\omega$ to $A$ and $B$ respectively on $A$ and $B$. The entanglement entropy of $\omega$ is defined as
\ben\label{EE}
- \Tr(\r_A\log\r_A) = - \Tr(\r_B\log\r_B) \ .
\een
The above definition directly extends to the case $M = \bigoplus_k A_k\otimes B_k$ is a direct sum, where $A_k$, $B_k$ are matrix algebras and the restrictions $\omega|_{A_k\otimes B_k}$ are pure (not normalised). It also extends to the infinite-dimensional case where $A_k$, $B_k$  are type $I$ factors; in this last case, however, the entanglement entropy may be infinite.

Entanglement is certainly one of the main feature of quantum physics and there is a famous, long standing debate on its interpretation (EPR paradox, Bell's inequalities, etc.). An overview of the matter lies beyond the purpose of this introduction.

The role of entanglement in Quantum Field Theory is more recent and increasingly important; it represents a piece of the quantum information framework in this subject. It appears in relation with several primary research topics in theoretical physics as area theorems, $c$-theorems, quantum null energy inequality, etc. (see for instance \cite{CCD, CH, Wit} and refs. therein).

Despite the rich physical literature on the subject, the rigorous definition of entanglement entropy in QFT is however not obvious. The point is that the von Neumann algebra $\A(O)$ associated with a double cone spacetime region $O$ is typically a factor of type $III$, so no trace exists on $\A(O)$ and one cannot naively extends the definition \eqref{EE} as one would  do with $A = \A(O)$, $B = \A(O')$, where $O$ is a double cone and $O'$ is its causal complement and $\om$ the vacuum state. Due to ultraviolet divergence, such a measure of the vacuum entanglement would always result to be infinite.
By Haag duality, that holds in much generality, $\A(O')$ is the commutant $\A(O)'$ of $\A(O)$ on the vacuum Hilbert space $\H$, so the von Neumann $\A(O)\vee\A(O')$ generated by $\A(O)$ and $\A(O')$ is equal to $B(\H)$ and cannot be naturally isomorphic  to the von Neumann tensor product $\A(O)\otimes\A(O')$.

To get rid of short distance divergences, on may however consider a slightly larger double cone $O\subset \tilde O$, namely the closure of $O$ is contained in the interior of $\tilde O$. The split property states that there is a natural isomorphism of von Neumann algebras
\[
\A(O)\vee\A(\tilde O') \simeq \A(O)\otimes\A(\tilde O')\, ,
\]
that identifies $\A(O)$ with $\A(O)\otimes 1$ and $\A(O')$ with $1\otimes \A(O')$.

The split property expresses the statistical independence of  $\A(O)$  and $\A(\tilde O')$; it was verified for the free, neutral Boson QFT case in \cite{B}.  It was studied in \cite{DL} and led to important structural features both in Mathematics and in Physics. It follows under natural, general physical requirements \cite{BW}.  It holds automatically in chiral conformal QFT \cite{MTW}. (See \cite{HO} for a discussion of its validity in topologically non trivial spacetimes).

Approaches to the entanglement entropy by means of the split property are studied in \cite{Nar, Cas, OT, Wit, HS, Fal}. In particular, \cite{HS} contains a rigorous definition via separable states and \cite{OT} considers essentially our definition below.

The split property is a local property, in fact it is equivalent to the existence of an intermediate type $I$ factor $\F$ between $\A(O)$ and $\A(\tilde O)$
\ben\label{splitA}
\A(O) \subset \F \subset \A(\tilde O) \ .
\een
A type $I$ factor $\F$ is a von Neumann algebra isomorphic to $B(\K)$, the algebra of all bounded linear operators on some Hilbert space $\K$.

We may then define the entanglement entropy of the net $\A$ associated with the double cones $O \subset \tilde O$ as the vacuum von Neumann entropy associated with the $\F$ as in \eqref{EE}; here the global systems is $B(\H)$, the factorisation is given by $\F$, namely $A = \F,$ $B = \F'$ with a tensor product decomposition
\[
\H = \H_A\otimes \H_B\, , \quad A \simeq B(\H_A)\otimes 1, \quad B \simeq 1 \otimes B(\H_B)\, ,
\]
and the pure state is the vacuum state.

This definition however depends on the choice of $\F$.
Actually, if the split property holds, there are infinitely many intermediate type $I$ factors $\F$ in \eqref{splitA}.  Yet, as shown in \cite{DL}, there is a canonical intermediate type $I$ factor $\F$, associated with the $O, \tilde O$ and the vacuum vector $\Omega$, given by the formula \ben\label{can}
\F = \A(O)\vee J\A(O)J = \B(\tilde O)\cap J\B(\tilde O)J
\een
(if the local von Neumann algebras are factors),
with $J$ is the modular conjugation of the relative commutant von Neumann algebra $\A(O)'\cap \A(\tilde O)$ associated with $\Omega$.

We then define the (canonical) entanglement entropy of $\A$ with respect to  $O,\tilde O$ as
\ben\label{EESc}
S_\A(O,\tilde O) = -\Tr(\r_\F \log\r_\F) \, ,
\een
where $\F$ is the canonical intermediate type $I$ factor \eqref{can}. Here $\Tr$ is the trace of $\F$
(namely $\F = B(\H_A)\otimes 1_{\H_B}$ and $\Tr$ corresponds to the usual trace on $B(\H_A)$) and $\r_\F$ is the vacuum density matrix relative to $\F$.

The above definition concerns a local net $\A$. If $\A$ if a Fermi net, graded locality rather than locality holds. In this case, the split property is still defined by \eqref{splitA} and the entanglement entropy by \eqref{EESc}. However, the canonical intermediate type $I$ factor is to be defined by a twisted version of formula \eqref{can}, cf. \eqref{Fcan}.

A main result in this paper is that above defined canonical entanglement entropy is finite for the chiral conformal net $\M$ generated by a complex free fermion on $S^1$. Here, double cones are intervals $I \subset \tilde I$ of $S^1$.

$\M$ is a second quantisation net and we first provide an abstract analysis on the one particle Hilbert space $\H_0$. Let $F\subset \H_0$ be a closed, real linear subspace of $\H_0$. We define the entropy of $H$ as the von Neumann entropy
\[
S(F) = S(\s)\, ,
\]
with $\s = P_F P_{F'} P_F$, where $P_F$ is the real orthogonal projection onto $F$ and $F'$ is the symplectic complement of $F'$. Note that $P_{F'} P_F$ is an ``angle operator'' (cf. \cite{D, H, RV}).
$\s$ is a bounded, real linear operator. If $S(\s) < \infty$, then $\s$ has to be a (real) trace class operator;  with $\{\l_n\}$ the (positive) proper values of $\s$, then $S(\s) = -\sum_n \l_n\log\l_n$.

It turns out that
\[
S(\s) <\infty \Longleftrightarrow S(\r_\F) < \infty \, ,
\]
where $S(\r_\F)$ is the vacuum von Neumann entropy relative to the type $I$ factor $\F$ associated with $F$ in second quantisation.
We shall choose a canonical type $I$ subspace $F$, intermediate between the subspaces real $H(I) \subset H(\tilde I)$ of $\H_0$ associated with the interval $I\subset \tilde I$, so that $S(\s)$ will be the canonical entanglement entropy \eqref{EESc}.

In the second part of this paper, we provide model analysis in order to derive the finiteness of $S(\s)$, in the one complex free fermion case.
This is the first case where the entanglement entropy \eqref{EESc} is proved to be finite, a problem that is explicit or implicit in various papers, \cite{Nar, OT}.
The same finiteness result then follows for other related nets, as in the $r$-fermion case or in the case of a real fermion.

Now, there is another natural definition of entanglement entropy:
\[
\underline S_\A(I, \tilde I) = \inf_\F S(\F) \, ,
\]
the infimum of the  the von Neumann entropy $S(\F)=-\Tr(\r_\F \log\r_\F)$ over all the intermediate type $I$ factors $\F$ or, more generally, over all the intermediate type $I$ discrete von Neumann algebras $\F$. Of course, the lower entanglement entropy $\underline S_\A(I, \tilde I)$ is bounded by the canonical entanglement entropy $S_\A(I, \tilde I)$.

We shall infer form our results that
\[
\underline S_\A(I, \tilde I) < \infty
\]
also in the case of the local net generated by $r$ free bosons on $S^1$ ($r$ copies of the current algebra net). We expect the canonical entanglement entropy to be finite in this case too, but this remains unproven in this paper.

\section{Entropy of standard subspaces}
In this first part, we discuss basic, abstract aspects concerning entropy in first and second quantisation. We go slightly beyond what is strictly needed in the second part, with the purpose of clarifying the general picture, that gives motivation for future work.

\subsection{Trace and determinants in second quantisation}\label{Trace}
Let $\H$ be an Hilbert space and
$\Ga(\H)$ (resp. $\Lambda(\H)$), the Bose (resp. Fermi) Fock Hilbert space over $\H$
\ben\label{Fock}
\Ga(\H) = \bigoplus_{n = 0}^{\infty}\Ga^n(\H)\, ,
\qquad
\La(\H) = \bigoplus_{n = 0}^{\infty}\La^n(\H)\, .
\een
If
$A\in B(\H)$ and $||A||\leq 1$ the second quantisation of
$\Ga(A)$ (resp. $\Lambda(A)$) is the linear contraction on $\Ga(\H)$ (resp. $\Lambda(\H)$) defined by
\[ 1\oplus A\oplus
(A\otimes A)\oplus(A\otimes A\otimes A)\oplus\cdots
\]
where the $A\otimes\cdots\otimes A$ acts on the symmetric part $\Ga^k(\H)$ (resp. anti-symmetric part $\Lambda^k(\H)$) of $\H\otimes\cdots\otimes \H$.

We recall the following lemma, see \cite{KL}.
\begin{lemma}\label{9}
If $A$ is selfadjoint, $0\leq A< 1$, then
\begin{gather}\label{lT}
\Tr\big(\Ga(A)\big)=\det(1 - A)^{- 1}, \qquad \Tr\big(\Lambda(A)\big) =\det(1+ A)\, ,
\\
\log\Tr\big( \Ga(A)\big)=  -\Tr\log(1 - A)\,  , \qquad \log\Tr\big( \Lambda(A)\big)= \Tr\log(1 + A)\, .\label{lT2}
\end{gather}
\end{lemma}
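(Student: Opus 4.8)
The plan is to diagonalise $A$ and reduce each trace to an infinite product over the spectrum, the Bose and Fermi cases being distinguished only by which occupation numbers are allowed at each spectral mode. First I would dispose of the degenerate case. Since $A\ge 0$, the condition $\Tr(A)=+\infty$ is equivalent to $A$ not being trace class; in that case a one–line estimate, using $-\log(1-x)\ge x$ and $\log(1+x)\ge x/2$ on $[0,1)$ together with $\Tr\big(\Ga(A)\big)\ge\Tr(A)$ and $\Tr\big(\La(A)\big)\ge\Tr(A)$, shows that all four quantities in \eqref{lT}--\eqref{lT2} equal $+\infty$, so the identities hold trivially. Hence I may assume $A$ is trace class, in particular compact, and fix an orthonormal eigenbasis $\{e_i\}$ of $\H$ with $A e_i = a_i e_i$, where $0\le a_i <1$ and $\sum_i a_i <\infty$.

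Next I would write out the eigenbasis of the Fock spaces induced by $\{e_i\}$. For $\Ga(\H)$ the symmetric occupation–number vectors, labelled by multiplicities $(n_i)$ with $n_i\in\{0,1,2,\dots\}$ and only finitely many nonzero, form an orthonormal basis of eigenvectors of $\Ga(A)$ with eigenvalue $\prod_i a_i^{\,n_i}$; for $\La(\H)$ the antisymmetric vectors, labelled by $(n_i)$ with $n_i\in\{0,1\}$, give eigenvalue $\prod_i a_i^{\,n_i}$ subject to the Pauli restriction $n_i\le 1$. Because every eigenvalue is nonnegative, the traces may be summed term by term and factored over the modes by Tonelli:
\[
\Tr\big(\Ga(A)\big)=\sum_{(n_i)}\prod_i a_i^{\,n_i}=\prod_i\sum_{n=0}^{\infty}a_i^{\,n}=\prod_i\frac{1}{1-a_i}\,,\qquad
\Tr\big(\La(A)\big)=\prod_i(1+a_i)\,.
\]
These convergent products are exactly the Fredholm determinants $\det(1-A)^{-1}$ and $\det(1+A)$, which establishes \eqref{lT}.

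Finally, the logarithmic identities \eqref{lT2} follow by taking logarithms of the products and invoking functional calculus: $\log(1\mp A)$ has eigenvalues $\log(1\mp a_i)$, so that $\sum_i\log(1-a_i)=\Tr\log(1-A)$ and $\sum_i\log(1+a_i)=\Tr\log(1+A)$, which is the statement $\log\det=\Tr\log$. I expect no genuine obstacle in this argument; the only points that require care are the reduction to the trace–class case, where both sides are otherwise consistently $+\infty$, and the interchange of summation and product, which is automatic here since all terms are nonnegative.
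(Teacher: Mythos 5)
Your proposal is correct and follows essentially the same route as the paper: diagonalise $A$ and factor the Fock-space trace over the spectral modes (your occupation-number sum with Tonelli is just the explicit form of the paper's decomposition $\Ga(\H)=\bigotimes_n\Ga(\H_n)$ reducing to the one-dimensional case), then obtain \eqref{lT2} from $\log\det=\Tr\log$. Your treatment of the non-trace-class case is slightly more explicit than the paper's one-line dismissal, but it is the same argument in substance.
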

\begin{proof}
We may assume that $A$ has discrete spectrum, otherwise all quantities in \eqref{lT}, \eqref{lT2} are infinite.
Suppose first that $\H$ is one-dimensional, thus $A=\l$ is a
scalar $0\leq\l<1$. In the Bose case, $\Ga^n(\H)$ is then
one-dimensional for all $n$, thus we have $\Ga(A)=\bigoplus_{n=0}^{\infty}\l^n$, so
$\Tr\big( \Ga(A)\big)=\sum_{n=0}^{\infty}\l^n=(1-\l)^{-1}$.

For a general $A$,
we may decompose $\H=\bigoplus_n\H_n$ so that dim$\H_n =1$
and $A=\bigoplus_n \l_n$. Then $\Ga(\H)=\bigotimes_n^{\{\Om_n\}} \Ga(\H_n)$, where
$\Om_n$ is the vacuum vector of $\Ga(\H_n)$, and
$\Ga(A)=\bigotimes_n \Ga(A_n)$. It follows that
\[
\Tr \big(\Ga(A)\big) =\prod_n\Tr\big(\Ga(A_n)\big)=\prod_n (1-\l_n)^{-1} =\det(1-A)^{-1}.
\]
In the Fermi case, if $\H$ is one-dimensional then
$\Lambda^n(\H) =\{0\}$ if $n\geq 2$ and is one-dimensional if
$n=0,1$; if $A=\l$ we then have $\Lambda(A) = 1\oplus\l$ so  $\Tr \big(\Lambda(A)\big) = 1 +
\l$. Since, also in the Fermi case, there is a
canonical equivalence between $\Lambda\big(A\oplus B\big)$ and
$\Lambda(A)\otimes\Lambda(B)$, we have
\[
\Tr\big(\Lambda(A)\big)=\prod_n \Tr\big(\Lambda((\l_n)\big)=\prod_n(1+\l_n) =\det(1+A)\ ,
\]
where $A=\bigoplus_n\l_n$.

Concerning formulas \eqref{lT2}, notice that
\[
\det A=e^{\Tr\log A},
\]
hence by \eqref{lT2} we have
\[
\log\Tr\!\big(\Ga(A)\big) = -\log\det(1 - A)=- \Tr\!\big(\!\log(1- A)\big)
\]
and
\[
\log\Tr\!\big(\Lambda(A)\big) =\log\det(1 + A)= \Tr\!\big(\!\log(1 + A)\big).
\]
\end{proof}
As a consequence
\ben\label{Sf}
\Tr(A) < \infty \Leftrightarrow \Tr\big(\Ga(A)\big) < \infty \Leftrightarrow \Tr\big(\Lambda(A)\big)\ .
\een

\subsection{Angle between subspaces and type I property}\label{Angle}

Let $\H$ be a complex Hilbert space and $T$ a (complex) linear operator on $\H$. We shall say that $T\in \L^p$, $p\in (0,\infty)$, if there exist orthonormal sequences of vectors $\{e_k\}, \{f_k\}$ of $\H$ and a sequence of complex numbers $\l_k$ with $\sum_k | \l_k |^p <\infty$, i.e. $\{\l_k\}\in\ell^p$, such that
\begin{equation}\label{lp}
T\xi = \sum_k \l_k (e_k,\xi)f_k\ , \quad \xi\in\H\ ,
\end{equation}
thus $T\in \L^p$ iff $\Tr(|T|^p)<\infty$ and we have
$||T - T_n||\to 0$ with $T_n \equiv \sum_{k=1}^n (e_k,\,\cdot)f_k$.
We set as usual
$||T||_p\equiv \Tr(|T|^p)^\frac1p = (\sum_{k=0}^\infty |\l_k|^p)^\frac1p$. $\L^1$-operators are also called trace class operators and $\L^2$-operators Hilbert-Schmidt operators.

One can define $\L^p$ real linear operators between real Hilbert spaces analogously as in \eqref{lp}.

We may view the complex Hilbert space $\H$ as a real Hilbert with
scalar product $(\xi,\eta)_{\mathbb R} \equiv \Re(\xi,\eta)$, and
denote it by $\H_\Reali$. A complex linear operator $T:\H\to\H$ is
also a real linear operator $T:\H_\Reali \to \H_\Reali$, that we may
also denote by $T_\Reali$ if we want to emphasise that $T$ is
regarded as a real linear operator.
\begin{lemma}
$T:\H\to\H$ is $\L^p$ as a complex linear operator iff $T_\Reali:\H_\Reali\to\H_\Reali$ is $\L^p$ as a real linear operator.
\end{lemma}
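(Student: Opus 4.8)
The plan is to show that the singular values of $T_\Reali$ are exactly those of $T$, each occurring with doubled multiplicity; the equivalence of the two $\L^p$ conditions, together with the relation $\|T_\Reali\|_p = 2^{1/p}\|T\|_p$, then follows at once. Rather than exhibiting singular vectors directly, I would phrase everything through traces of positive operators.

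First I would identify the real adjoint of $T_\Reali$. For $\xi,\eta\in\H$ one has $(T_\Reali\xi,\eta)_\Reali=\Re(T\xi,\eta)=\Re(\xi,T^*\eta)=(\xi,(T^*)_\Reali\eta)_\Reali$, so that $(T_\Reali)^*=(T^*)_\Reali$. Since $T\mapsto T_\Reali$ is plainly an injective, unital, product-preserving and isometric (for the operator norm) map of complex operators into real operators, this computation shows that it is moreover $*$-preserving, hence a real $*$-algebra homomorphism that respects the continuous functional calculus of self-adjoint elements. Therefore $(T_\Reali)^*T_\Reali=(T^*T)_\Reali=(|T|^2)_\Reali$, and taking positive square roots gives $|T_\Reali|=|T|_\Reali$, so that $|T_\Reali|^p=(|T|^p)_\Reali$ for all $p>0$.

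Next I would compare the real and complex traces of a positive operator. If $\{e_k\}$ is a complex orthonormal basis of $\H$, then $\{e_k\}\cup\{ie_k\}$ is a real orthonormal basis of $\H_\Reali$; for positive complex $A$ the identities $A(ie_k)=iAe_k$ and $(iAe_k,ie_k)=(Ae_k,e_k)$ show that $e_k$ and its partner $ie_k$ each contribute $(Ae_k,e_k)\ge 0$ to the real trace, whence $\Tr_\Reali(A_\Reali)=2\,\Tr_\CC(A)$. Applying this with $A=|T|^p$ yields $\Tr_\Reali(|T_\Reali|^p)=2\,\Tr_\CC(|T|^p)$. Since $T\in\L^p\Leftrightarrow\Tr_\CC(|T|^p)<\infty$ and $T_\Reali\in\L^p\Leftrightarrow\Tr_\Reali(|T_\Reali|^p)<\infty$, the two finiteness conditions are equivalent, which is the assertion. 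The only point I would be careful about is the functional-calculus step $|T_\Reali|=|T|_\Reali$: this rests on realification commuting with the positive square root, which is legitimate precisely because realification is an isometric, $*$-preserving $\Reali$-algebra homomorphism and hence intertwines the spectral calculus of the self-adjoint operator $|T|^2$.
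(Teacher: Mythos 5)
Your proposal is correct, and it proves the lemma by a genuinely different route than the paper. The paper works directly with the kernel representation \eqref{lp}: in the forward direction it splits the complex pairing, $(e_k,\xi)=\Re(e_k,\xi)+i\,\Im(e_k,\xi)$ with $\Im(e_k,\xi)=\Re(ie_k,\xi)$, so that each complex singular triple $(\l_k,e_k,f_k)$ doubles into the two real triples $(\l_k,e_k,f_k)$ and $(\l_k,ie_k,if_k)$, giving $\|T_\Reali\|_p^p=2\|T\|_p^p$; in the converse it uses complex linearity, $T(i\xi)=iT\xi$, to reassemble a complex kernel representation from the real one, obtaining only the one-sided estimate $\|T\|_p\leq\sqrt2\,\|T_\Reali\|_p$. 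You instead exploit the structural fact that realification $T\mapsto T_\Reali$ is a unital, isometric, $*$-preserving real algebra homomorphism, so $(T_\Reali)^*T_\Reali=(T^*T)_\Reali$ and, by uniqueness of positive square roots (or continuous functional calculus applied to $t\mapsto t^{p/2}$), $|T_\Reali|^p=(|T|^p)_\Reali$; combined with the trace-doubling identity $\Tr_\Reali(A_\Reali)=2\Tr(A)$ for positive $A$ (which is sound, since for $A\geq 0$ the trace is basis-independent even when infinite, and $\{e_k\}\cup\{ie_k\}$ is a real orthonormal basis), this yields the exact two-sided relation $\Tr_\Reali(|T_\Reali|^p)=2\Tr(|T|^p)$. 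Your argument thus handles both directions symmetrically at once and sharpens the paper's converse constant from $\sqrt 2$ to the precise $\|T\|_p=2^{-1/p}\|T_\Reali\|_p$, while also identifying the singular values of $T_\Reali$ as those of $T$ with doubled multiplicity; the price is an appeal to functional-calculus machinery where the paper's computation stays entirely at the elementary level of the definition. Both arguments tacitly use the characterization, stated in the paper after \eqref{lp}, that membership in $\L^p$ is equivalent to $\Tr(|T|^p)<\infty$, together with its real-linear analogue, so no gap arises on that point.
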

\begin{proof}
Let $T:\H\to\H$ be $\L^p$ as a complex linear operator. For some orthonormal families $\{e_k\}$ and $\{f_k\}$ in $\H$ and a $\ell^p$-sequence $\{\l_k\}$ (we may assume that $\l_k\geq 0$) we have
\begin{multline*}
T\xi = \sum_k \l_k (e_k,\xi)f_k = \sum_k\l_k \Re(e_k,\xi)f_k + i\sum_k\l_k \Im(e_k,\xi)f_k \\
= \sum_k \l_k\Re(e_k,\xi)f_k + \sum_k \l_k\Re(ie_h,\xi)if_k \ ,
\end{multline*}
thus $||T_\Reali||_p^p = 2 ||T||^p_p$.

Conversely, assume that $T_\Reali\in \L^p$. For some orthogonal families $\{e_k\}$ and $\{f_k\}$ in $\H_\Reali$ we have
\[
T\xi = \sum_k \l_k\Re(e_k,\xi)f_k \ ,\quad Ti\xi = \sum_k \l_k\Re(e_k,i\xi)f_k = -\sum_k \l_k \Im(e_k,\xi)f_k \ ,
\]
thus
\[
T\xi = (1-i)^{-1}(T\xi - iT\xi) = \frac{1+i}{2}\sum_k\l_k(e_k,\xi)f_k
\]
and we have $||T||_p \leq \sqrt2\, ||T_\Reali||_p$.
\end{proof}
Let $\H$ be a complex Hilbert space and $H\subset\H$ a standard
subspace, i.e. $H$ is a closed, real linear subspace of $\H$ with
$H\cap iH =\{0\}$ and $H + iH$ dense in $\H$. $S_H, \Delta_H, J_H$
denote the usual operators associated with $H$ by the modular
theory, see \cite{L08}. We denote by $E_H(Z)$ the spectral projection of $\Delta_F$ associated with the Borel subset $Z\subset \mathbb R$, and by $\H_F(Z)$ the corresponding spectral subspace.

We shall say that $H$ is factorial if $H\cap H' = \{0\}$,
with $H'$ the real orthogonal of $iH$.
Note that $H$ is factorial iff $1$ is not an eigenvalue of
$\Delta_H$.
This is equivalent to the
Bose (resp. Fermi) second quantisation von Neumann algebra $\R(H)$  (resp. $\M(H)$)  to be a factor.

 We shall say that a factorial standard subspace $H\subset\H$ is of type if $\R(H)$ is a type $I$
factor; as we shall see in Corollary \ref{tI}, $H$ is of type $I$ iff $\M(H)$ is a type $I$ factor.

The following lemma is proved in \cite{FG} in the Bose case.
\begin{proposition}\label{PropI}
$H$ is of type $I$ iff $\Delta_H E_H(0,1)$ is $\L^1$, with $E_H(0,1)$ the spectral projection of $\Delta_H$  associated with the interval $(0,1)$.
\end{proposition}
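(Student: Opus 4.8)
The plan is to reduce the type $I$ property of $\R(H)$ to a trace class condition on the one particle modular operator, with the second quantisation trace formulas of Lemma \ref{9} as the engine. The starting point is that the modular operator of $\R(H)$ on $\Ga(\H)$ is the second quantisation $\Ga(\Delta_H)$ of the one particle modular operator (in the sense of the associated one parameter groups), and that $J_H\Delta_H J_H=\Delta_H^{-1}$, so the spectrum of $\Delta_H$ is symmetric under $\l\mapsto\l^{-1}$ and $J_H$ maps $E_H(0,1)$ onto $E_H(1,\infty)$. Hence the part of $\Delta_H$ below $1$ already determines the part above $1$; moreover, $H$ being factorial, $1$ is not an eigenvalue of $\Delta_H$, so there is no contribution from $\{\Delta_H=1\}$. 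This is why only $\Delta_H E_H(0,1)$ should enter the criterion.

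Next I would use the vacuum density matrix. If $\R(H)$ is type $I$, the tensor factorisation $\Ga(\H)\cong\K_1\otimes\K_2$, $\R(H)=B(\K_1)\otimes 1$, comes from a complex orthogonal splitting of the one particle space $\H\cong\H_1\oplus\H_2$ (using $\Ga(\H_1\oplus\H_2)=\Ga(\H_1)\otimes\Ga(\H_2)$), on which $\Delta_H$ acts as $R\oplus R^{-1}$ with $0\le R<1$ unitarily equivalent to $\Delta_H E_H(0,1)$. The vacuum density matrix is then $\r_\F=\det(1-R)\,\Ga(R)$, which by \eqref{Sf} is a genuine (trace one) density matrix precisely when $\Tr\big(\Ga(R)\big)=\det(1-R)^{-1}<\infty$, i.e. when $R\in\L^1$, that is $\Delta_H E_H(0,1)\in\L^1$. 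This yields the forward implication.

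For the converse I would construct the splitting from the hypothesis. Assuming $\Delta_H E_H(0,1)\in\L^1$, one produces a polarisation of $\H$, equivalently a Bogoliubov transformation implementing a quasi-equivalence between the vacuum and a Fock state adapted to a complex structure differing from the given one only on the region where $\Delta_H\ne 1$; the trace class hypothesis guarantees convergence of the implementer and of the product $\det(1-R)$, exhibiting $\R(H)=B(\K_1)\otimes 1$. This is the Bose statement established in \cite{FG}. The Fermi algebra $\M(H)$ is treated by the identical one particle analysis with the Fermi half of Lemma \ref{9} ($\det(1-R)^{-1}$ replaced by $\det(1+R)$, finite under the same condition $R\in\L^1$), and the equality of the Bose and Fermi type $I$ properties is recorded in Corollary \ref{tI}.

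The crux is this converse construction: realising the abstract type $I$ factorisation at the one particle level and verifying that the correct threshold is trace class, not merely Hilbert--Schmidt. A related point to keep in view is that type $I$ is strictly weaker than finiteness of the vacuum entropy, so the argument must characterise only the existence of $\r_\F$, governed by $\Delta_H E_H(0,1)\in\L^1$, and not the convergence of $-\Tr(\r_\F\log\r_\F)$, which is the stronger condition analysed later for the free Fermi net.
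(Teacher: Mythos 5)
Your target identification---that when $\R(H)$ is of type $I$ the vacuum density matrix is, up to normalisation, $\Ga(R)$ with $R\cong\Delta_H E_H(0,1)$, after which Lemma \ref{9} gives $\Tr\big(\Ga(R)\big)=\det(1-R)^{-1}<\infty\Leftrightarrow R\in\L^1$---is exactly the paper's route (its proof reads ``Immediate by Corollary \ref{RF}''). But the mechanism you offer for that identification fails. No complex orthogonal splitting $\H=\H_1\oplus\H_2$ can induce the type $I$ factorisation of $\R(H)$: under the canonical identification $\Ga(\H_1\oplus\H_2)=\Ga(\H_1)\otimes\Ga(\H_2)$ one has $B\big(\Ga(\H_1)\big)\otimes 1=\R(\H_1)$, and $\R(\H_1)=\R(H)$ would force $H=\H_1$, which is impossible since a complex subspace satisfies $\H_1\cap i\H_1=\H_1\neq\{0\}$ and so is never standard. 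Equivalently: in such a factorisation the vacuum is the product vector $\Om_1\otimes\Om_2$, so the restricted vacuum state would be pure, whereas $\Om$ is separating for $\R(H)$ ($H$ being standard), so the density matrix must be faithful. The correct identification is genuinely real-linear and non-canonical: the paper decomposes $H=\bigoplus_k F_{\l_k}$ over the eigenvalue pairs $\{\l_k,\l_k^{-1}\}$ of $\Delta_H$ into two-dimensional standard subspaces, and in Lemma \ref{2dim} uses a real-linear symplectic map $T:\H_1\to F_\l$ together with uniqueness of the CCR to conclude that $\R(F_\l)\cong B(\K)$ is a type $I_\infty$ factor with density matrix unitarily equivalent to normalised $\Ga(\l)$---the type $I$ structure sits inside $B\big(\Ga(\H_\l)\big)$ with $\H_\l$ two-dimensional, not over a splitting of the one-particle space. (You also use tacitly that the spectrum of $\Delta_H$ on $(0,1)$ is discrete so that the eigenvalue decomposition is available; the paper assumes this at the corresponding point, and either hypothesis justifies it, but it should be stated.)

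For the converse you simply invoke \cite{FG}. That is defensible---the paper itself attributes the Bose statement to \cite{FG}---but it leaves the crux of the ``if'' direction outside your proof, whereas the paper has a self-contained argument you are missing: $\hat\r_F=\bigotimes_k\hat\r_{F_{\l_k}}$, with the incomplete infinite tensor product (w.r.t.\ the vacuum vectors) convergent iff $\R(H)$ is a type $I$ factor, and by \eqref{r1} this happens iff $\prod_k(1-\l_k)^{-1}<\infty$, i.e.\ iff $\sum_k\l_k=\Tr\big(\Delta_H E_H(0,1)\big)<\infty$; both implications thus come out of one computation (Corollary \ref{tI}). Your closing observations are sound: the Fermi side with $\det(1+R)$ and the equality of the Bose and Fermi type $I$ properties match Corollary \ref{tI}, and the distinction between type $I$ (only $R\in\L^1$) and finiteness of the vacuum entropy (the strictly stronger condition, cf.\ Corollary \ref{S(a1)}) is correctly drawn. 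To repair the proposal: derive the forward identification from the two-dimensional decomposition (or cite Corollary \ref{RF} directly), and either reproduce the tensor-product argument for the converse or cite it honestly.
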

\begin{proof}
Immediate by Corollary \ref{RF}.
\end{proof}
We denote by $P_H$ the real linear orthogonal projection from $\H_\Reali$ onto $H$. Note that
\begin{equation}
iP_{iH} = P_H i \ , \quad P_{H'} = 1 - P_{iH} \ .
\end{equation}
The following relation holds \cite{FG}:
\begin{equation}\label{PH}
P_H = \frac{1}{\Delta_H + 1} + J_H\frac{\Delta_H^{1/2}}{\Delta_H + 1}\ .
\end{equation}
To get the above formula, one easily checks that $P_H \xi = \xi$ if $S_H\xi = \xi$ and that $S_H P_H \xi = P_H \xi$.

Now, taking into account that $\Delta_{H'} = \Delta_H^{-1} = J_H \Delta_H J_H$, we have
\begin{equation}\label{PH'}
P_{H'} = \frac{\Delta_H}{\Delta_H + 1} + J_H\frac{\Delta_H^{1/2}}{\Delta_H + 1}\ .
\end{equation}
Thus the angle operator between $H$ and $H'$ is given by
\begin{align}
P_H P_{H'} &= 2\frac{\Delta_H}{(\Delta_H + 1)^2}
+ 2\frac{\Delta_H^{1/2}}{(\Delta_H + 1)^2}J_H\\
&=  2\frac{\Delta_H}{(\Delta_H + 1)^2}
+ 2\frac{\Delta_H}{(\Delta_H + 1)^2}S_H \ .
\label{AH}
\end{align}
\begin{lemma}\label{FG}\cite{FG}
Set $A_H\equiv 4 \frac{\Delta_H}{(\Delta_H + 1)^2}$.
We have
\ben\label{PPH}
P_H P_{H'}\big|_H = A_H \big|_H \ .
\een
As a consequence, $P_H P_{H'}|_H$ is $\L^p$ iff $A_H$ is $\L^p$.
\end{lemma}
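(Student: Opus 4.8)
The plan is to read off the operator identity \eqref{PPH} directly from \eqref{AH}, and then to reduce the Schatten statement to a spectral bookkeeping for the bounded positive operator $A_H = 4\frac{\Delta_H}{(\Delta_H+1)^2}$. For the identity, recall that the standard subspace $H$ is precisely the fixed-point set of $S_H$, so $S_H\xi = \xi$ for every $\xi\in H$. Restricting the second form of \eqref{AH} to $H$ then gives, for $\xi\in H$,
\[
P_H P_{H'}\xi = 2\frac{\Delta_H}{(\Delta_H+1)^2}\xi + 2\frac{\Delta_H}{(\Delta_H+1)^2}S_H\xi = 4\frac{\Delta_H}{(\Delta_H+1)^2}\xi = A_H\xi .
\]
I would point out that no domain problem arises, even though $S_H$ is unbounded: the operator $\frac{\Delta_H}{(\Delta_H+1)^2}S_H = J_H\frac{\Delta_H^{3/2}}{(\Delta_H+1)^2}$ is bounded, since $t^{3/2}/(t+1)^2$ is bounded on $(0,\infty)$, in agreement with the boundedness of the angle operator $P_HP_{H'}$. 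This establishes \eqref{PPH}, and the asserted equivalence then only requires showing that $A_H|_H\in\L^p(H)$ iff $A_H\in\L^p(\H)$, the restriction $P_HP_{H'}|_H=A_H|_H$ being literally the same operator.

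Writing $A_H = g(\Delta_H)$ with $g(t) = 4t/(t+1)^2$, I would first record the structural facts driving the argument. Since $g(t) = g(1/t)$ and $J_H\Delta_H J_H = \Delta_H^{-1}$, we have $J_H A_H = A_H J_H$; and as $A_H$ commutes with $\Delta_H^{1/2}$ one gets $S_H A_H = A_H S_H$ on the domain of $S_H$, so $A_H$ preserves $H$ (and, being complex linear, also $iH$). Note moreover that $g$ is increasing on $(0,1)$ with $g(1)=1$ as its maximum, so each eigenvalue $\mu\in(0,1)$ of $A_H$ comes from exactly two eigenvalues $t_0<1<1/t_0$ of $\Delta_H$ with $g(t_0)=g(1/t_0)=\mu$.

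The core step is to match multiplicities. Either hypothesis ($A_H\in\L^p$ or $A_H|_H\in\L^p$) forces compactness, which makes the spectral projection $E_H([a,b])$ of $\Delta_H$ finite-rank for every $[a,b]\subset(0,\infty)$, so I may assume $\Delta_H$ has pure point spectrum with finite multiplicities off $0,\infty$. Fix $\mu = g(t_0)\in(0,1)$; the complex $\mu$-eigenspace of $A_H$ is $E_H(\{t_0\})\H \oplus E_H(\{1/t_0\})\H$, and since $J_H$ carries $E_H(\{t_0\})\H$ antiunitarily onto $E_H(\{1/t_0\})\H$ these summands have equal complex dimension $d_\mu$, giving complex dimension $2d_\mu$. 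On the other hand, solving $S_H v = v$ on this block, writing $v = a\oplus b$, yields $b = t_0^{1/2}J_H a$, so the real $\mu$-eigenspace of $A_H|_H$ is the graph $\{\,a\oplus t_0^{1/2}J_H a : a\in E_H(\{t_0\})\H\,\}$, a real-linear image of $E_H(\{t_0\})\H$ and hence of real dimension $2d_\mu$. Summing over $\mu$ (and treating the $\mu=1$ block, present only in the non-factorial case, separately, where both dimensions equal the real dimension of $\{v\in E_H(\{1\})\H : J_Hv = v\}$) gives
\[
\Tr\big((A_H|_H)^p\big) = \sum_\mu 2 d_\mu\,\mu^p = \Tr\big(A_H^p\big),
\]
so the two sides are finite together; in fact the Schatten norms coincide.

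The main obstacle I anticipate is exactly this last bookkeeping: matching the \emph{real} multiplicities of $A_H|_H$ on $H$ with the \emph{complex} multiplicities of $A_H$ on $\H$, despite $H$ and $iH$ failing to be orthogonal in $\H_\Reali$, so that one cannot simply split $\H_\Reali = H \oplus iH$. The graph computation solving $S_Hv=v$ on each $\{t_0,1/t_0\}$-block is what resolves this, simultaneously transferring compactness (equivalently, the finite-rank property of $E_H([a,b])$) between the two operators and yielding the trace identity; the continuous-spectrum case needs no separate treatment since there neither operator is compact and both fail to be $\L^p$.
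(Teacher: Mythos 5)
Your derivation of the identity \eqref{PPH} is the paper's own: restrict the second expression in \eqref{AH} to $H$ and use $S_H\xi=\xi$ there; your observation that $\frac{\Delta_H}{(\Delta_H+1)^2}S_H=J_H\frac{\Delta_H^{3/2}}{(\Delta_H+1)^2}$ is bounded is a correct domain check the paper leaves implicit. For the $\L^p$ equivalence, though, you take a genuinely different route. The paper argues softly: complex $\L^p$ implies real $\L^p$ (its earlier lemma, with $\|A_{H}\|_{p,\Reali}^p=2\|A_H\|_p^p$), hence the restriction to the invariant subspace $H$ is $\L^p$; conversely it takes a rank-one expansion $A_H\xi=\sum_k\l_k\Re(e_k,\xi)f_k$ on $H$ and invokes complex linearity of $A_H$ plus density of $H+iH$ — no compactness reduction, no spectral analysis. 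You instead match multiplicities: pairing $t_0,1/t_0$ via $g(t)=4t/(t+1)^2$ and solving $S_Hv=v$ blockwise to exhibit the real $\mu$-eigenspace of $A_H|_H$ as the graph $\{a+t_0^{1/2}J_Ha\}$ (your condition $b=t_0^{1/2}J_Ha$ and the dimension count check out). This buys strictly more than the paper states — the exact equality $\Tr\big((A_H|_H)^p\big)=\Tr\big(A_H^p\big)$ and an explicit non-factorial ($\mu=1$) treatment — and, incidentally, a rigorous backbone for the paper's rather terse density step. Two asserted points should be written out: in the direction where only $A_H|_H$ is assumed compact, finite rank of $E_H([a,b])$ follows by running your graph computation on symmetric spectral blocks, since for compact $\La\subset(0,1)$ the set $H\cap\big(E_H(\La)\H\oplus E_H(\La^{-1})\H\big)=\{a+J_H\Delta_H^{1/2}a:\ a\in E_H(\La)\H\}$ is $A_H$-invariant with $A_H\geq\inf_{\La\cup\La^{-1}}g>0$ there, so compactness forces it finite-dimensional; and the eigenvectors of $A_H|_H$ are total in $H$ (real spectral theorem for the compact self-adjoint $A_H|_H$, or decompose $S_Hv=v$ along the spectral blocks), so the trace really is the eigenvalue sum. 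With those two lines added, your argument is complete and correct.
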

\begin{proof}
Formula \eqref{PPH} follows by \eqref{AH} and the above discussion.

If $A_H$ is $\L^p$ as a complex linear operator, then $A_H$ is $\L^p$ as a real linear operator, hence
$P_H P_{H'}|_H = A_H |_H$ is $\L^p$.

Conversely, if $P_H P_{H}|_H$ is $\L^p$, then
$A_H |_H: H\to H$ is $\L^p$ as real linear operator; choose an orthonormal basis
$\{e_k\}$ and $\{f_k\}$ for $H$ w.r.t. the real part of the scalar product and $\{\l_k\}\in \ell^p$ such that $A_H\xi = \sum_k \l_k\Re(e_k,\xi)f_k$, $\xi \in H$. Since $A_H$ is complex linear
\[
A_H(\xi + i\eta) = A_H\xi + i A_H \eta
= \sum_k \l_k\Re(e_k,\xi)f_k + i\sum_k \l_k\Re(e_k,\eta)f_k \ ;
\]
as $H + iH$ is dense in the Hilbert space,  $A_H$ is thus $\L^p$ as real linear operator, hence as complex linear operator.
\end{proof}
In the following, $H$ is a factorial standard subspace of $\H$.
\begin{proposition}\cite{A}
$H$ is of type I iff $A_H$ is trace class.
\end{proposition}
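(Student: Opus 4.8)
The plan is to reduce the statement to Proposition~\ref{PropI} by a direct comparison in the functional calculus of $\Delta_H$. By Proposition~\ref{PropI}, $H$ is of type $I$ exactly when $\Delta_H E_H(0,1)$ is trace class, so it suffices to prove that $A_H$ is trace class if and only if $\Delta_H E_H(0,1)$ is. The starting point is that, as the cancellation of the $J_H$-terms in \eqref{AH} on $H$ already shows, $A_H = g(\Delta_H)$ is a bounded positive function of the modular operator alone, with $g(x) = 4x/(x+1)^2$; thus $A_H$ commutes with every spectral projection $E_H(Z)$. I would single out two elementary features of $g$: the symmetry $g(x) = g(1/x)$, which mirrors $\Delta_{H'} = \Delta_H^{-1}$, and the two-sided estimate $x \le g(x) \le 4x$ valid for $x \in (0,1)$.

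From the estimate on $(0,1)$ one gets the operator inequalities $\Delta_H E_H(0,1) \le g(\Delta_H) E_H(0,1) \le 4\,\Delta_H E_H(0,1)$, so monotonicity of the trace on positive operators yields that $\Tr\big(g(\Delta_H)E_H(0,1)\big)$ and $\Tr\big(\Delta_H E_H(0,1)\big)$ are finite together. It then remains to handle the contribution of $E_H(1,\infty)$, and this is where the modular conjugation enters. Since $J_H \Delta_H J_H = \Delta_H^{-1}$ and $g(1/x)=g(x)$, the functional calculus gives $J_H g(\Delta_H) J_H = g(\Delta_H^{-1}) = g(\Delta_H)$, while $J_H E_H(1,\infty) J_H = E_H(0,1)$; as $J_H$ is an antiunitary involution, conjugation by $J_H$ leaves the trace of a positive operator unchanged, and hence $\Tr\big(g(\Delta_H)E_H(1,\infty)\big) = \Tr\big(g(\Delta_H)E_H(0,1)\big)$. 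Factoriality of $H$ means that $1$ is not an eigenvalue of $\Delta_H$, so $E_H(\{1\})=0$ and $\Tr(A_H) = 2\,\Tr\big(g(\Delta_H)E_H(0,1)\big)$.

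Combining the two steps, $\Tr(A_H) < \infty$ if and only if $\Tr\big(\Delta_H E_H(0,1)\big) < \infty$, which by Proposition~\ref{PropI} is precisely the condition that $H$ be of type $I$; since both operators are positive, trace class here means finite trace, so this completes the argument. The step I expect to demand the most care is the trace identity under the $J_H$-conjugation: one has to verify that antiunitarity together with $J_H^2 = 1$ gives $\Tr(J_H B J_H) = \Tr(B)$ for positive $B$ (by checking that $\{J_H e_n\}$ is again an orthonormal basis), and that the functional-calculus identities for $g(\Delta_H)$ and for the spectral projections hold exactly, with the factorial hypothesis guaranteeing that no spectral mass sits at the fixed point $x=1$ of the symmetry $x\mapsto 1/x$. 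The remaining ingredients are just the scalar inequalities for $g$.
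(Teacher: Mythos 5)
Your proof is correct and takes essentially the same route as the paper: both reduce the statement to Proposition~\ref{PropI} via the modular symmetry $J_H\Delta_H J_H=\Delta_H^{-1}$ (equivalently, $g(x)=g(1/x)$ for $g(x)=4x/(1+x)^2$) together with the comparison of $g(x)$ with $x$ on $(0,1)$, the paper merely compressing your two-sided estimate and trace-preservation steps into the single line that $\Delta_H E_H(0,1)\in\L^1$ iff $\Delta_H^{-1}E_H(1,\infty)\in\L^1$ iff $\Delta_H/(\Delta_H+1)^2\in\L^1$. Your explicit verification that antiunitary conjugation preserves the trace and that factoriality gives $E_H(\{1\})=0$ just makes precise details the paper leaves implicit.
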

\begin{proof}
Since $J_H \Delta_H J_H = \Delta_H^{-1}$, we have that $\Delta_H E_H(0,1)$ is $\L^1$ iff
$\Delta^{-1}_H E_H(1,\infty)$ is $\L^1$, thus  iff $\frac{\Delta_H}{(\Delta_H + 1)^2}$ is $\L^1$.
The proposition then follows by Prop. \ref{PropI}.
\end{proof}
\begin{corollary}\label{pp'} The following are equivalent:
\begin{itemize}
\item $H$ is of type I,
\item $P_HP_{H'}$ is $\L^2$,
\item $P_H P_{H'}|_H$ is $\L^1$,
\item $P_H P_{H'}P_H$ is $\L^1$,
\item $[P_H , i]$ is $\L^2$, where $[P_H , i] \equiv P_H i - iP_H$ ,
\end{itemize}
The eigenvalues of $P_HP_{H'}P_H$ and $\Delta_H |_{\H_H(0,1)}$ have equivalent asymptotic, in particular $\Delta_H |_{\H_H(0,1)}$ is $\L^p$ iff $P_HP_{H'}$ is $\L^{2p}$, $p>0$.
\end{corollary}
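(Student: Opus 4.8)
The plan is to reduce every one of the five conditions to the single statement that the positive operator $A_H = 4\Delta_H(\Delta_H+1)^{-2}$ is trace class, which by the preceding proposition is exactly the assertion that $H$ is of type $I$. Most of the equivalences are then formal consequences of Lemma \ref{FG} and elementary singular-value identities, so the only genuinely new inputs are the commutator condition and the eigenvalue asymptotics.

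First I would dispose of the three ``angle'' conditions. By Lemma \ref{FG} we have $P_HP_{H'}|_H = A_H|_H$, and $P_HP_{H'}|_H$ is $\L^1$ iff $A_H$ is $\L^1$; this matches the first and third bullets with the preceding proposition. The fourth bullet is immediate from the third: since $P_H\xi=\xi$ on $H$ and $P_H\xi=0$ on the real orthogonal complement of $H$, the operator $P_HP_{H'}P_H$ is nothing but $A_H|_H$ extended by zero, so it has the same nonzero eigenvalues and the same $\L^1$ norm. For the second bullet I would use $(P_HP_{H'})(P_HP_{H'})^* = P_HP_{H'}P_{H'}P_H = P_HP_{H'}P_H$, so that the singular values of $P_HP_{H'}$ are the square roots of the eigenvalues of $P_HP_{H'}P_H$; hence $P_HP_{H'}\in\L^2$ iff $P_HP_{H'}P_H\in\L^1$, and more generally $P_HP_{H'}\in\L^{2p}$ iff $P_HP_{H'}P_H\in\L^p$.

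For the commutator I would first record the identity $[P_H,i] = -i\,(P_H+P_{H'}-1)$, which follows from $iP_{iH}=P_Hi$ together with $P_{iH}=1-P_{H'}$: indeed $i[P_H,i]=iP_Hi+P_H=-P_{iH}+P_H=P_H+P_{H'}-1$. Summing \eqref{PH} and \eqref{PH'} gives $P_H+P_{H'}=1+2J_H\Delta_H^{1/2}(\Delta_H+1)^{-1}$, hence $P_H+P_{H'}-1=2J_H\Delta_H^{1/2}(\Delta_H+1)^{-1}$. Since multiplication by $-i$ and the antiunitary $J_H$ are real-orthogonal, the Hilbert--Schmidt norm satisfies $\|[P_H,i]\|_2^2=4\,\|\Delta_H^{1/2}(\Delta_H+1)^{-1}\|_2^2$, which is a finite multiple of $\Tr\big(\Delta_H(\Delta_H+1)^{-2}\big)=\tfrac14\Tr(A_H)$. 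Thus $[P_H,i]\in\L^2$ iff $A_H$ is trace class, establishing the fifth bullet.

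Finally, for the asymptotics I would use that $A_H=f(\Delta_H)$ with $f(x)=4x(1+x)^{-2}$ satisfying $f(x)=f(1/x)$ and $f(x)\sim 4x$ as $x\to0$. Because $J_H\Delta_HJ_H=\Delta_H^{-1}$, the spectrum of $\Delta_H$ is symmetric under $\mu\mapsto1/\mu$, and an eigenvalue $\mu\in(0,1)$ of $\Delta_H$ produces the eigenvalue $f(\mu)$ of $A_H$; the eigenvalues of $P_HP_{H'}P_H=A_H|_H$ are therefore the numbers $f(\mu)$, $\mu\in\mathrm{spec}(\Delta_H)\cap(0,1)$, with multiplicities doubled by the $J_H$-pairing (which is irrelevant for the $\L^p$ question). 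Since $1\le f(\mu)/\mu\le4$ on $(0,1)$, the sequences $\{f(\mu)\}$ and $\{\mu\}$ have equivalent asymptotics, so $P_HP_{H'}P_H\in\L^p$ iff $\Delta_H|_{\H_H(0,1)}\in\L^p$, and combined with the singular-value identity this gives $P_HP_{H'}\in\L^{2p}$ iff $\Delta_H|_{\H_H(0,1)}\in\L^p$. The main obstacle is exactly this last step: one must control how the real subspace $H$ meets the $J_H$-conjugate spectral subspaces $\H_H(\{\mu\})\oplus\H_H(\{1/\mu\})$ of $\Delta_H$ in order to identify the eigenvalues and their multiplicities for the real operator $A_H|_H$, whereas the remaining equivalences are bookkeeping on top of Lemma \ref{FG} and the preceding proposition.
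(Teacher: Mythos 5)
Your proof is correct and follows essentially the same route as the paper: every bullet is reduced via Lemma \ref{FG} and the factorization $P_HP_{H'}P_H=(P_{H'}P_H)^*P_{H'}P_H$ to the trace-class condition on $A_H$ from the preceding proposition, and your commutator identity $[P_H,i]=-i(P_H+P_{H'}-1)=-iJ_HA_H^{1/2}$ is the paper's formula $[P_H,i]=2J_H\Delta_H^{1/2}(\Delta_H+1)^{-1}=J_HA_H^{1/2}$ up to a harmless phase, derived from \eqref{PH} (together with \eqref{PH'}) by marginally different bookkeeping. Your closing analysis of the asymptotics via $f(x)=4x(1+x)^{-2}$, the bound $1\le f(\mu)/\mu\le 4$ on $(0,1)$, and the $J_H$-pairing of spectral multiplicities simply spells out in detail what the paper compresses into ``the last assertion follows by Lemma \ref{FG}.''
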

\begin{proof}
$H$ is of type $I$ iff $A_H$ is $\L^1$ iff $P_H P_{H'}|_H$ is $\L^1$ by Lemma \ref{FG}. On the other hand, $P_H P_{H'}|_H$ is $\L^1$ iff $P_H P_{H'}P_H$ is $\L^1$, thus iff $P_H P_{H'}$ is $\L^2$ as
$P_H P_{H'}P_H = ( P_{H'}P_H)^* P_{H'}P_H$. So the equivalence among the first four properties follows.

Concerning the last property, it suffices to note that, by \eqref{PH}, we have
\[
[P_H , i] = 2J_H\frac{\Delta_H^{1/2}}{\Delta_H + 1} = J_H A_H^{1/2}\ .
\]
The last assertion follows by Lemma \ref{FG}.
\end{proof}
\subsection{Type I density matrix}

Let $\H$ be a complex Hilbert space, $\Ga(\H)$ the Bose Fock space
over $\H$ and $F\subset \H$ be a standard subspace. If the von Neumann algebra $\R(F)$ on
$\Ga(\H)$ associated with $F$ (i.e. generated by the Weyl unitaries $W(h)$ with $h\in F$, see \cite{LRT}) is of type $I$, the
restriction of the vacuum state $\om$ of $B\big(\Ga(\H)\big)$ to the type $I$ factor
$\R(F)$ gives a density matrix $\r_F \in \R(F)$ with respect to the trace $\Tr_F$ of $\R(F)$
\[
\omega |_{\R(F)} = \Tr_F(\r_F\, \cdot)\, ,
\]
that we aim to analyse.

Similarly, in the Fermi case, $\r'_F$ denotes the density matrix associated with the restriction of
$\om$ to $\M(F)$, where $\M(F)$ is the von Neumann algebra on $\Lambda(\H)$ associated with $F$ (see \cite{Fo} or Sect. \ref{FFn}), assuming it to be a type $I$ factor.

We shall consider also the norm one normalisations
\[
\hat\r_F\equiv \frac{\r_F}{||\r_F||}\ ,\quad \hat\r'_F\equiv \frac{\r'_F}{||\r'_F||}\ ;
\]
thus $\r_F\equiv \frac{\hat\r_F}{\Tr_F(\hat\r_F)}$ and
$\r'_F\equiv \frac{\hat\r'_F}{\Tr_F(\hat\r'_F)}$.
\begin{lemma}\label{2dim}
Let $\H$ be a 2-dimensional complex Hilbert space and $F\equiv F_\l\subset\H$ a standard subspace with $\sp(\Delta_F) =\{\l ,\l^{-1}\}$, $\l\in(0,1)$.  We have:
\smallskip

\noindent
{\rm Bose case:}
\begin{itemize}
\item[$a)$] $\R(F)$ is a factor of type $I_\infty$.
\item[$b)$] $\sp(\hat\r_F) =\{\l^n, n= 0,1,2\dots\}^-$ and each $\l^n$, $n>1$, is an eigenvalue with multiplicity 1, while 0 is not in the point spectrum.
\item[$c)$] $\Tr_F(\hat\r_F) = \frac{1}{1-\l}$, with $\Tr_F$ the trace of $\R(F)$.
\end{itemize}
{\rm Fermi case:}
\begin{itemize}
\item[$a)$]  $\M(F)$ is a factor of type $I_2$ ($2\times 2$ matrix algebra).
\item[$b)$] $\sp(\hat\r'_F) =\{1, \l\}$.
\item[$c)$] $\Tr_F(\hat\r'_F) = 1 +\l$, with $\Tr_F$ the trace of $\M(F)$.
\end{itemize}
\end{lemma}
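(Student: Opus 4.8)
The plan is to reduce $(\H,F)$ to a canonical one-mode model and then recognise the restricted vacuum as a one-mode Gibbs state. First I would diagonalise the one-particle modular data. Since $\dim_\CC\H=2$ and $\sp(\Delta_F)=\{\lambda,\lambda^{-1}\}$ with $\lambda\in(0,1)$, pick a unit eigenvector $e_1$ of $\Delta_F$ for $\lambda$ and set $e_2:=J_Fe_1$. Because $J_F$ is antiunitary with $J_F\Delta_FJ_F=\Delta_F^{-1}$ and $J_F^2=1$, the vector $e_2$ is a unit eigenvector for $\lambda^{-1}$, and $e_1\perp e_2$ as eigenvectors of the selfadjoint $\Delta_F$ for distinct eigenvalues; thus $\{e_1,e_2\}$ is an orthonormal basis in which $\Delta_F=\mathrm{diag}(\lambda,\lambda^{-1})$ and $J_F$ is the coordinate swap composed with complex conjugation. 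From $S_F=J_F\Delta_F^{1/2}$ one then reads off
\[
F=\{a\,e_1+\lambda^{1/2}\bar a\,e_2:\ a\in\CC\}=\R u+\R v,\qquad u=e_1+\lambda^{1/2}e_2,\ \ v=i(e_1-\lambda^{1/2}e_2),
\]
so $F$ is a single mode, with symplectic pairing $\Im(u,v)=1-\lambda\neq0$. This exhibits $(\H,F)$ as the canonical one-mode standard subspace at parameter $\lambda$.

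For the Bose case, factoriality of $\R(F)$ is immediate since $1\notin\sp(\Delta_F)$, and type $I$ holds trivially as the one-particle space is finite-dimensional (Proposition \ref{PropI}). As $\R(F)$ is generated by the Weyl unitaries of the single nondegenerate canonical pair $(u,v)$, Stone--von Neumann forces it to be the irreducible one-oscillator algebra, i.e. a type $I_\infty$ factor, which is $(a)$. For $(b)$ and $(c)$ I would identify $\omega|_{\R(F)}$ as a gauge-invariant quasi-free (Gibbs) state and pin down its parameter by either of two consistent routes. One route uses that the modular operator of $\R(F)$ is the second quantisation $\Ga(\Delta_F)$ (Araki), whose eigenvalues $\lambda^{m-n}$, $m,n\ge0$, match $\hat\r_F\otimes(\hat\r'_F)^{-1}$ with $\hat\r_F\propto\lambda^{N}$. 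The other computes the vacuum covariance $g=\Re(\cdot,\cdot)$ and symplectic form $\sigma=\Im(\cdot,\cdot)$ on $F$: in the basis $u,v$ one finds $g=(1+\lambda)\,1$ and $\sigma=(1-\lambda)J_0$, whose symplectic eigenvalue gives occupation $n=\tfrac{\lambda}{1-\lambda}$, hence Gibbs ratio $q=n/(n+1)=\lambda$. Either way $\hat\r_F=\lambda^{N}$ on the single-mode Fock space $\Ga(\CC)\cong\ell^2(\NN)$, with $N$ the number operator. Hence $\sp(\hat\r_F)=\{\lambda^n:n\ge0\}^-$, each $\lambda^n$ a simple eigenvalue and $0$ not a point of the point spectrum, and $\Tr_F(\hat\r_F)=\sum_{n\ge0}\lambda^n=(1-\lambda)^{-1}$, which is $(b)$ and $(c)$.

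The Fermi case is entirely parallel but finite-dimensional, so I would simply compute by hand. Here $\M(F)$ is generated by the single CAR pair $(c,c^*)$ attached to the mode, hence it is a $2\times2$ matrix algebra, a type $I_2$ factor, giving $(a)$; the twist in the definition of $\M(F)$ does not affect a one-mode computation. Writing $\Lambda(\H)=\Lambda(\CC^2)\cong\CC^4$ with the factorisation $\CC^2\otimes\CC^2$ induced by $\M(F)\subset B(\CC^2)\otimes1$, I would trace out the commutant on the Fock vacuum to obtain $\r'_F$ directly; equivalently the fermionic modular operator $\Lambda(\Delta_F)$ and the Fermi occupation $\omega(c^*c)=\tfrac{\lambda}{1+\lambda}$ give $\hat\r'_F=\mathrm{diag}(1,\lambda)$. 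Thus $\sp(\hat\r'_F)=\{1,\lambda\}$ and $\Tr_F(\hat\r'_F)=1+\lambda$, which is $(b)$ and $(c)$.

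The linear algebra (orthogonality of $e_1,e_2$, the explicit form of $F$, the covariance matrix) is immediate; the real work, and the main obstacle, is the rigorous identification of the reduced Bose vacuum with the geometric Gibbs state. Concretely this means carrying out the two-mode Bogoliubov (squeezing) transformation that diagonalises the vacuum two-point function on the dressed mode, equivalently writing $\Omega$ as the thermofield-double vector $\sqrt{1-\lambda}\sum_n\lambda^{n/2}\,|n\rangle\otimes|n\rangle$, while tracking the normalisation so that the unnormalised $\hat\r_F$ has largest eigenvalue $1$ relative to the trace $\Tr_F$ of the type $I$ factor. The Fermi case sidesteps this difficulty since everything there is finite-dimensional.
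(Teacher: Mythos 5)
Your proof is correct, and its skeleton coincides with the paper's: both diagonalise the one-particle modular data ($e_2=J_Fe_1$, so $\Delta_F=\mathrm{diag}(\l,\l^{-1})$ with $J_F$ the conjugate swap), realise $F$ as a single nondegenerate mode (your coefficient $\l^{1/2}$ in $F=\{a e_1+\l^{1/2}\bar a e_2\}$ is the correct one; the paper's displayed ``$\l$'' is a harmless slip), and settle the Fermi case by a direct finite-dimensional computation on $\La(\H)\cong\CC^4$ using $\sp(\Delta_{\M(F)})=\{1,\l,\l^{-1}\}$. Where you genuinely differ is the mechanism pinning down $\hat\r_F$. The paper constructs an explicit real-linear map $T:\H_1\to F$ satisfying $T\l^{it}\x=\Delta^{it}T\x$ and preserving the symplectic form, so that uniqueness of the CCR gives an isomorphism $\Phi:\R(\H_1)\to\R(F)$, $W(\x)\mapsto W(T\x)$, transporting the one-mode rotation $\Ad\Ga(\l^{it})$ to the vacuum modular flow of $\R(F)$; since the modular flow of a type $I$ factor with respect to a state is $\Ad\r^{it}$ of its density matrix, $\hat\r_F\cong\Ga(\l)=\l^N$ falls out in one line. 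In particular, the step you rightly flag as the main obstacle---passing from the spectrum $\{\l^{m-n}\}$ of $\Delta_{\R(F)}=\Ga(\Delta_F)$ to the factorised form $\hat\r_F\propto\l^N$, which spectral data alone does not determine---is exactly what the intertwiner $T$ dispatches without any two-mode squeezing or thermofield-double computation. Your covariance route (symplectic eigenvalue $(1+\l)/(1-\l)$, occupation $n=\l/(1-\l)$, Gibbs ratio $q=\l$) buys a self-contained quasi-free-state argument that generalises readily, at the cost of having to actually execute the Bogoliubov/purification step you defer; the paper's route buys brevity and rigour in that very step. Two cautions: within the paper's deductive order you should not quote Proposition \ref{PropI} here, since its proof rests on Corollary \ref{RF}, which is itself derived from this lemma---your independent Stone--von Neumann argument for $(a)$ and Araki's second-quantisation formula for the modular operator avoid this circularity, so keep those and drop the citation; and your use of $\hat\r'_F$ for the commutant density matrix clashes with the paper's notation, in which $\r'_F$ denotes the Fermi density matrix.
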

\begin{proof}
(Bose case).
$\H$ is the direct sum of two one-dimensional complex Hilbert spaces, $\H= \H_1\oplus \H_{-1}$, with $\Delta \xi = \l^{\pm 1} \xi$, $\xi\in \H_{\pm 1}$ and $J(\x_1 \oplus \xi_{-1}) = \bar\x_{-1} \oplus \bar\xi_1$
(with $\Delta = \Delta_F$, $J = J_F$). Then $\x_1 \oplus \xi_{-1}\in H$ iff $S\xi_1 \oplus \xi_{-1} = \x_1 \oplus \xi_{-1}$, thus iff
\[
\x_1 \oplus \xi_{-1} = \l^{-1/2} \bar\xi_{-1} \oplus \l^{1/2} \bar\xi_1 \ ,
\]
so $F = \{\x \oplus \l J\x ,\ \x\in\H_1\}$.

The real linear map $T: \H_1 \to F$, $T\x\mapsto (1 +\l^2)^{-1/2} (\x \oplus \l J\x)$ satisfies
\[
T\l^{it}\x = \Delta^{it}T\x
\]
and, by the uniqueness of the canonical commutation relations, we have an isomorphism $\Phi: \R(\H_1) \to \R(F)$ with $\Phi\big(W(\x)\big) = W(T\x)$ because
\begin{multline*}
\Im (T\xi,T\eta) = (1 +\l^2)^{-1}\Im (\x \oplus \l J\x, \eta \oplus \l J\eta) = (1 +\l^2)^{-1}( \Im (\x,\eta) + \Im (\l J\x, \l J\eta)) \\
= (1 +\l^2)^{-1}(\Im (\x,\eta) + \l^2 \Im (\l \eta, \l \x)) = \Im (\x,\eta) \ .
\end{multline*}
Now Ad$\Delta^{it}|_{\R(F)}$ is the inner automorphism implemented by the unitary $\r^{it}$ where $\r\in \R(F)$ is the positive selfadjoint element $\hat\r_F$ which is unitary equivalent to $\Gamma(\l)$. The rest is now clear.

(Fermi case.) In this case $\Lambda(\H) = \mathbb C\oplus \H \oplus \mathbb C$ is 4-dimensional, so $\M(F)$ is isomorphic to a $2\times 2$ matrix algebra. Clearly $\sp(\Delta_{\M(F)}) = \{1 , \l, \l^{-1}\}$, so $\sp(\hat\r'_F) = \{1,\l\}$ and $\Tr_F(\hat\r'_F) = 1 + \l$.
\end{proof}
Let $F\subset \H$ be a factorial standard subspace and assume that the spectrum of the modular operator $\Delta_F$ is discrete.
Let $\l_k$ be the eigenvalues of $\Delta_F |_{\H_F(0,1)}$ (with multiplicity). Then $\H$ is the direct sum of the 2-dimensional complex Hilbert spaces $\H_{\l_k}$ with corresponding direct sum decomposition $F = \bigoplus_k F_{\l_k}$ as in Lemma \ref{2dim}. Then
\[
\hat\r_F = \bigotimes_k \hat\r_{F_{\l_k}} \ ,
\]
where the infinite tensor product (w.r.t. the vacuum vectors) is convergent iff $\R(F)$ is a type $I$ factor. In this case
\ben\label{r1}
\Tr_F(\hat\r_F ) = \prod_k \Tr_{F_{\l_k}} (\hat\r_{F_{\l_k}} ) =  \prod_k (1 - \l_k)^{-1}\ .
\een
Similarly,
\ben\label{r2}
\Tr_F(\hat\r'_F ) = \prod_k \Tr_{F_{\l_k}} (\hat\r'_{F_{\l_k}} ) = \prod_k (1 + \l_k)\ .
\een
\begin{corollary}
$\sp(\Delta_{\R(F)})$ and $\sp(\Delta_{\M(F)})$ are equal to the closure of the multiplicative group generated by $\sp(\Delta_F) \setminus \{0\}$. If $\R(F)$ (resp. $\M(F)$) is of type $I$,
we have $\sp(\hat\r_F )\subset \sp(\Delta_{\R(F)})\cap [0,1]$ (resp. $\sp(\hat\r'_F )\subset \sp(\Delta_{\M(F)})\cap [0,1]$.
\end{corollary}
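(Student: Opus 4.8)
The plan is to push everything down to the one-particle level by identifying the modular operators with their second quantisations. The key input is the standard fact that, for a standard subspace $F\subset\H$, the modular operator of $\R(F)$ on $\Ga(\H)$ relative to the vacuum is $\Delta_{\R(F)}=\Ga(\Delta_F)$, while that of $\M(F)$ on $\La(\H)$ is $\Delta_{\M(F)}=\La(\Delta_F)$, with $\Ga,\La$ the multiplicative second quantisations of Section \ref{Trace}. Equivalently, using the decomposition $F=\bigoplus_k F_{\l_k}$ into two-dimensional pieces, second quantisation turns the direct sum into a vacuum-referenced infinite tensor product, $\Delta_{\R(F)}=\bigotimes_k\Delta_{\R(F_{\l_k})}$ and $\Delta_{\M(F)}=\bigotimes_k\Delta_{\M(F_{\l_k})}$, exactly as in the factorisation $\hat\r_F=\bigotimes_k\hat\r_{F_{\l_k}}$ used just above the statement. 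First I would record the two-dimensional spectra already contained in the proof of Lemma \ref{2dim}: for $\sp(\Delta_{F_{\l_k}})=\{\l_k,\l_k^{-1}\}$ one has $\sp(\Delta_{\R(F_{\l_k})})=\{\l_k^n:n\in\Z\}$ in the Bose case and $\sp(\Delta_{\M(F_{\l_k})})=\{1,\l_k,\l_k^{-1}\}$ in the Fermi case.

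For the Bose algebra, taking the vacuum-referenced infinite tensor product gives $\sp(\Delta_{\R(F)})$ as the closure of the finite products $\prod_k\l_k^{n_k}$ with $n_k\in\Z$ (all but finitely many zero). This is precisely the closure of the multiplicative group generated by $\{\l_k,\l_k^{-1}\}_k=\sp(\Delta_F)\setminus\{0\}$; here I use that $\sp(\Delta_F)$ is symmetric under $\mu\mapsto\mu^{-1}$, so the semigroup generated is already a group. This settles the Bose assertion.

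For the Fermi algebra the same procedure produces $\sp(\Delta_{\M(F)})$ as the closure of the products $\prod_k\l_k^{\e_k}$ with the exponents restricted to $\e_k\in\{-1,0,1\}$, reflecting the Pauli principle that each one-particle mode is occupied at most once. The inclusion $\sp(\Delta_{\M(F)})\subseteq\overline{\langle\sp(\Delta_F)\setminus\{0\}\rangle}$ is then immediate, and the reverse inclusion is the only point that requires work: one must show that the closure of these $\{-1,0,1\}$-exponent products already exhausts the full group closure. I expect this to be the main obstacle. The resolution uses that $\H$ is infinite dimensional, hence there are infinitely many modular eigenvalues: if some $\l_k\neq1$ occurs with high multiplicity, a fixed power $\l_k^m$ is realised by occupying $m$ distinct modes carrying the same eigenvalue, while if the eigenvalues accumulate then a general group element is approximated by products of many distinct nearby modes, using that differences of logarithms of eigenvalues become arbitrarily small. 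In either regime the closure of the Fermi products coincides with the closure of the group, giving equality and identifying $\sp(\Delta_{\M(F)})$ with $\sp(\Delta_{\R(F)})$.

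Finally, for the density matrices I would argue directly, without invoking the delicate reverse inclusion. By Lemma \ref{2dim} and the factorisation $\hat\r_F=\bigotimes_k\hat\r_{F_{\l_k}}$, the eigenvalues of $\hat\r_F$ are the closure of the products $\prod_k\l_k^{n_k}$ with $n_k\geq0$; these lie in $(0,1]$ and belong to the multiplicative semigroup generated by $\sp(\Delta_F)$, hence to $\sp(\Delta_{\R(F)})$, so $\sp(\hat\r_F)\subseteq\sp(\Delta_{\R(F)})\cap[0,1]$. In the Fermi case $\hat\r'_F=\bigotimes_k\hat\r'_{F_{\l_k}}$ with $\sp(\hat\r'_{F_{\l_k}})=\{1,\l_k\}$, so its eigenvalues are products $\prod_{k\in S}\l_k$ over finite subsets $S$; each such product is an honest Fermi eigenvalue of $\Delta_{\M(F)}$ obtained by occupying exactly the modes in $S$, whence $\sp(\hat\r'_F)\subseteq\sp(\Delta_{\M(F)})\cap[0,1]$ with no appeal to the accumulation argument. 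This completes the proof.
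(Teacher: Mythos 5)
Your route is the paper's own: the paper's proof of this corollary is literally ``immediate by the above discussion'', i.e.\ the decomposition $F=\bigoplus_k F_{\l_k}$, the two-dimensional spectra from Lemma \ref{2dim}, and the vacuum-referenced tensor factorisations \eqref{r1}, \eqref{r2}, which is exactly your setup via $\Delta_{\R(F)}=\Ga(\Delta_F)$ and $\Delta_{\M(F)}=\La(\Delta_F)$. Your Bose argument is correct (symmetric tensor powers allow arbitrary occupation numbers, so all integer exponents $\prod_k\l_k^{n_k}$, $n_k\in\Z$, occur, and inversion-symmetry of $\sp(\Delta_F)$ makes this the group), and both density-matrix inclusions are correct; in particular your direct Fermi-occupation argument that each $\prod_{k\in S}\l_k$ is an honest eigenvalue of $\La(\Delta_F)$ is exactly the right move, since it makes the second assertion independent of the problematic equality.

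The genuine gap is your repair of the Fermi reverse inclusion: the dichotomy ``high multiplicity or accumulating eigenvalues with small logarithmic gaps'' is not forced by the hypotheses, and in the regime it excludes, the step fails --- in fact the claimed equality itself fails. Take all $\l_k$ simple with $\l_k=e^{-10^k}$ (note $\sum_k\l_k<\infty$, so $F$ is even of type $I$; for type $I$ the eigenvalues can only accumulate at $0$, so nothing makes the gaps of $-\log\l_k$ small). The eigenvalues of $\La(\Delta_F)$ are $e^{-s}$ with $s=\sum_k\e_k 10^k$, $\e_k\in\{-1,0,1\}$ finitely supported; every such $s$ with support reaching index $m\geq 2$ has $|s|>80$, so the set of realisable exponents near $20$ is $\{0,\pm 10\}$, and $\l_1^2=e^{-20}$, which lies in the multiplicative group generated by $\sp(\Delta_F)\setminus\{0\}$, is at positive distance from $\sp(\Delta_{\M(F)})$ (the closure only adds $0$). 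So no argument can close the gap you flagged: in the Fermi case only the inclusion $\sp(\Delta_{\M(F)})\subseteq\overline{\langle\sp(\Delta_F)\setminus\{0\}\rangle}$ holds in this generality, with equality only under extra spectral assumptions (e.g.\ infinite multiplicities, or $\{-\log\l_k\}$ generating a dense set of $\{-1,0,1\}$-combinations). This is already visible in the paper's own Lemma \ref{2dim}, where $\sp(\Delta_{\M(F_\l)})=\{1,\l,\l^{-1}\}$ rather than the closure of $\l^{\Z}$. Your instinct that this was ``the main obstacle'' was sound; the honest conclusion is that the obstacle is a counterexample, not a missing lemma --- the paper's one-line proof glosses over precisely this point, and the parts of the corollary actually used later (the Bose statement and the two density-matrix inclusions) are the ones your proposal establishes correctly.
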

\begin{proof}
Immediate by the above discussion.
\end{proof}
\begin{corollary}\label{tI}
$\R(F)$ is a type $I$ factor iff $\M(F)$ is a type $I$ factor and this is the case iff $\Delta_F E_F(0,1)$ is $\L^1$.
\end{corollary}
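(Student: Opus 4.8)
The plan is to read off all three conditions from the density-matrix decomposition developed just above, combined with the trace identity \eqref{Sf} of Lemma \ref{9}. Since, by Proposition \ref{PropI}, $\R(F)$ is of type $I$ iff $\Delta_F E_F(0,1)\in\L^1$, the Bose equivalence is already in hand, and the real task is to bring $\M(F)$ into the same picture. The unified statement will then come from applying \eqref{Sf} to a single one-particle operator.

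First I would reduce to the case in which $\Delta_F$ has pure point spectrum, as the splitting $F=\bigoplus_k F_{\l_k}$ and the product formulas \eqref{r1}, \eqref{r2} are available only then. Each of the three conditions forces this. If $\Delta_F E_F(0,1)\in\L^1$, then $\Delta_F E_F(0,1)$ is compact, so $\Delta_F$ is diagonalisable on $\H_F(0,1)$, hence on $\H_F(1,\infty)$ as well through $\Delta_F^{-1}=J_F\Delta_F J_F$, while $1$ is excluded as an eigenvalue by factoriality. If instead $\R(F)$ (resp.\ $\M(F)$) is a type $I$ factor, then the modular operator $\Delta_\Om$ of the faithful normal vacuum state has pure point spectrum (the modular operator of a faithful normal state on a type $I$ factor is diagonalised by matrix units); since $\Delta_\Om$ preserves the one-particle subspace of the Fock space and restricts there to $\Delta_F$, the operator $\Delta_F$ is itself diagonalisable. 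Thus if $\Delta_F$ is not discrete all three conditions fail, and we may assume $\Delta_F$ discrete, writing $D\equiv\Delta_F|_{\H_F(0,1)}$, a positive operator with $0\le D<1$ and eigenvalues $\{\l_k\}\subset(0,1)$.

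In this situation I would identify the three quantities with the three members of \eqref{Sf} for $A=D$. By the discussion preceding \eqref{r1}, $\R(F)$ is of type $I$ iff the infinite tensor product for $\hat\r_F$ converges, i.e.\ iff $\Tr_F(\hat\r_F)=\prod_k(1-\l_k)^{-1}=\det(1-D)^{-1}=\Tr\big(\Ga(D)\big)$ is finite; dually, by \eqref{r2}, $\M(F)$ is of type $I$ iff $\Tr_F(\hat\r'_F)=\prod_k(1+\l_k)=\det(1+D)=\Tr\big(\Lambda(D)\big)$ is finite; and $\Delta_F E_F(0,1)\in\L^1$ iff $\Tr(D)=\sum_k\l_k<\infty$. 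Equation \eqref{Sf} applied to $D$ then gives
\[
\Tr(D)<\infty \;\Longleftrightarrow\; \Tr\big(\Ga(D)\big)<\infty \;\Longleftrightarrow\; \Tr\big(\Lambda(D)\big)<\infty\, ,
\]
which is precisely the asserted equivalence of $\Delta_F E_F(0,1)\in\L^1$, $\R(F)$ being type $I$, and $\M(F)$ being type $I$. I expect the only delicate point to be the reduction of the previous paragraph — in particular inferring diagonalisability of $\Delta_F$ from type $I$-ness of the second quantised algebra by passing to the one-particle subspace — together with the bookkeeping that lets the products \eqref{r1}, \eqref{r2} be read as honest $\Ga$- and $\Lambda$-traces so that Lemma \ref{9} applies verbatim.
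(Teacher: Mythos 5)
Your proof is correct and follows essentially the same route as the paper: its own argument simply invokes \eqref{r1} and \eqref{r2} to get $\Tr_F(\hat\r_F)<\infty \Leftrightarrow \sum_k\l_k<\infty \Leftrightarrow \Tr_F(\hat\r'_F)<\infty$ and notes $\Tr\big(\Delta_F E_F(0,1)\big)=\sum_k\l_k$, which is exactly the computation you package through Lemma \ref{9} and \eqref{Sf} applied to $D=\Delta_F|_{\H_F(0,1)}$. Your preliminary reduction to discrete spectrum (compactness from the $\L^1$ condition, and diagonalisability of $\Delta_F$ from type $I$-ness via the form $\r_1\otimes\r_2^{-1}$ of the vacuum modular operator restricted to the one-particle subspace) is sound and makes explicit a hypothesis the paper leaves implicit in the discussion preceding \eqref{r1}.
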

\begin{proof}
By \eqref{r1}, \eqref{r2}, we have
\[
\Tr_F(\hat\r_F ) < \infty  \Longleftrightarrow \sum_k \l_k < \infty  \Longleftrightarrow \Tr_F(\hat\r'_F ) < \infty
\]
and the corollary follows because $\Tr\!\big(\Delta_F E_F(0,1)\big) =\sum_k \l_k$.
\end{proof}
We make explicit the following direct consequence.
\begin{corollary}\label{RF}
Let $F\subset\H$ standard subspace with $\R(F)$ a type $I$ factor. If we
identify $\R(F)$ with $B(\K)$ with $\K$ a Hilbert space, then
$\hat\r_F$ is unitarily equivalent to $\Gamma(\Delta_F |_{\H_F(0,1)})$ on $\Ga(\H_F(0,1))$. Similarly,
$\hat\r'_F$ is unitarily equivalent to $\Lambda(\Delta_F |_{\H_F(0,1)})$.
\end{corollary}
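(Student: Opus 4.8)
The plan is to read the statement off from the two-dimensional computation in Lemma \ref{2dim} combined with the multiplicativity of second quantisation over direct sums, working within the discrete decomposition set up in the paragraph preceding this corollary. There, writing $\l_k\in(0,1)$ for the eigenvalues (with multiplicity) of $\Delta_F|_{\H_F(0,1)}$, one has orthogonal decompositions $\H=\bigoplus_k\H_{\l_k}$ and $F=\bigoplus_k F_{\l_k}$ into $\Delta_F$- and $J_F$-invariant two-dimensional pieces, with $\H_F(0,1)=\bigoplus_k\H_1^{(k)}$ and $\H_1^{(k)}\subset\H_{\l_k}$ the one-dimensional $\l_k$-eigenspace. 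The type $I$ hypothesis on $\R(F)$ is exactly the condition $\sum_k\l_k<\infty$ that guarantees convergence of the infinite tensor product below, cf. \eqref{r1}, \eqref{r2}; this is the natural (discrete) setting in which the corollary is a direct consequence of what precedes.

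The first step is to apply Lemma \ref{2dim} componentwise. In the Bose case it gives, for each $k$, that $\R(F_{\l_k})$ is of type $I_\infty$ and that, under the identification $\R(F_{\l_k})\cong B(\Ga(\H_1^{(k)}))$ furnished by the intertwiner $T:\H_1^{(k)}\to F_{\l_k}$ constructed there, the normalised density matrix $\hat\r_{F_{\l_k}}$ is unitarily equivalent to $\Ga(\l_k)$ on $\Ga(\H_1^{(k)})$. The second step is to assemble these component identifications. Using the factorisation $\hat\r_F=\bigotimes_k\hat\r_{F_{\l_k}}$ with respect to the vacuum vectors, together with the canonical isomorphism $\Ga\big(\bigoplus_k\H_1^{(k)}\big)\cong\bigotimes_k\Ga(\H_1^{(k)})$ under which $\Ga\big(\bigoplus_k\l_k\big)=\bigotimes_k\Ga(\l_k)$, I obtain
\[
\hat\r_F\;\cong\;\bigotimes_k\Ga(\l_k)\;=\;\Ga\big(\Delta_F|_{\H_F(0,1)}\big)
\]
on $\Ga(\H_F(0,1))$, which is the claim. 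The Fermi case is entirely analogous: replacing $\Ga$ by $\La$, the two-dimensional input from Lemma \ref{2dim} reads $\hat\r'_{F_{\l_k}}\cong\La(\l_k)=1\oplus\l_k$ on $\La(\H_1^{(k)})$, and the functorial isomorphism $\La(A\oplus B)\cong\La(A)\otimes\La(B)$ together with \eqref{r2} yields $\hat\r'_F\cong\La\big(\Delta_F|_{\H_F(0,1)}\big)$.

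The step requiring the most care is the compatibility of the componentwise unitaries with the infinite tensor product structure. I must choose the unitaries $U_k$ realising $\hat\r_{F_{\l_k}}\cong\Ga(\l_k)$ so that each carries the vacuum vector of $\Ga(\H_1^{(k)})$ to the reference (vacuum) vector of the $k$-th factor; only then does the restricted incomplete tensor product $\bigotimes_k U_k$ define a genuine unitary intertwining $\bigotimes_k\hat\r_{F_{\l_k}}$ with $\bigotimes_k\Ga(\l_k)$ and identify $\R(F)\cong B(\Ga(\H_F(0,1)))$ with the global vacuum as reference. This matching is automatic, since $\Ga(\l_k)$ fixes the vacuum of $\Ga(\H_1^{(k)})$ while the normalisation of $\hat\r_{F_{\l_k}}$ places its largest eigenvalue $1$ on that same vacuum; the only analytic input, the convergence of the product, is precisely the type $I$ hypothesis recorded in \eqref{r1} and \eqref{r2}.
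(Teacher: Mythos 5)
Your proof is correct and follows essentially the same route as the paper: the paper's proof likewise invokes the two-dimensional computation of Lemma \ref{2dim} together with the decomposition $F=\bigoplus_k F_{\l_k}$ set up just before the corollary, identifying $\Ga\big(\Delta_F|_{\H_F(0,1)}\big)=\bigotimes_k\Ga(\l_k)$ with $\hat\r_F=\bigotimes_k\hat\r_{F_{\l_k}}$ (and analogously for $\La$ in the Fermi case). Your additional remarks on vacuum-matching in the incomplete tensor product and on the type $I$ hypothesis ensuring convergence only make explicit what the paper leaves implicit.
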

\begin{proof}
As in Lemma \ref{2dim},
$\Ga(\Delta_F |_{\H_F(0,1)}) = \bigotimes_k \Ga(\l_k)$ which is unitarily equivalent to $\hat\r_F = \bigotimes_k \hat\r_{F_k}$.
The Fermi case is analogous.
\end{proof}

\subsection{Second quantisation entropy}
With $\r$ a positive, non-singular selfadjoint linear operator, the {\it von Neumann entropy} of $\r$ is defined by
\ben\label{vNe}
S(\r) \equiv -\Tr(\r\log\r)\ .
\een
Here, $\r$ is not assumed to have trace one, note however that for $\l > 0$
\[
S(\r) < \infty \Longleftrightarrow S(\l\r) < \infty\, .
\]
The following proposition (Fermi case) is equivalent to \cite[Lemma 3.3]{CCS}. 
\begin{proposition}
Let $A$ be  operator, $0\leq A< 1$. The von Neumann entropy of the Bose and Fermi second quantisation of $A$ is given respectively by
\ben\label{SG}
S\big(\Ga(A)\big) = -\Tr\left(\frac{A}{1-A}\log A\right) \Tr\big(\Ga(A)\big)
= -\Tr\left(\frac{A}{1-A}\log A\right)\det(1- A)^{- 1} \ ,
\een
\ben\label{SG2}
S\big(\Lambda(A)\big)= -\Tr\left(\frac{A}{1+A}\log A\right) \Tr \big(\Lambda(A)\big)
= -\Tr\left(\frac{A}{1+A}\log A\right) \det (1 +A) \ .
\een
\end{proposition}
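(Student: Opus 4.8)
The plan is to diagonalise $A$ and compute the entropy mode by mode, using that second quantisation turns the orthogonal decomposition of $\H$ into a tensor product and hence diagonalises $\Ga(A)$ and $\Lambda(A)$ in an occupation-number basis. As in the proof of Lemma~\ref{9}, if $A$ has no discrete spectrum all the quantities in \eqref{SG}, \eqref{SG2} are infinite and there is nothing to prove; so I assume $A=\bigoplus_k \l_k$ with $0\le\l_k<1$. Then $\Ga(A)=\bigotimes_k\Ga(\l_k)$ has eigenvalues $\prod_k\l_k^{n_k}$ indexed by occupation numbers $n_k\ge 0$ with finitely many nonzero, and likewise $\Lambda(A)=\bigotimes_k\Lambda(\l_k)$ has eigenvalues $\prod_k\l_k^{n_k}$ with $n_k\in\{0,1\}$. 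Since $\log\big(\prod_k\l_k^{n_k}\big)=\sum_j n_j\log\l_j$, the quantity $-\Tr\big(\Ga(A)\log\Ga(A)\big)$ becomes a sum over all occupation-number configurations of $\big(\prod_k\l_k^{n_k}\big)\big(-\sum_j n_j\log\l_j\big)$, all of whose terms are nonnegative because $\log\l_j\le 0$.

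Because the summands have a fixed sign, Tonelli's theorem lets me rearrange freely and factor the sum over a single mode out of the product over the others. In the Bose case this gives
\[
-\Tr\big(\Ga(A)\log\Ga(A)\big)=-\sum_j\log\l_j\Big(\sum_{n_j\ge0}n_j\l_j^{n_j}\Big)\prod_{k\ne j}\Big(\sum_{n_k\ge0}\l_k^{n_k}\Big)=-\Big(\prod_k\frac{1}{1-\l_k}\Big)\sum_j\frac{\l_j\log\l_j}{1-\l_j},
\]
using $\sum_{n\ge0}n\l^n=\l/(1-\l)^2$ and $\sum_{n\ge0}\l^n=(1-\l)^{-1}$; recognising $\prod_k(1-\l_k)^{-1}=\det(1-A)^{-1}=\Tr\big(\Ga(A)\big)$ from Lemma~\ref{9} and $\sum_j\frac{\l_j\log\l_j}{1-\l_j}=\Tr\big(\frac{A}{1-A}\log A\big)$ yields \eqref{SG}. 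The Fermi case is identical, with $n_k\in\{0,1\}$, so that $\sum_{n\in\{0,1\}}n\l^n=\l$ and $\sum_{n\in\{0,1\}}\l^n=1+\l$, producing the factor $\prod_k(1+\l_k)=\det(1+A)=\Tr\big(\Lambda(A)\big)$ and the per-mode weight $\l_j/(1+\l_j)$, which is \eqref{SG2}.

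A conceptually slicker route, which I would mention as a check, is the replica identity $\Ga(A)^s=\Ga(A^s)$ and $\Lambda(A)^s=\Lambda(A^s)$ (true sector by sector since $(A^{\otimes n})^s=(A^s)^{\otimes n}$): combined with $S(\r)=-\frac{d}{ds}\big|_{s=1}\Tr(\r^s)$ and the formulas of Lemma~\ref{9} applied to $A^s$, differentiating $\det(1-A^s)^{-1}$ and $\det(1+A^s)$ at $s=1$ via $\frac{d}{ds}A^s=A^s\log A$ reproduces \eqref{SG}, \eqref{SG2} at once. The main obstacle in either route is the convergence bookkeeping in infinite dimensions: one must ensure the rearrangement is legitimate and allow the value $+\infty$, since $S\big(\Ga(A)\big)$ may be infinite even when $\Ga(A)$ is trace class (the weighted series $\sum_j\frac{\l_j\log\l_j}{1-\l_j}$ can diverge although $\sum_j\l_j<\infty$). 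The sign-definiteness of the summands in the direct approach is exactly what makes this harmless, so that both sides of \eqref{SG}, \eqref{SG2} agree as elements of $[0,+\infty]$; in the replica approach the same point resurfaces as the need to take the derivative at $s=1$ as a one-sided limit from $s>1$, where all traces are finite.
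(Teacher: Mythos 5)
Your main argument is correct, but it is a genuinely different route from the paper's. The paper proves the proposition by the replica trick you only mention as a check: it writes $-S(\r)=\big(\frac{d}{d\a}\log\Tr(\r^\a)\big|_{\a=1}\big)\Tr(\r)$, uses $\Ga(A^\a)=\Ga(A)^\a$ (and likewise for $\La$), and differentiates the formulas $\log\Tr\big(\Ga(A^\a)\big)=-\Tr\log(1-A^\a)$, $\log\Tr\big(\La(A^\a)\big)=\Tr\log(1+A^\a)$ from Lemma~\ref{9} inside the trace, via $\frac{d}{d\a}A^\a=A^\a\log A$. Your primary proof instead diagonalises $A=\bigoplus_k\l_k$, reads off the eigenvalues of $\Ga(A)$ and $\La(A)$ in an occupation-number basis, and sums $-\sum_{\{n_k\}}\big(\prod_k\l_k^{n_k}\big)\sum_j n_j\log\l_j$ directly, with Tonelli justifying the rearrangement because every summand is nonnegative; the elementary identities $\sum_{n\ge0}n\l^n=\l/(1-\l)^2$, $\sum_{n\ge0}\l^n=(1-\l)^{-1}$ (resp.\ $\l$ and $1+\l$ in the Fermi case) then produce exactly the per-mode weights $\l_j/(1-\l_j)$ and $\l_j/(1+\l_j)$ together with the determinant prefactors. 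What your route buys is rigour on the convergence side: the sign-definiteness makes the identity valid unconditionally in $[0,+\infty]$, whereas the paper's differentiation under the trace and evaluation at $\a=1$ is left unjustified there (and, as you correctly note, really requires a one-sided limit from $\a>1$, since $S\big(\Ga(A)\big)$ can be infinite while $\Tr\big(\Ga(A)\big)$ is finite — your example class $\sum_j\l_j<\infty$ with $\sum_j\l_j\log\l_j=-\infty$ is the right one). What the paper's route buys is brevity and uniformity: it needs no spectral bookkeeping beyond Lemma~\ref{9} and treats Bose and Fermi cases by a one-line sign change. Your reduction to discrete spectrum is also the same move the paper makes in Lemma~\ref{9}, so no gap there.
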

\begin{proof}
We use the formula
\[
-S(\r) = \frac{d}{d\a}\Tr(\r^\a)\Big|_{\a =1}\ ,
\]
hence we may compute $S(\r)$ by the logarithmic derivative
\[
-S(\r) = \left(\frac{d}{d\a}\log \Tr(\r^\a)\Big |_{\a =1}\right)\Tr(\r)\ .
\]
Since $\Ga(A^\a) = \Ga(A)^\a$, we have
\begin{align*}
S\big(\Ga(A)\big) &= -\left(\frac{d}{d\a}\log \Tr\big(\Ga(A^\a)\big)\big|_{\a =1}\right)\Tr\big(\Ga(A)\big)
= \left(\frac{d}{d\a}\Tr\big(\log (1-A^\a)\big)\big|_{\a =1}\right)\Tr\big(\Ga(A)\big)
\\ &= \Tr\Big(\frac{d}{d\a}\log \big(1- A^\a)\Big)\Big|_{\a =1}\Tr\big(\Ga(A)\big)
=- \Tr\Big(\frac{A^\a}{1-A^\a}\log A\Big)\Big|_{\a =1}\Tr\big(\Ga(A)\big)\\
&= -\Tr\Big(\frac{A}{1-A}\log A\Big)\Tr\big(\Ga(A)\big)\ ;
\end{align*}
the second equation in \eqref{SG} follows by \eqref{lT}.

Similarly,
\begin{align*}
S\big(\Lambda(A)\big) &= -\left(\frac{d}{d\a}\log \Tr\big(\Lambda(A^\a)\big)\big|_{\a =1}\right)\Tr\big(\Lambda(A)\big)
= -\left(\frac{d}{d\a}\Tr(\log (1+A^\a))\big|_{\a =1}\right)\Tr\big(\Lambda(A)\big)
\\ &= -\Tr\Big(\frac{d}{d\a}\log \big(1 + A^\a)\Big)\Big|_{\a =1}\Tr\big(\Lambda(A)\big)
=- \Tr\Big(\frac{A^\a}{1+A^\a}\log A\Big)\Big|_{\a =1}\Tr\big(\Lambda(A)\big)\\
&= -\Tr\Big(\frac{A}{1+A}\log A\Big)\Tr\big(\Lambda(A)\big) \ .
\end{align*}
The second equations in \eqref{SG}, \eqref{SG2} follow by \eqref{lT}.
\end{proof}
With $A$ as above, note that $\Tr(A)$ is finite iff $\det(1 + A)$ is finite and
iff $\det(1 - A)^{-1}$ is finite. So we have:
\begin{corollary}\label{S(a1)}
$S(A)$ is finite iff $S\big(\Ga(A)\big)$ is finite and iff $S\big(\Lambda(A)\big)$ is finite.
\end{corollary}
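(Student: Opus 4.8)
The plan is to read everything off the two closed formulas \eqref{SG} and \eqref{SG2}, which present $S\big(\Ga(A)\big)$ and $S\big(\Lambda(A)\big)$ as products of two non-negative factors: a ``logarithmic weight'' $-\Tr\big(\tfrac{A}{1-A}\log A\big)$, resp. $-\Tr\big(\tfrac{A}{1+A}\log A\big)$, and a determinant, $\det(1-A)^{-1}\ge 1$, resp. $\det(1+A)\ge 1$. Since each determinant is $\ge 1$ and the weight is $\ge 0$, a product of this form is finite if and only if \emph{both} factors are finite (the trivial case $A=0$ being clear, as then all quantities vanish). So I would first reduce the statement to the two assertions that the determinant factor is finite and that the weight factor is finite, and treat these separately.

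For the determinants, \eqref{Sf} already gives
\[
\det(1-A)^{-1}<\infty \ \Longleftrightarrow\ \Tr(A)<\infty \ \Longleftrightarrow\ \det(1+A)<\infty ,
\]
so both reduce to $\Tr(A)<\infty$. For the weights I would pass to the spectral picture: assuming, as we may (else every quantity is infinite, exactly as in Lemma \ref{9}), that $A$ has discrete spectrum with eigenvalues $\l_k\in[0,1)$, the three relevant quantities are
\[
S(A)=-\sum_k \l_k\log\l_k,\qquad -\sum_k \tfrac{\l_k}{1-\l_k}\log\l_k,\qquad -\sum_k \tfrac{\l_k}{1+\l_k}\log\l_k .
\]
The elementary bounds $\tfrac12<\tfrac{1}{1+\l}\le 1\le \tfrac{1}{1-\l}$ on $[0,1)$ show that each summand of the Fermi weight lies between $\tfrac12$ and $1$ times $-\l_k\log\l_k$, so the Fermi weight is finite iff $S(A)<\infty$ with no further hypothesis. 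For the Bose weight one only has the one-sided bound $-\tfrac{\l_k}{1-\l_k}\log\l_k\ge -\l_k\log\l_k$; here I would invoke that once $\Tr(A)<\infty$ the eigenvalues accumulate only at $0$, whence $\tfrac{1}{1-\l_k}\to 1$ and the Bose weight is again comparable to $S(A)$.

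Assembling the pieces gives both directions. If $S\big(\Ga(A)\big)$ (resp. $S\big(\Lambda(A)\big)$) is finite, then its determinant factor is finite, so $\Tr(A)<\infty$, the eigenvalues tend to $0$, and the corresponding weight is then equivalent to $S(A)$, yielding $S(A)<\infty$. For the converse I would deduce $\Tr(A)<\infty$ from $S(A)<\infty$ using $-\l\log\l\ge\l$ for $\l\le 1/e$ together with $\l_k\to0$, after which the determinants are finite and the weights are controlled as above, so that $S\big(\Ga(A)\big)$ and $S\big(\Lambda(A)\big)$ are finite.

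I expect the only genuinely delicate point to be the behaviour of the Bose weight near $\l=1$: there $-\l\log\l\to0$ while $\tfrac{1}{1-\l}\to\infty$, so the Bose weight and the determinant can diverge even while $-\sum_k\l_k\log\l_k$ stays bounded. This is exactly why the finiteness of $\Tr(A)$ cannot be dropped and why $A<1$ must keep the spectrum away from $1$; once the eigenvalues are forced to accumulate only at $0$, all three logarithmic weights share the common asymptotics $-\l_k\log\l_k$ and the equivalences drop out. The Fermi side carries no such subtlety, since $\tfrac{1}{1+\l}$ always stays in $(\tfrac12,1]$.
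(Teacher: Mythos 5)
Your proposal is correct and is essentially the paper's own argument: the paper, too, reads everything off the factorisations \eqref{SG} and \eqref{SG2}, reduces the determinant factors to $\Tr(A)<\infty$ via \eqref{Sf}, and compares the logarithmic weights with $S(A)$, its opening claim ``Clearly $S(A)<\infty \Rightarrow \Tr(A)<\infty$'' being exactly the step where you invoke $-\lambda\log\lambda\ge\lambda$ for $\lambda\le 1/e$ together with $\lambda_k\to 0$. Your closing remark about eigenvalues accumulating at $1$ correctly isolates the one tacit assumption behind that ``clearly'' (from $S(A)<\infty$ alone one only gets that the eigenvalues accumulate in $\{0,1\}$, so the spectrum must be kept away from $1$), which is automatic in the paper's intended applications such as Corollary \ref{S(a)}, where $A=P_FP_{F'}P_F$ is trace class by Corollary \ref{pp'} and hence its eigenvalues accumulate only at $0$.
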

\begin{proof}
Clearly,
\[
S(A) < \infty \Rightarrow \Tr(A) < \infty \ .
\]
Then by \eqref{SG} we have
\[
S\big(\Ga(A)\big) < \infty  \Rightarrow -\Tr\left(\frac{A}{1-A}\log A\right) < \infty
\Leftrightarrow S(A) < \infty \ .
\]
The converse implication
$S(A) < \infty \Rightarrow S\big(\Ga(A)\big) < \infty$
follows again by \eqref{SG} and by \eqref{Sf}.

The equivalence $S(A) < \infty \Leftrightarrow S\big(\Lambda(A)\big) < \infty$
is analogously obtained by \eqref{SG} and \eqref{Sf}.
\end{proof}
Let $F\subset \H$ be a factorial, type $I$ standard subspace of $\H$. We define the {\it entropy $S(F)$ of $F$} as
\ben\label{SF}
S(F) = S(P_F P_{F'} P_F)
\een
and the Bose (resp. Fermi) {\it entanglement entropy} of $F$ to be the von Neumann entropy
$S(\r_F)$ (resp. $S(\r'_F)$).
As above, $\r_F$ is the density matrix of the restriction of the vacuum state to $\R(F)$ (resp. $\M(F)$). We have
\[
S(\r_F) = S(\r_{F'})\ , \qquad  S(\r'_F) = S(\r'_{F'})\ .
\]
\begin{corollary}\label{S(a)}
\[
S(\r_F) < \infty \Longleftrightarrow S(\r'_F) < \infty \Longleftrightarrow S(F) < \infty\ .
\]
\end{corollary}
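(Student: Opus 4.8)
The plan is to prove all three finiteness conditions equivalent to the single condition $S(A)<\infty$, where $A\equiv\Delta_F|_{\H_F(0,1)}$ is the complex linear operator appearing in Corollary \ref{RF}. First I would record that, $F$ being of type $I$, Corollary \ref{tI} makes $\Delta_F E_F(0,1)$ trace class, so $A$ is compact with summable eigenvalues $\{\l_k\}\subset(0,1)$; these can accumulate only at $0$, hence $\|A\|<1$ and $0\le A<1$, so the hypotheses of Corollary \ref{S(a1)} are met for $A$.

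Next I would dispose of the two entanglement entropies. Since $\r_F=\hat\r_F/\Tr_F(\hat\r_F)$ is a positive multiple of $\hat\r_F$, the scaling remark after \eqref{vNe} gives $S(\r_F)<\infty\Longleftrightarrow S(\hat\r_F)<\infty$, and likewise $S(\r'_F)<\infty\Longleftrightarrow S(\hat\r'_F)<\infty$. By Corollary \ref{RF}, $\hat\r_F$ is unitarily equivalent to $\Ga(A)$ and $\hat\r'_F$ to $\Lambda(A)$, and since the von Neumann entropy is unitarily invariant we get $S(\hat\r_F)=S(\Ga(A))$ and $S(\hat\r'_F)=S(\Lambda(A))$. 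Corollary \ref{S(a1)} then chains these together:
\[
S(\r_F)<\infty \Longleftrightarrow S(\Ga(A))<\infty \Longleftrightarrow S(A)<\infty \Longleftrightarrow S(\Lambda(A))<\infty \Longleftrightarrow S(\r'_F)<\infty .
\]

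Finally I would relate $S(A)$ to $S(F)=S(\sigma)$ with $\sigma\equiv P_F P_{F'}P_F$ as in \eqref{SF}. The self-adjoint operator $\sigma$ annihilates the real orthogonal complement of $F$ and restricts on $F$ to $P_F P_{F'}|_F=A_F|_F$ by Lemma \ref{FG}, so its nonzero eigenvalues are exactly those of $A_F|_F$; by the last assertion of Corollary \ref{pp'} these are asymptotically equivalent to the $\l_k$ (indeed $A_F=4\Delta_F/(\Delta_F+1)^2$ sends $\l_k\mapsto 4\l_k/(1+\l_k)^2\sim 4\l_k$ as $\l_k\to 0$). Writing $\mu_k$ for the eigenvalues of $\sigma$, we have $\mu_k\sim c\,\l_k$ with $\l_k\to 0$, whence $\log\mu_k\sim\log\l_k$ and therefore $-\mu_k\log\mu_k\sim -c\,\l_k\log\l_k$. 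Thus the series $S(\sigma)=-\sum_k\mu_k\log\mu_k$ and $S(A)=-\sum_k\l_k\log\l_k$ converge or diverge together, giving $S(F)<\infty\Longleftrightarrow S(A)<\infty$. Assembling the three reductions yields the corollary.

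The routine parts are pure bookkeeping on top of the earlier corollaries; the only genuinely delicate point is this last paragraph, namely turning the eigenvalue asymptotics of Corollary \ref{pp'} into an equivalence of the two entropy series. One must check that the map $\l\mapsto 4\l/(1+\l)^2$ together with the passage from complex to real multiplicities is harmless, which holds because the dominant factor $\log\l_k\to-\infty$ swamps the bounded corrections.
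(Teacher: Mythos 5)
Your proposal is correct and follows essentially the same route as the paper's proof, which likewise chains Corollary \ref{RF}, Corollary \ref{pp'} and Corollary \ref{S(a1)} together. The only difference is expository: the paper compresses the passage between $\Delta_F|_{\H_F(0,1)}$ and $P_F P_{F'}P_F$ into a bare citation of Corollary \ref{pp'}, whereas you spell out the eigenvalue asymptotics $\mu_k \sim 4\l_k$ and the harmlessness of the scaling and multiplicity changes -- a detail worth making explicit, but not a different argument.
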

\begin{proof}
By Corollary \ref{RF}, $\hat\r_F$ is unitarily equivalent to $\Ga(P_F P_{F'} P_F)$
(after an identification of $\R(F)$ with $B(\K)$); and an analogous unitary equivalence of $\hat\r'_F$ with  $\Lambda(P_F P_{F'} P_F)$ holds.
So the statement follows by Cor.   \ref{pp'} and Cor. \ref{S(a1)}.
\end{proof}

\subsection{Split inclusions}\label{sectsplit}
Recall that an inclusion of von Neumann algebras $\N_1\subset \N_2$ is said to be split if there exists an intermediate type $I$ factor $\F$, so $\N_1 \subset \F \subset \N_2$ \cite{DL}, see
this reference for more on these inclusions. In this section we begin to study split inclusions of standard subspaces.

Let $\H$ be a Hilbert space.
With $K\subset\H$ a closed, real linear subspace, we denote as above by $P_K$ the real orthogonal projection onto $K$.
Let $K\subset H$ be an inclusion of standard subspaces of $\H$. We shall say that $K\subset H$ is {\it split} if the corresponding inclusion of von Neumann algebras $\R(K)\subset \R(H)$ on the Bose Fock space is split.
\begin{proposition}\label{split}
Let $K\subset H$ be an inclusion of standard factorial subspaces of $\H$ with $K'\cap H$ standard. The following are equivalent:'
\begin{itemize}
\item[$(i)$] $K\subset H$ is split,
\item[$(ii)$] There exists an intermediate type I standard space $F$: $K\subset F\subset H$,
\item[$(iii)$] $F\equiv \overline{K + J_{K'\cap H}K} = H\cap J_{K'\cap H}H$ is a canonical type $I$ intermediate subspace.
\end{itemize}
\end{proposition}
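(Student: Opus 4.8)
The plan is to reduce the whole statement to the corresponding assertion at the level of von Neumann algebras on the Bose Fock space, by exploiting the functoriality of the second quantization map $L \mapsto \R(L)$ from standard subspaces of $\H$ to von Neumann algebras on $\Ga(\H)$. I would use freely the standard dictionary: $\R(L') = \R(L)'$; the lattice identities $\R(L_1 \cap L_2) = \R(L_1) \cap \R(L_2)$ and $\R(\overline{L_1 + L_2}) = \R(L_1) \vee \R(L_2)$ (valid when the relevant subspaces are standard); the injectivity of $L \mapsto \R(L)$; the identity $J_{\R(L)} = \Ga(J_L)$ relating the Fock-space modular conjugation to the one-particle one; the covariance $\Ga(U)\R(L)\Ga(U)^* = \R(UL)$ for any unitary or antiunitary $U$ on $\H$; and the definitional fact that $L$ is of type $I$ exactly when $\R(L)$ is a type $I$ factor.

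Two of the implications are then immediate. For $(iii) \Rightarrow (ii)$ there is nothing to do once $(iii)$ is established, since the subspace exhibited there is type $I$ and intermediate. For $(ii) \Rightarrow (i)$, if $K \subset F \subset H$ with $F$ type $I$ and standard, applying $\R(\cdot)$ yields $\R(K) \subset \R(F) \subset \R(H)$ with $\R(F)$ a type $I$ factor, so $\R(K) \subset \R(H)$ is split, i.e.\ $K \subset H$ is split by definition.

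The substance lies in $(i) \Rightarrow (iii)$. Assuming $\R(K) \subset \R(H)$ split, I would first note that, since $K' \cap H$ is standard, the relative commutant $\R(K)' \cap \R(H) = \R(K') \cap \R(H) = \R(K' \cap H)$ has the vacuum $\Omega$ as a cyclic and separating vector, with modular conjugation $J = J_{\R(K'\cap H)} = \Ga(J_{K'\cap H})$. By the canonical intermediate type $I$ factor theorem of \cite{DL}, applied to this split inclusion with this $J$, the operator
\[
\F = \R(K) \vee J\,\R(K)\,J = \R(H) \cap J\,\R(H)\,J
\]
is a type $I$ factor intermediate between $\R(K)$ and $\R(H)$. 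I would then compute both descriptions of $\F$ via covariance, using $J\,\R(K)\,J = \R(J_{K'\cap H}K)$ and $J\,\R(H)\,J = \R(J_{K'\cap H}H)$, to obtain
\[
\F = \R(K) \vee \R(J_{K'\cap H}K) = \R(\overline{K + J_{K'\cap H}K}), \qquad \F = \R(H) \cap \R(J_{K'\cap H}H) = \R(H \cap J_{K'\cap H}H).
\]
By injectivity of $L \mapsto \R(L)$ the two standard subspaces coincide, so $F = \overline{K + J_{K'\cap H}K} = H \cap J_{K'\cap H}H$; and since $\R(F) = \F$ is a type $I$ factor, $F$ is a type $I$ intermediate standard subspace, which is exactly $(iii)$.

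The hard part will be making this dictionary rigorous in the relative setting: I must check that $J_{K'\cap H}K$, $J_{K'\cap H}H$, and their span and intersection are again standard, so that $\R(\cdot)$ is defined on them and the lattice identities apply, and confirm the two modular identities $J_{\R(K'\cap H)} = \Ga(J_{K'\cap H})$ and $\R(K)' \cap \R(H) = \R(K'\cap H)$. Standardness of $J_{K'\cap H}K$ follows because $J_{K'\cap H}$ is an antiunitary involution with $i J_{K'\cap H} = J_{K'\cap H}\, i$ on the relevant subspaces, hence maps standard subspaces to standard subspaces; and $H \cap J_{K'\cap H}H$ is standard since it is separating (being contained in $H$) and cyclic (containing $K$). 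Once these functorial facts are in place, the genuinely deep input, the existence and canonical form of the intermediate type $I$ factor, is supplied entirely by \cite{DL} at the algebra level.
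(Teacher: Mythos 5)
Your proposal is correct and follows essentially the same route as the paper: both reduce $(i)\Rightarrow(iii)$ to the canonical intermediate type $I$ factor theorem of \cite{DL} applied to the split inclusion $\R(K)\subset\R(H)$, using the standardness of $K'\cap H$ to get $\Omega$ cyclic for the relative commutant $\R(K'\cap H)$, the identity $J=\Ga(J_{K'\cap H})$, and covariance to recognise $\F=\R\big(\overline{K+J_{K'\cap H}K}\big)$, with the trivial implications $(iii)\Rightarrow(ii)\Rightarrow(i)$ handled identically. Your extra bookkeeping (injectivity of $L\mapsto\R(L)$, the lattice identity for the intersection form $H\cap J_{K'\cap H}H$, and the standardness checks) only makes explicit steps the paper leaves implicit.
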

\begin{proof}
Clearly $(iii)\Rightarrow (ii)\Rightarrow (i)$ and we have to show that $(i) \Rightarrow (iii)$.
Assuming then $(i)$, $\R(K)\subset \R(H)$ is split by definition. The vacuum vector $\Om$ is cyclic for the relative commutant $\R(K)'\cap
R(H) = \R(K')\cap \R(H) = \R(K'\cap H)$ because $K'\cap H$ is standard by assumption.
By \cite{DL}, the von Neumann algebra
\ben\label{canI}
\F = \R(K)\vee J\R(K)J = \R(H)\cap J\R(H)J
\een
is a canonical intermediate type $I$ factor. Here $J$ is the modular conjugation of $\R(K')\cap \R(H)$ w.r.t. $\Om$.

Now, $J = \Ga(J_{K'\cap H})$, the second quantisation of the modular conjugation $J_{K'\cap H}$ of $K'\cap H$. Thus
\ben\label{canF}
\F = \R(K)\vee J\R(K)J = R(K)\vee \R(J_{K'\cap H}K) = \R(F) \, ,
\een
where $F = \overline{K + J_{K'\cap H}K}$ is an intermediate type $I$ factor between $K$ and $H$, and $F$ if of type $I$ because $\R(F) = \F$ is of type $I$.
\end{proof}
Let $K\subset H$ be a split inclusion of standard subspaces of $\H$ with $K'\cap H$ standard. By Prop. \ref{split}, there is a canonical intermediate type $I$ factor $\R(F)$ between $\R(K)$ and $\R(H)$, with $F$ given by \eqref{canF}. This is indeed the canonical intermediate type $I$ factor for $\R(K)\subset\R(H)$ associated with the vacuum vector.

By Corollary \ref{tI}, the von Neumann algebra $\M(F)$ in the Fermi quantisation of $F$ is also a canonical intermediate type $I$ factor between $\M(K)$ and $\M(H)$, that we shall study in Section \ref{PartII}, where the standard subspaces are associated with intervals of $S^1$ (free Fermi net in first quantisation). This is a of canonical intermediate type $I$ factor for $\M(K)\subset\M(H)$ associated with the vacuum vector, here constructed by a graded version \eqref{Fcan} of formula \eqref{canI}, where the relative commutant is replaced by the graded relative commutant.

Before ending this section, we note the following proposition.
\begin{proposition}
Let $K\subset H$ be an inclusion of standard factorial subspaces of $\H$ with $K'\cap H$ standard.
If $K\subset H$ is split then $P_KP_{H'}$ and $P_KJP_{H'}$ are $\L^2$, with $J = J_{K'\cap H}$.
\end{proposition}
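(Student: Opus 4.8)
The plan is to exploit the canonical intermediate type $I$ standard subspace produced by Proposition \ref{split} and to ``sandwich'' both operators by a single $\L^2$ operator, using that the Hilbert--Schmidt operators form a two-sided ideal. Since $K\subset H$ is split and $K'\cap H$ is standard, Proposition \ref{split} furnishes the canonical intermediate type $I$ standard subspace $F=\overline{K+J_{K'\cap H}K}=H\cap J_{K'\cap H}H$ with $K\subset F\subset H$. Being factorial and of type $I$, $F$ satisfies, by Corollary \ref{pp'}, that $P_F P_{F'}$ is $\L^2$. This is the one nontrivial input; everything else is algebraic manipulation of projections inside the ideal $\L^2$.

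First I would record the elementary projection identities coming from the two inclusions, working throughout in the real Hilbert space $\H_{\mathbb R}$. The inclusion $K\subset F$ gives $P_K=P_FP_K=P_KP_F$, while the reversed inclusion of symplectic complements $H'\subset F'$ (dual to $F\subset H$) gives $P_{H'}=P_{F'}P_{H'}=P_{H'}P_{F'}$. Substituting these into $P_KP_{H'}$ yields
\[
P_KP_{H'}=(P_KP_F)(P_{F'}P_{H'})=P_K\,(P_FP_{F'})\,P_{H'}.
\]
Since $P_FP_{F'}\in\L^2$ and $\L^2$ is a two-sided ideal among bounded (real linear) operators, the right-hand side is $\L^2$, which proves the first assertion.

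For the twisted operator $P_KJP_{H'}$ with $J=J_{K'\cap H}$, the additional point is that $J$ leaves $F$ globally invariant: because $J(K+JK)=JK+K$ and $J$ is an isometry of $\H_{\mathbb R}$, one gets $JF=F$, hence $J$ commutes with the orthogonal projection $P_F$, i.e. $P_FJ=JP_F$. Then
\[
P_KJP_{H'}=P_KP_F\,J\,P_{F'}P_{H'}=P_KJ\,(P_FP_{F'})\,P_{H'},
\]
again an $\L^2$ operator flanked by the bounded operators $P_KJ$ and $P_{H'}$, so $\L^2$ by the ideal property.

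The only step requiring a little care — and hence the main (mild) obstacle — is the invariance $JF=F$ together with the resulting commutation $P_FJ=JP_F$: here one must use that $J$, though antiunitary, is real-orthogonal on $\H_{\mathbb R}$, so that it maps the orthogonal projection of a $J$-invariant real subspace to itself. Once this is in place, the ideal property of $\L^2$ (valid for real linear operators by the Lemma identifying real and complex $\L^p$) closes both cases at once.
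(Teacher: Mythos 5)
Your proof is correct, and while it rests on the same key input as the paper --- namely that the canonical intermediate type $I$ subspace $F=\overline{K+J_{K'\cap H}K}$ from Proposition \ref{split} satisfies $P_FP_{F'}\in\L^2$ by Corollary \ref{pp'} --- the way you exploit it differs in both halves. For the first assertion, the paper argues via positivity: from $P_{H'}\leq P_{F'}$ it deduces $P_FP_{H'}P_F\leq P_FP_{F'}P_F$, hence trace class, then compresses by $P_K$ and takes a square root to get $P_{H'}P_K\in\L^2$; your purely algebraic sandwich $P_KP_{H'}=P_K(P_FP_{F'})P_{H'}$, using $P_KP_F=P_K$ and $P_{F'}P_{H'}=P_{H'}$, reaches the same conclusion somewhat more directly via the two-sided ideal property of $\L^2$. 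For the twisted operator the routes genuinely diverge: the paper observes that $JK\subset F$, so the inclusion $JK\subset H$ is again split, reruns the first argument to conclude $P_{JK}P_{H'}\in\L^2$, and converts via $P_{JK}=JP_KJ$; you instead use $J^2=1$ to get $JF=\overline{JK+K}=F$, hence $JP_F=P_FJ$ since $J$ is real-orthogonal on $\H_\Reali$, and commute $J$ through the sandwich. Both arguments hinge on the same structural fact that $F$ is generated by $K$ and $JK$, but yours trades the paper's transport of the split inclusion for a short commutation lemma; this makes your version slightly more self-contained (no need to re-verify the split property for $JK\subset H$), at the cost of having to justify $JP_F=P_FJ$, which you do correctly.
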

\begin{proof}
Let $F$ be an intermediate type $I$ subspace.
By Cor. \ref{pp'} $P_F P_{F'}P_F$ is trace class. As $P_{H'} < P_{F'}$, we have $P_F P_{H'}P_F < P_F P_{F'}P_F$, so $P_F P_{H'}P_F$ is trace class. By multiplying the latter by $P_K$ from the left and from the right we see that $P_K P_{H'} P_K$ is trace class. Equivalently,  $P_{H'}P_K$ is $\L^2$.

With $F$ is the canonical intermediate type $I$ subspace, we have $JK\subset F$, so $JK\subset H$ is split. By the above argument, $P_{JK}P_{H'}$ is thus $\L^2$. As $P_{JK} = JP_K J$, we then have that $JP_K J P_{H'}$, hence $P_K J P_{H'}$, is $\L^2$.
\end{proof}
Let $K\subset H$ be an inclusion of standard factorial subspaces of $\H$ with $K'\cap H$ standard.
By the above proposition, and its proof, we might expect that $P_KP_{H'}$ is $\L^p$ iff $P_F P_{F'}$ is $\L^p$, where $F$ is the canonical intermediate type $I$ subspace. A relation of this kind would be useful for entropy computations.
\section{Entanglement entropy for conformal nets}
\label{PartII}
In this second part, we are dealing with QFT nets of standard subspaces and of von Neumann algebras on $S^1$ and we state our definitions in this setting, although they directly extends to the case of net on higher dimensional spacetimes.

\subsection{Free Quantum Field case}
Let $H: I\subset S^1 \mapsto H(I)$ be a $SL(2,\mathbb R)$-covariant net of standard subspaces of a Hilbert space $\H$ on the interval of $S^1$. We assume that $H$ is local or twisted-local.

Given intervals $I \subset \tilde I$ of $S^1$, with $\bar I$ contained in the interior of $\tilde I$, and a factorial, type $I$ standard subspace $F$ with $H(I) \subset F \subset H(\tilde I)$,
we set
\[
S_H(I,\tilde I ; F) := S(F) \, ,
\]
with $S(F)$ the von Neumann entropy of $F$ defined in  \eqref{SF}. As seen in Sect. \ref{sectsplit}, there is a canonical choice for $F$. We set
\[
S_H(I,\tilde I) = S_H(I,\tilde I; F) \, ,
\]
with $F$ the canonical intermediate type $I$ factorial standard subspace.

Analogously, let $\A$  be $SL(2,\mathbb R)$-covariant net of von Neumann algebras on $S^1$. We assume that $\A$ is either local or twisted-local and the split property holds.

Let $\F$ be an intermediate type $I$ factor between $\A(I)$ and $\A(\tilde I)$, and $\r_\F\in\F$ be the density matrix (with trace one) associated with the restriction of the vacuum state to $\F$. We set
\ben\label{AH1}
S_\A(I,\tilde I; \F) := S(\r_\F)\   .
\een
Now, consider the net of von Neumann algebras
$\A(I)$ associated with the Bose (resp. Fermi) second quantisation of $H$, and set
\[
S_\A(I, \tilde I) := S_\A(I,\tilde I; \F)\, ,
\]
with $\F$ the type $I$ factor associated with the Bose (resp. Fermi) second quantisation of the canonical intermediate type $I$ standard subspace $F$.

Then, by Corollary \ref{S(a)}, we have
\ben\label{AH2}
S_\A(I,\tilde I) < \infty \Longleftrightarrow S_H(I,\tilde I) < \infty \, .
\een
We may also consider the lower entropy
\[
\underline S_\A(I, \tilde I) := \inf_\F S_\A(I,\tilde I; \F)\, ,
\]
where the infimum is taken over all intermediate type $I$ factor $\F$ or, more generally, over all intermediate, discrete type $I$ von Neumann algebras (countable direct sum of type $I$ factors).

Of course
\[
\underline S_\A(I, \tilde I) \leq S_\A(I,\tilde I) \, .
\]
In some case, it may be easier to get a bound for $\underline S_\A(I, \tilde I)$ rather than for $S_\A(I, \tilde I)$, cf. Sect. \ref{bos}.

\subsection{Free Fermi nets}
\label{FFn}

We refer the reader to Section 3 of Chapter I and Section 13 of
Chapter II of \cite{Was} for more details.

Let $\H$ be a complex Hilbert space and $K\subset\H$ a standard
subspace of $\H$. If $\xi\in\H$, we denote by $a(\xi)$ the
associated exterior multiplication operator on the Fermi Fock space
over $\H$ \eqref{Fock} (creation operator) and $c(\x) = a(\x) + a^*(\x)$ the
Clifford multiplication.

If $T$ is a bonded linear operator on $\H$
with $||T||\leq 1$, we denote by $\La(T)$, as in Sect. \ref{Trace}, the bounded linear
operator on $\La(\H)$ such that $\La(T) |_{\La^n(\H)} = T\wedge T\cdots\wedge T$, the $n$-fold tensor product of
$T$ on $\La^k(\H)$. Note that $||\La(T)||\leq 1$.

The Klein transform
$k:\La(\H)\to\La(\H)$ is the linear operator such that $k = 1$ on
the even part  and $i$ on the odd part of $\La(\H)$.

Let $\M(K)$ be the von Neumann algebra generated by $\{c(\x), \x\in \H\}$ and $J_K$, $\Delta_K$ the modular conjugation and operator of $\M(K)$ with respect to the vacuum vector $\Om$. Then
\[
J_K = k^{-1} \La(ij_K)\, , \qquad \Delta_K^{it} = \La(\delta_K^{it})\, ,
\]
where $j_K$ and $\delta_K$ are the modular conjugation and operator of the standard subspace $K$, see \cite{A} or \cite{Was}
(in this section we use use small letters to denote modular operators and conjugations on the one-particle Hilbert space, and capital letter on the fermi Fock Hilbert space).

We have
\[
\M(K)' = J_K \M(K) J_K = k^{-1}\M(K^\perp)k\, ,
\]
with $K^\perp$ the orthogonal of $K$ with respect to the real part of the scalar product; $\M(K^\perp)$ is the graded commutant of $\M(K)$. Note that the formulas for the orthogonal projections:
\[
j_K K = K' = (iK)^\perp,\qquad P_{K'} = 1 - P_{iK} = 1 +  i P_K i = -i P_{K^\perp} i \ .
\]
It follows that
\[
P_{i j_K K} = -i P_{(iK)^\perp} i = -i (1 - P_{iK})i = P_{K^\perp} \ .
\]
We shall denote $\hat j_K = i j_K$. Then
\begin{equation}\label{P}
P_{\hat j_K K} = P_{K^\perp} \ .
\end{equation}
From now on we specialise to $\H = L^2(S^1 ; \mathbb C)$, with
the complex structure on $\H$ given by $i(2P -1)$, with $P$ is the
projection onto the Hardy space.

The net of standard subspaces on $S^1$ is given by
\[
I \mapsto L^2(I)\equiv L^2(I;\mathbb C)\ .
\]
If $K = L^2(I)$ with $I$ the upper
semicircle, then $\hat j_K$ is given by Theorem in Section 14 of
\cite{Was}:
\begin{equation}\label{jhat}
\hat j_K (f) = \frac1z f(z^{-1}) \ .
\end{equation}
\subsubsection{Explicit formula in a special case}

On $S^1$, we consider the following four ``symmetric intervals"
\begin{align}\label{2f}
I_1 =\big\{ e^{i\th} : 0 <\th< \frac{\pi}2\big\}, \quad  I_2 &=\big\{ e^{i\th} : - \frac{\pi}2 <\th< 0\big\}, \\
-I_1 =\big\{ e^{-i\th} : 0 <\th<\frac{\pi}2\big\}, \quad  -I_2 &=\big\{
e^{-i\th} : - \frac{\pi}2 <\th<0\big\}.
\end{align}
We note that $I_1^2 = (-I_1)^2 = I = \{ e^{i\th} : 0 \leq\th\leq \pi \}$, where
$I_1^2 = \{ z^2 : z \in I_1\}$. We denote by $\chi_1, \chi_2, \chi_{-1},\chi_{-2}$ the characteristic functions of $I_1,  I_2,  -I_{1}, - I_{2}$ respectively.

Now let $K = L^2(I_1) \oplus  L^2(- I_1)$.
Our goal is to have an explicit formula for $\hat j_K$: this is
based on the formula for the modular operator for disjoint intervals
in the free fermion case in \cite{LMR}, \cite{RT} and \cite{Cas}.

We will start with the following linear isomorphism from
$\H\oplus \H$ to $\H$ which is inspired by Prop. 3  in \cite{RT}
\ben
\begin{cases}\label{3f}
\b \big((\phi_1(z),\phi_2(z))\big) = \psi (z)=\phi_1(z^2) + z\phi_2(z^2) \\
\phi_1(z^2) = \frac12\big(\psi(z) + \psi(-z)\big) \\
\phi_2(z^2) = \frac{1}{2z}\big(\psi(z) - \psi(-z)\big)
\end{cases}
\een where $(\phi_1 , \phi_2) \in \H\oplus \H$, $\psi$ is in $\H
= L^2(S^1 ; \mathbb C)$.
In terms of basis one can check from the above definition that
\[
\b(z^n,0)=z^{2n}, \b(0,z^{n})=z^{2n+1}
\]
for all integers $n$.

Note that $\b P\oplus P = P\b$ where $P$ is the projection onto the
Hardy space.

The inverse of $\b$ is given by \ben\label{4f} \b^{-1}\big(\psi(z)\big) =
\big(\phi_1(z) , \phi_2(z)\big) \ .
\een
Now, for the free fermion net associated with $\phi_1$, if $K =
L^2(I)$ with $I$ the upper semicircle, we have $\hat j_K
(f) = \frac1z f(z^{-1})$ by \eqref{jhat}.

Using \eqref{3f} and \eqref{4f}, we can transform the diagonal
action of $\hat j_K$ on $\H\oplus \H$ to an action of  $j$ on
$\psi(z)$ such that $j = \b \hat j_K \b^{-1}$.

We arrive at the following \ben\label{8f} (jf)(z) =
\left(\frac{1}{2} + \frac1{2z^2}\right)f(z^{-1}) -
 \left(\frac{1}{2} - \frac1{2z^2}\right)f(-z^{-1}) \, .
\een
Note that $j$ maps $L^2(I_1)$ to $L^2(I_2 \cup  - I_2)$. We will denote by $M_I$ the multiplication operator by $\chi_I$.

By definition, $M_{I_1}$ and $jM_{I_1}j$ are orthogonal projections whose ranges $L^2(I_1)$ and  $jL^2(I_1)$ are also orthogonal.
Hence $P_{12} := M_{I_1} + jM_{I_1}j$ is a projection.

Note that $L^2(I_1)\oplus jL^2(I_1)$ is the canonical type $I$ standard subspace which is intermediate between $L^2(I_1)$ and
$L^2(I_1 \cup I_2 \cup - I_2)$.
\begin{theorem}\label{th1}
\[
P_{12} = M_g + M_h R \quad {\rm on}\ L^2(S^1) \, ,
\]
where
\begin{gather}
g(z) = \frac1{4z^2}(z^2 + 1)^2 \chi_2 - \frac1{4z^2}(z^2 -1)^2 \chi_{-2} + \chi_1\, ,\\
h(z) = \frac1{4z^2}( z^4-1)(\chi_2 - \chi_{-2})\, ,\\
R(f)(z) = f(-z)\, .
\end{gather}
\end{theorem}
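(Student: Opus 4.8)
The plan is to obtain the formula by a direct computation of the operator $jM_{I_1}j$ from \eqref{8f}, after which one simply adds $M_{I_1}$ (multiplication by $\chi_1$). Write $(jf)(z) = g_1(z)f(z^{-1}) - g_2(z)f(-z^{-1})$ with multipliers $g_1(z) = \frac{z^2+1}{2z^2}$ and $g_2(z) = \frac{z^2-1}{2z^2}$. First I would form $\psi := M_{I_1}(jf)$, so that $\psi(z) = \chi_1(z)[g_1(z)f(z^{-1}) - g_2(z)f(-z^{-1})]$, and then apply $j$ a second time, $(j\psi)(z) = g_1(z)\psi(z^{-1}) - g_2(z)\psi(-z^{-1})$. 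Everything then reduces to evaluating $\psi$ at the two arguments $z^{-1}$ and $-z^{-1}$.

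The key bookkeeping concerns how the cutoff $\chi_1$ is transported under these substitutions. Since $z^{-1} \in I_1 \Leftrightarrow z \in I_2$ and $-z^{-1} \in I_1 \Leftrightarrow z \in -I_2$ (for the intended geometric meaning of the symmetric intervals), I get $\chi_1(z^{-1}) = \chi_2(z)$ and $\chi_1(-z^{-1}) = \chi_{-2}(z)$; this is precisely how the supports $\chi_2$ and $\chi_{-2}$ enter $g$ and $h$. Simultaneously I evaluate the multipliers at the inverted points, using $(-z^{-1})^2 = z^{-2}$, which gives $g_1(z^{-1}) = g_1(-z^{-1}) = \frac{1+z^2}{2}$ and $g_2(z^{-1}) = g_2(-z^{-1}) = \frac{1-z^2}{2}$. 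Because the inner inversions send $z^{-1} \mapsto z$ and $-z^{-1} \mapsto -z$, the only surviving arguments of $f$ are $f(z)$ and $f(-z) = (Rf)(z)$, so the result is automatically of the form $M_g + M_h R$.

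Collecting the coefficient of $f(z)$ gives, on $\chi_2$, the product $g_1(z)\cdot\frac{1+z^2}{2} = \frac{(z^2+1)^2}{4z^2}$ and, on $\chi_{-2}$, the product $-g_2(z)\cdot\bigl(-\frac{1-z^2}{2}\bigr) = -\frac{(z^2-1)^2}{4z^2}$; adding the diagonal contribution $\chi_1$ from $M_{I_1}$ reproduces $g$. Collecting the coefficient of $f(-z)$ gives $-g_1(z)\cdot\frac{1-z^2}{2} = \frac{z^4-1}{4z^2}$ on $\chi_2$ and $-g_2(z)\cdot\frac{1+z^2}{2} = -\frac{z^4-1}{4z^2}$ on $\chi_{-2}$, which is exactly $h(z) = \frac{z^4-1}{4z^2}(\chi_2 - \chi_{-2})$. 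The simplifications use only $(z^2+1)(z^2-1) = z^4-1$ and $(z^2+1)^2 - (z^2-1)^2 = 4z^2$.

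The step I expect to require the most care is not the algebra but the correct transport of the characteristic-function supports under $z \mapsto z^{-1}$ and $z \mapsto -z^{-1}$: one must pin down the geometric meaning of $I_2$ and $-I_2$, verify the two membership equivalences above, and then match each surviving term $f(z), f(-z)$ to the right multiplier $g_1, g_2$ evaluated at $z^{-1}$ or $-z^{-1}$. As a consistency check I would verify directly from \eqref{8f} that $j^2 = 1$ — the same collecting procedure makes the coefficient of $f(z)$ collapse to $1$ and that of $f(-z)$ to $0$ — which confirms that $j$ is the claimed involution, and hence that $P_{12} = M_{I_1} + jM_{I_1}j$ is again a projection, consistent with the orthogonality of $L^2(I_1)$ and $jL^2(I_1)$ recorded before the statement.
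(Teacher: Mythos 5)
Your proof is correct and is essentially the paper's own argument: the paper likewise obtains $jM_{I_1}jf$ by substituting \eqref{8f} into itself and adding $M_{I_1}$, merely leaving implicit the support bookkeeping $\chi_1(z^{-1})=\chi_2(z)$, $\chi_1(-z^{-1})=\chi_{-2}(z)$ and the multiplier evaluations that you spell out. Incidentally, your collected coefficient of $f(-z)$, namely $\frac{1}{4z^2}(z^4-1)(\chi_2-\chi_{-2})$, agrees with the theorem as stated, while the paper's displayed intermediate formula misplaces a parenthesis at exactly that spot.
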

\begin{proof}
By \eqref{8f} we have \ben jM_{I_1} j f = \Big( \frac1{4z^2}(z^2 +
1)^2 \chi_2 - \frac1{4z^2}(z^2 -1)^2 \chi_{-2} \Big) f(z) + \Big(
\frac1{4z^2}( z^4-1) \chi_2  - \chi_{-2} \Big) f(-z) \een and the
theorem follows.
\end{proof}
It is now useful to examine the Fourier modes of $g$ and $h$.
\begin{proposition}\label{prop1}
Consider the Fourier series $g(z) = \sum_n g_n z^n$ and $h(z) =
\sum_n h_n z^n$ where $g,h$ are as in Th. \ref{th1}. Then $|g_n| =
O(n^{-3})$ and $|h_n| = O(n^{-2})$.
\end{proposition}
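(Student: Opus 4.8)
The plan is to read off the decay rates from the smoothness of $g$ and $h$, viewed as functions on $S^1$, at the endpoints of the arcs where the characteristic functions jump. Away from the four points $z=1,i,-1,-i$ (i.e. $\theta=0,\pm\pi/2,\pi$) both $g$ and $h$ are restrictions of Laurent polynomials, hence real-analytic; the only possible singularities sit at these four junctions. Writing $z=e^{i\theta}$ and using the elementary identities
\[
\frac{(z^2+1)^2}{4z^2}=\cos^2\theta\ ,\qquad -\,\frac{(z^2-1)^2}{4z^2}=\sin^2\theta\ ,\qquad \frac{z^4-1}{4z^2}=\tfrac{i}{2}\sin2\theta\ ,
\]
the formulas of Theorem \ref{th1} become, on the four arcs (which in cyclic order are $(-\pi,-\tfrac\pi2)=-I_1$, $(-\tfrac\pi2,0)=I_2$, $(0,\tfrac\pi2)=I_1$, $(\tfrac\pi2,\pi)=-I_2$ and partition $S^1$),
\[
g(e^{i\theta})=\begin{cases}0,&\theta\in-I_1\\ \cos^2\theta,&\theta\in I_2\\ 1,&\theta\in I_1\\ \sin^2\theta,&\theta\in-I_2\end{cases}
\qquad
h(e^{i\theta})=\begin{cases}0,&\theta\in-I_1\\ \tfrac{i}{2}\sin2\theta,&\theta\in I_2\\ 0,&\theta\in I_1\\ -\tfrac{i}{2}\sin2\theta,&\theta\in-I_2.\end{cases}
\]

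Second, I would check the matching of the one–sided boundary values and derivatives at the four junctions. A direct computation shows that $g$ and $g'$ are continuous across all four points (for instance at $\theta=0$ both sides give value $1$ and derivative $0$, at $\theta=\pi/2$ both sides give value $1$ and derivative $0$, and so on), whereas $g''$ jumps by $\pm2$ at each point; hence $g\in C^1(S^1)$ with $g''$ piecewise smooth and of bounded variation. Likewise $h$ is continuous across all four points (each one–sided value vanishes), while $h'$ has nonzero jumps of size $\pm i$; hence $h\in C^0(S^1)$ with $h'$ piecewise smooth and of bounded variation. These cancellations at the endpoints are the substance of the statement: they are exactly what promotes the naive $O(1/n)$ decay coming from a bare characteristic function to the claimed rates.

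Finally, I would invoke the standard Fourier estimate: if $f\in C^m(S^1)$ and $f^{(m+1)}$ is piecewise smooth of bounded variation, then integrating by parts $m+1$ times produces no boundary contributions, by periodicity and the continuity of $f,\dots,f^{(m)}$, so $f_n=(in)^{-(m+1)}\,\widehat{f^{(m+1)}}(n)$; one further integration by parts, now picking up the jumps of $f^{(m+1)}$, gives $\widehat{f^{(m+1)}}(n)=O(1/n)$. Applying this with $m=1$ to $g$ yields $|g_n|=O(n^{-3})$, and with $m=0$ to $h$ yields $|h_n|=O(n^{-2})$. The entire content lies in the endpoint computation of the second step; everything else is routine. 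I expect no serious obstacle, the only care being to confirm the derivative matching at all four junctions to the stated order, in particular that $g$ is genuinely $C^1$ and not merely continuous.
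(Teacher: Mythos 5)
Your proposal is correct --- I checked the endpoint matchings you rely on: with $z=e^{i\theta}$ one indeed has $g$ equal to $1$, $\cos^2\theta$, $\sin^2\theta$, $0$ on $I_1$, $I_2$, $-I_2$, $-I_1$ respectively, with one-sided values and first derivatives agreeing at all four junctions $\theta=0,\pm\pi/2,\pi$, while $h=\pm\frac{i}{2}\sin 2\theta$ vanishes at the junctions with jumping first derivative --- but it takes a genuinely different route from the paper's proof of this proposition. The paper computes the coefficients outright: writing $\chi_{I_1}=\sum_n\gamma_n z^n$ with $\gamma_n=\frac{1}{2\pi}\frac{i}{n}(e^{-in\pi/2}-1)$, it expresses $g_n$ and $h_n$ as shifted combinations of the $\gamma$'s and exhibits the cancellation by hand, namely $g_n=\frac12(\gamma_{2-n}+\gamma_{-2-n})+\gamma_n=\frac{i}{2\pi}(e^{-i\pi n/2}-1)\frac{-4}{n(n^2-4)}$ for odd $n$, $g_n=0$ for even $n$, and $h_n=\gamma_{2-n}-\gamma_{-2-n}=O(n^{-2})$. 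That computation buys exact asymptotics, including the vanishing of the even modes, which your regularity argument does not see; conversely, your argument is more conceptual --- the arithmetic cancellation $\frac1n-\frac12\bigl(\frac{1}{n-2}+\frac{1}{n+2}\bigr)=\frac{-4}{n(n^2-4)}$ is precisely the Fourier-side shadow of the $C^1$ matching at the endpoints --- and it is essentially the method the paper itself adopts for the general symmetric intervals (Lemma \ref{lemmaprop2} and Proposition \ref{lp2}), where explicit coefficients are unavailable and only continuity plus a bounded second derivative is verified, yielding $O(n^{-2})$ for both functions. Your sharper bookkeeping thus also explains why the symmetric case $\phi=\pi/2$ enjoys the extra power $O(n^{-3})$ for $g$; note, though, that for the downstream application (Theorem \ref{th2} only requires $f_n=O(n^{-\alpha})$ with $\alpha>\frac32$) the weaker $O(n^{-2})$ rate already suffices, so nothing later in the paper hinges on the stronger exponent.
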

\begin{proof}
Consider the Fourier expansion $\chi_{I_1}(z) = \sum_n \g_n z^n$. Then
\[
\g_n = \frac1{2\pi}\int_0^{\frac{\pi}2} e ^{-in\th}d\th =
\frac1{2\pi}\frac{i}{n} (e^{-in \pi/2} - 1)\, .
\]
Thus
\[
g(z) = \sum_p\Big(\big(\frac14\g_{2-p} +\frac12\g_{-p}+
\frac14\g_{-2-p}\big) - \big(\frac14\g_{2-p} -\frac12\g_{-p}+
\frac14\g_{-2-p}\big)(-1)^p + \g_p\Big)z^p\, ,
\]
so $g_n =0$ when $n$ is even as $\g_n = - \g_{-n}$ if $n\geq 0$ is even.

When $n$ is odd
\begin{align}
g_n  &= \frac12(\g_{2-n}  + \g_{-2-n}) + \g_n\\
&= \frac{i}{2\pi}(e^{\frac{-i\pi n}2} - 1)\Big( \frac1n -
\frac12\big(\frac1{n-2} + \frac1{n+2}\big)\Big) =
\frac{i}{2\pi}(e^{\frac{-i\pi n}2} - 1)\frac{-4}{n(n^2 - 4)} =
O(n^{-3})\, .
\end{align}
As $h(z) = \sum_n (-\g_{-2-n} + \g_{2-n})z^n$, we have
\[
h_n  = -\g_{-2-n} +\g_{2-n} = O(n^{-2})\, .
\]
\end{proof}

\subsubsection{General symmetric interval case}\label{general}
On $S^1$, we consider the following general four ``symmetric
intervals"
\begin{align}\label{9f}
I_1 =\{ e^{i\th} : 0 <\th<\phi\}\, , \quad  I_2 &=\{ e^{i\th} : \phi-\pi <\th< 0\}\, , \\
-I_1 =\{ e^{-i\th} : 0 <\th< \phi\}\, , \quad  -I_2 &=\{ e^{-i\th} :
\phi-\pi <\th< 0\}\, , \quad 0<\phi<\pi\, .
\end{align}
We note that all results in this section simplify in the previous
section when $\phi= \pi/2$.

Denote by $I_0:=\{ e^{i\th} : 0 <\th< 2\phi \}$. We shall
consider the action of $SU(1,1)$ on $S^1$ which is given by
$z\rightarrow \frac{az +b}{\bar{b} z+ \bar{a}}$ with $|a|^2-|b|^2=
\pm 1$. The M\"obius group $\mob$ is the subgroup of $SU(1,1)$ of elements with
determinant $|a|^2-|b|^2= 1.$ The action $z\rightarrow \frac{1}{z}$
is orientation reversing.

If $m(z)= \frac{az +b}{\bar b z+ \bar a}$, the unitary action of $m$ on
$S^1$ is given by (See Section $4$ of \cite{Was})
\ben\label {10f}
(U_m  f) (z) = (\alpha -\bar\beta z)^{-1} f(m^{-1}z) \, .
\een
Since $ (a -\bar{b} z)^{-1}$ is holomorphic for $|z|<1$ and
$|a|>|b|$, $U_m$ commutes with the Hardy space projection $P$. The
flip map $(F_1 f)(z)= \frac{1}{z}f(\frac{1}{z})$ clearly satisfies
$PF_1P= 1-P$. By combining this we get an action of $SU(1,1)$ on
$\H$ which is of the form
\ben\label {11f}
(U_m f) (z) =  \alpha_m(z) f(m^{-1}z) \, ,
\een
 where $m
(z)= \frac{az +b}{\bar b z+ \bar a}, \alpha_m(z)= (\alpha -\bar\beta
z)^{-1}$.

Let $m\in \mob$ be such that $m I_0$ is the upper half circle. Let
$m_1= m^{-1} F_1 m$. It is straightforward to see that
\ben\label{12f} m_1(e^{i\phi})= \frac{z(1+e^{2i\phi})/2 - e^{2i\phi}}{z-
(1+e^{2i\phi})/2 } \, .
\een
Define
\ben\label {13f}
(F_0  f) (z) =  \alpha_{m_1}(z) f(m_1^{-1}z)\, .
\een
Using \eqref{3f} and \eqref{4f}, we can transform the diagonal
action of $F_0$ on $\H\oplus\H$ to an action $j$ on $\H$ such
that $j  = \b F_0 \b^{-1}$.

We then arrive at the following expression
\ben\label{14f} (jf)(z) =
\alpha_{m_1}(z^2) \left(\frac{1}{2} + \frac{z}{2u}\right)f(u) +
\alpha_{m_1}(z^2)
 \left(\frac{1}{2} - \frac{z}{2u}\right)f(-u)\, ,
\een
where $u^2= m_1(z^2)$.

Note that $j$ maps $L^2(I_1)$ to $L^2(I_2 \cup  - I_2)$.  We will
denote by $M_I$ the multiplication operator by $\chi_I$, the
characteristic function of interval $I$.

By definition, $M_{I_1}$ and $jM_{I_1}j$ are orthogonal projections
whose ranges $L^2(I_1)$ and  $jL^2(I_1)$ are also orthogonal. Hence
$P_{12} := M_{I_1} + jM_{I_1}j$ is a projection.

Note that $L^2(I_1)\oplus jL^2(I_1)$ is the canonical type $I$
standard subspace which is intermediate between $L^2(I_1)$ and
$L^2(I_1 \cup I_2 \cup - I_2)$.
\begin{theorem}\label{th11}
\[
P_{12} = M_g + M_h R \quad {\rm on}\ L^2(S^1)\, ,
\]
where
\begin{gather}
g(z) = \frac{(u+z)^2}{4uz}  \chi_{I_1}(u) -  \frac{(u-z)^2}{4uz} \chi_{I_1}(-u) + \chi_{I_1}(z) \, ,\\
h(z) =\frac{z^2-u^2}{4uz}( \chi_{I_1}(u)- \chi_{I_1}(-u)) \, ,\\
(R(f)(z) = f(-z),  u^2= m_1(z^2)\, .
\end{gather}
\end{theorem}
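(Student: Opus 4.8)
The plan is to follow verbatim the strategy of the special case (Theorem \ref{th1}): since $P_{12} = M_{I_1} + jM_{I_1}j$ and $M_{I_1}$ is just multiplication by $\chi_{I_1}(z)$, which accounts for the last summand $\chi_{I_1}(z)$ of $g$ and contributes nothing to $h$, everything reduces to computing $jM_{I_1}j$ explicitly from the formula \eqref{14f} for $j$ and reading off its two pieces: the multiplication part (completing $g$) and the part proportional to $R$ (giving $h$). The structural inputs I will need are that $m_1$ is an involution and the associated cocycle identity for the prefactors $\alpha_{m_1}$.

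First I would apply $j$, multiply by $\chi_{I_1}$, and apply $j$ again. Writing $(jf)(w) = \alpha_{m_1}(w^2)\big[(\tfrac12 + \tfrac{w}{2v})f(v) + (\tfrac12 - \tfrac{w}{2v})f(-v)\big]$ with $v^2 = m_1(w^2)$, the outer application of $j$ at $z$ produces the values $(jf)(\pm u)$, with $u^2 = m_1(z^2)$, each multiplied by $\chi_{I_1}(\pm u)$ coming from the middle factor $M_{I_1}$. The decisive simplification is that $m_1$ is an involution: since $m_1 = m^{-1}F_1 m$ with $F_1^2 = \mathrm{id}$, one has $m_1^2 = \mathrm{id}$, so the inner application of $j$ at $\pm u$ carries an argument $v$ with $v^2 = m_1(u^2) = m_1(m_1(z^2)) = z^2$. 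Hence $v = \pm z$ and the whole expression collapses onto the two values $f(z)$ and $f(-z)$. Because \eqref{14f} is manifestly invariant under $u \mapsto -u$ (the two summands merely swap), the branch of $v$ may be taken to be $z$ in both inner evaluations with no ambiguity; this sign-independence is what keeps the branch bookkeeping harmless.

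Collecting the coefficient of $f(z)$ then yields $\chi_{I_1}(u)(\tfrac12+\tfrac{z}{2u})(\tfrac12+\tfrac{u}{2z}) + \chi_{I_1}(-u)(\tfrac12-\tfrac{z}{2u})(\tfrac12-\tfrac{u}{2z})$, where the two products equal $\frac{(u+z)^2}{4uz}$ and $-\frac{(u-z)^2}{4uz}$ respectively, giving exactly $g(z) - \chi_{I_1}(z)$; collecting the coefficient of $f(-z)$ gives $\chi_{I_1}(u)\frac{z^2-u^2}{4uz} - \chi_{I_1}(-u)\frac{z^2-u^2}{4uz} = h(z)$, since the corresponding products are $\frac{z^2-u^2}{4uz}$ and $-\frac{z^2-u^2}{4uz}$. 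Adding back $M_{I_1}$ restores the $\chi_{I_1}(z)$ term and completes $g$, so $P_{12} = M_g + M_h R$ and the theorem follows; specialising $\phi = \pi/2$, $m_1 = F_1$, $u = z^{-1}$ reproduces the displayed computation in the proof of Theorem \ref{th1}.

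The step I expect to be the main obstacle, or at least the one requiring care, is justifying that the two overall prefactors $\alpha_{m_1}(z^2)$ (from the outer $j$) and $\alpha_{m_1}(u^2)$ (from the inner $j$) multiply to $1$, so that no spurious conformal weight survives in front of the answer. This is precisely the cocycle/involution identity $\alpha_{m_1}(z^2)\,\alpha_{m_1}(m_1(z^2)) = 1$, which expresses that the unitary $F_0 = U_{m_1}$ of \eqref{13f} squares to the identity, $U_{m_1}^2 = U_{m_1^2} = U_{\mathrm{id}} = 1$; given the cocycle relation satisfied by $\alpha$ under the action \eqref{11f}, this holds because $m_1^2 = \mathrm{id}$. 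Everything else is purely the algebraic bookkeeping of the four products $(\tfrac12 \pm \tfrac{z}{2u})(\tfrac12 \pm \tfrac{u}{2z})$ together with correct tracking of $\chi_{I_1}(\pm u)$.
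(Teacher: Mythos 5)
Your proposal is correct and takes essentially the same approach as the paper: the paper's entire proof of Theorem \ref{th11} is the single line that it ``follows directly from \eqref{14f}'', and your computation supplies exactly the omitted bookkeeping. Your key steps check out — $m_1^2 = \mathrm{id}$ (so $v^2 = z^2$), the harmlessness of the branch choice by the $u \mapsto -u$ symmetry of \eqref{14f}, the prefactor cancellation $\alpha_{m_1}(z^2)\,\alpha_{m_1}(u^2) = 1$ from $F_0^2 = 1$, and the four coefficient products $\bigl(\tfrac12 \pm \tfrac{z}{2u}\bigr)\bigl(\tfrac12 \pm \tfrac{u}{2z}\bigr)$ reproducing $\frac{(u+z)^2}{4uz}$, $-\frac{(u-z)^2}{4uz}$, and $\pm\frac{z^2-u^2}{4uz}$ as claimed.
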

\begin{proof}
This follows directly from  \eqref{14f}.
\end{proof}
To get a generalisation of  Prop. \ref{prop1} we need the following
lemma:

\begin{lemma}\label{lemmaprop2}
Let $f(z)$ be a continuous function, where $z=e^{i\theta}$ and the
support of $f$ lies in   $\theta \in [a,b] \subset [0,2\pi].$
Suppose that $f''$ exists except for finitely many points in $[a,b], $
and $|f''(z)|\leq M < \infty$. Suppose that  $f= \sum_n f_n z^n,
z=e^{i\theta}$. Then $f_n= O(n^{-2}).$
\end{lemma}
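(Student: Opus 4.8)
The plan is to estimate the Fourier coefficient directly from its integral representation
\[
f_n=\frac{1}{2\pi}\int_a^b g(\theta)\,e^{-in\theta}\,d\theta,\qquad g(\theta):=f(e^{i\theta}),
\]
by integrating by parts twice and exploiting the bound $|g''|\le M$. First I would record two structural facts. Since $f$ is continuous on $S^1$ and vanishes outside the proper subinterval $[a,b]\subset[0,2\pi]$, continuity forces $g(a)=g(b)=0$, which is exactly what will kill the boundary term of the first integration by parts. Second, if $a=t_0<t_1<\dots<t_k=b$ are the finitely many exceptional points where $g''$ fails to exist, then on each open interval $(t_{j-1},t_j)$ the function $g$ is twice differentiable with $|g''|\le M$; hence $g'$ is Lipschitz there and admits one-sided limits $g'(t_j^\pm)$ at every $t_j$, and $g$ itself is absolutely continuous on $[a,b]$.

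Integrating by parts once over $[a,b]$ (legitimate since $g$ is absolutely continuous) and using $g(a)=g(b)=0$, I obtain
\[
f_n=\frac{1}{2\pi i n}\int_a^b g'(\theta)\,e^{-in\theta}\,d\theta.
\]
For the second step I would \emph{not} integrate by parts over $[a,b]$ directly, because $g'$ may jump across the points $t_j$; instead I split the integral over the subintervals $[t_{j-1},t_j]$ and integrate by parts on each. This produces two kinds of contributions: a finite sum of boundary terms of the form $\tfrac{1}{-in}\,g'(t_j^\pm)\,e^{-int_j}$, and the remainder integrals $\tfrac{1}{-in}\int_{t_{j-1}}^{t_j} g''(\theta)e^{-in\theta}\,d\theta$, each bounded in modulus by $M(t_j-t_{j-1})/n$, so that their total is at most $M(b-a)/n$.

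The boundary sum partly telescopes: at each interior $t_j$ it collects the jump $\bigl(g'(t_j^-)-g'(t_j^+)\bigr)e^{-int_j}$, while at the endpoints it contributes $g'(b^-)e^{-inb}-g'(a^+)e^{-ina}$. Since there are only finitely many such terms, each of bounded modulus, the entire boundary sum is $\le C_0/n$ for a constant $C_0$ depending only on the (finite) jumps of $g'$. Combining the two contributions gives $\bigl|\int_a^b g'\,e^{-in\theta}\,d\theta\bigr|\le C/n$, and substituting into the displayed formula for $f_n$ yields $|f_n|\le C/(2\pi n^2)=O(n^{-2})$, as claimed.

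The one point requiring care — and the step I would flag as the main obstacle — is precisely the treatment of the exceptional points: because $g'$ may jump there, naive double integration by parts over $[a,b]$ is invalid, and one must keep the one-sided boundary values $g'(t_j^\pm)$. The decisive observation is that these jumps enter only at the \emph{second} integration by parts, so they are suppressed by a full extra factor $1/n$ relative to the leading term, and since they are finite in number they contribute a bounded constant rather than accumulating. This is exactly what upgrades the generic $o(1/n)$ decay of a function with corners to the sharp $O(n^{-2})$ estimate.
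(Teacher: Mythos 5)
Your proof is correct and follows essentially the same route as the paper: two integrations by parts, with the first boundary term killed by continuity and vanishing of $f$ at the endpoints of its support, the bound $|f''|\le M$ controlling the final integral, and the finitely many exceptional points handled by splitting the interval. If anything, your version is slightly more careful than the paper's sketch, since by performing the first integration by parts globally (via absolute continuity of $g$) and splitting only at the second step, you make explicit that the jumps of $g'$ enter with an extra factor $1/n$ and so do not spoil the $O(n^{-2})$ bound — a cancellation the paper's ``divide and apply the argument above'' leaves implicit.
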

\begin{proof}
First suppose that  $f''$ exists except at $a,b$. We have
\[
f_n =
\frac{1}{2\pi} \int_0^{2\pi} f(z) z^{-n-1}dz =
\lim_{\epsilon\rightarrow 0^+} \frac{1}{2\pi}
\int_{a+\epsilon}^{b-\epsilon} f(z) z^{-n-1}dz \ .
\]
By doing integration by parts once and use the continuity of $f$ at
$a,b$  we get
$$f_n =
\frac{1}{n+2}\frac{1}{2\pi}\lim_{\epsilon\rightarrow 0^+}
\int_{a+\epsilon}^{b-\epsilon} f'(z) z^{-n-2}dz\, .
$$
Integration by
part one more time and use our assumption we conclude that
 $f_n= O(n^{-2}).$ In general if $f''$ does not exist for $z_1,...,
 z_k$, we can divide the interval $[a,b]$  into a union of finitely many intervals
 where $f''$ exists on each interval except at end points, and apply
 the argument above.
\end{proof}
We note that continuity assumption in the above Lemma is crucial.
For example $\chi_{I_1}$ verifies the assumption of the above
Lemma except continuity, and the conclusion of the Lemma does not
hold for  $\chi_{I_1}$.
\begin{proposition}\label{lp2}
Consider the Fourier series $g(z) = \sum_n g_n z^n$ and $h(z) =
\sum_n h_n z^n$ where $g,h$ are as in Th. \ref{th11}.  Then $|g_n| =
O(n^{-2})$ and $|h_n| = O(n^{-2})$.
\end{proposition}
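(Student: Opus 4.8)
The plan is to reduce the general symmetric interval case to the setup of Proposition \ref{prop1} by showing that the functions $g$ and $h$ appearing in Theorem \ref{th11} satisfy the hypotheses of Lemma \ref{lemmaprop2}. Recall that in the special case $\phi = \pi/2$ we obtained the sharper bound $|g_n| = O(n^{-3})$ by an explicit Fourier computation exploiting the symmetry $\gamma_n = -\gamma_{-n}$; in the general case that exact cancellation is lost, so I expect only the $O(n^{-2})$ decay that Lemma \ref{lemmaprop2} provides. First I would write $u = u(z)$, defined by $u^2 = m_1(z^2)$, as an explicit function of $z = e^{i\theta}$, noting that $m_1 = m^{-1}F_1 m$ is a composition of M\"obius transformations and the flip, hence real-analytic on $S^1$ away from the finitely many points where $m_1$ or the branch of the square root is singular. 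Thus $g$ and $h$, being rational expressions in $z$, $u$, and $u^{-1}$ multiplied by characteristic functions $\chi_{I_1}(\pm u)$, are smooth on the interior of each arc cut out by these characteristic functions.

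The key step is to verify the \emph{continuity} of $g$ and $h$ across the interval endpoints, since Lemma \ref{lemmaprop2} fails precisely when continuity is violated (as the remark after that lemma emphasises for $\chi_{I_1}$ itself). The idea is that the jumps introduced by the characteristic functions $\chi_{I_1}(u)$ and $\chi_{I_1}(-u)$ are exactly cancelled by the vanishing of their coefficients at the relevant endpoints. Concretely, at the boundary where $u$ enters or leaves $I_1$ one has $u = z$ or $u = -z$ (up to the symmetric reflections implemented by the map $z \mapsto z^2$ and the interval geometry \eqref{9f}); at such points the factor $(u+z)^2$ or $(u-z)^2$ in $g$, and the factor $z^2 - u^2 = (z-u)(z+u)$ in $h$, vanishes. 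Hence the coefficient multiplying each $\chi_{I_1}(\pm u)$ tends to zero at the very endpoint where that characteristic function switches on or off, so $g$ and $h$ extend continuously across all boundary points. This is the same mechanism that made $P_{12}$ a genuine projection with continuous symbol in the special case.

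Once continuity is established, I would check that $g''$ and $h''$ exist and are bounded on each open arc: this is immediate from the real-analyticity of $u(z)$ away from the finitely many singular points, together with the fact that $u$ stays bounded away from $0$ on the relevant arcs so that the factors $u^{-1}$ cause no blow-up. Applying Lemma \ref{lemmaprop2} to $g$ and to $h$ on each arc of their support then yields $|g_n| = O(n^{-2})$ and $|h_n| = O(n^{-2})$, completing the proof. The main obstacle is the continuity verification at the endpoints: one must track carefully how the arcs $I_1$, $I_2$, $-I_2$ and their preimages under $z \mapsto z^2$ interact with the M\"obius-transformed interval $m_1(z^2) \in I_1$, and confirm that in every case the vanishing of the prefactor matches the location of the discontinuity of the corresponding characteristic function. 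The boundedness of $g''$ and $h''$ is then routine.
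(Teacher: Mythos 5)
Your overall strategy coincides with the paper's: verify that $g$ and $h$ satisfy the hypotheses of Lemma \ref{lemmaprop2} (continuity, plus existence and boundedness of the second derivative away from finitely many points) and conclude the $O(n^{-2})$ decay, with the smoothness part handled exactly as in the paper via the rational map $m_1$ and a local smooth branch of $u=\sqrt{m_1(z^2)}$. However, the mechanism you propose for the crucial continuity check --- the step you yourself flag as the main obstacle --- is wrong at half of the boundary points. You claim that at every endpoint where $\chi_{I_1}(u)$ or $\chi_{I_1}(-u)$ switches on or off, the corresponding prefactor vanishes. This is correct for $h$, whose prefactor $\frac{z^2-u^2}{4uz}$ vanishes whenever $u=\pm z$, so $h$ extends continuously by $0$ at all four boundary points $\{1, e^{i\phi}, -1, -e^{i\phi}\}$; and it is correct for $g$ at $z=-1$ and $z=-e^{i\phi}$, where $\chi_{I_1}(z)\equiv 0$ nearby and (taking $u\to -z$ by the invariance of $g$ under $u\mapsto -u$) the factor $(u+z)^2$ indeed vanishes. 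But it fails for $g$ at the endpoints of $I_1$ itself, $z=1$ and $z=e^{i\phi}$: there one has $u\to z$ (up to the sign freedom), and the coefficient of $\chi_{I_1}(u)$ is $\frac{(u+z)^2}{4uz}\to 1$, not $0$. The jump of $\chi_{I_1}(u)$ at these two points is not suppressed by a vanishing prefactor; it is cancelled against the unit jump of the standalone term $\chi_{I_1}(z)$ in the formula for $g$ --- a term whose discontinuity your argument never accounts for.

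The paper handles exactly this case by computing both one-sided limits at $e^{i\phi}$: for $\theta\to\phi^-$ one has $z\in I_1$ and $u\in I_2\cup -I_2$, so $g(z)=\chi_{I_1}(z)=1$; for $\theta\to\phi^+$ one may assume $u\in I_1$ with $u\to e^{i\phi}$, so $g(z)=\frac{(u+z)^2}{4uz}\to 1$. Thus continuity at these points holds by a coefficient-tends-to-one cancellation, the opposite of what you assert. As written, your verification would leave $g$ with an unexamined (and, by your mechanism, uncancelled) jump of size $1$ at $z=1$ and $z=e^{i\phi}$, and the appeal to Lemma \ref{lemmaprop2} would be unjustified there --- recall the remark after that lemma that continuity is essential. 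The repair is routine: replace the uniform vanishing claim by a case-by-case check at the four points $\{1,e^{i\phi},-1,-e^{i\phi}\}$, using the invariance of $g$ under $u\mapsto -u$ to normalize the position of $u$, as the paper does; the rest of your argument (smoothness of $u(z)$ off the endpoints, boundedness of $g''$ and $h''$, and the application of Lemma \ref{lemmaprop2}) then goes through as in the paper.
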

\begin{proof}
It is sufficient to check that $g$ and $h$ verify Lemma
\ref{lemmaprop2}. Let us check this for $g$. First let us check
$g(z)$ is continuous. It suffices to do this at the end points of
intervals $I_1, I_2, -I_2$. At these endpoints by construction
$u=\pm z$, and it is easy to see that $g$ is continuous by using the
fact that $g$ is invariant under $u \rightarrow -u$. For an example
let us show that the right limit and left limit of $g$ are equal to
each other at $e^{i\phi}.$  Let $z=e^{i \theta}$ and suppose that
$\theta \rightarrow \phi^-$. Note that $u\in I_2$ or $u\in -I_2$,
hence $g(z)=1$. When  $\theta \rightarrow \phi^+$, since $g$ is
invariant under $u \rightarrow -u$, we can assume that $u\in I_1$
and $u\rightarrow e^{i\phi}.$ By the formula for $g(z)$ it follows
that  $\lim_{\theta \rightarrow \phi^+} g(z)=1$. Similarly one can
show continuity of $g$ at all other end points.

Let us show that $g''$ exists and is bounded except possibly at the
end points of intervals $I_1, I_2, -I_2$. Fix such $z$ which are not
the end points of intervals $I_1, I_2, -I_2$.

Note that $u^2= m_1(z^2), $  we can choose $u=\sqrt{m_1(z^2)}$ (cf.
equation (\ref{12f}) for an explicit formula of $m_1(z)$) that is
smooth in a neighbourhood of $z$. $g$ is independent of such choice.
It is also clear that $g''$ exists and is bounded on such points,
since $m_1(z)$ is a smooth rational function  on the unit circle
and $|m_1(z)|=1$. Similarly $h$ verify Lemma \ref{lemmaprop2}.
\end{proof}

\subsection{Hankel operator}
Let $K_1 = L^2(I_1) $, $K_2 = L^2(I_1\cup I_2\cup I_{-2}) $, $F  =
L^2(I_1) + jL^2(I_1)$ as in the beginning of section \ref{general}.
We shall prove that $(1 - P)P_F P \in \L^q$ for $\frac23 < q < 1$,
where $\L^q$ denotes the class of von Neumann-Schatten operators.
Recall that $\L^q$ is an ideal which consists of bounded operators
$T$ with
\[
||T||_q:= \big(\sum_n |\lambda_n|^q\big)^{\frac{1}{q}}< \infty \ ,
\]
where $\lambda_n$ are the singular values of $T$ (cf. Sect. \ref{Angle}).
\begin{lemma}\label{lemma1}
If $(1 - P)P_F P \in \L^q$, $0< q < 1$, then the von Neumann entropy
$S(\s) < \infty$, where $\s = P_F P_{F'} P_F$ .
\end{lemma}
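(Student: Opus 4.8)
The plan is to turn the Schatten hypothesis on the Hankel-type operator $(1-P)P_F P$ into a summability statement for the eigenvalues $\l_n$ of $\s=P_FP_{F'}P_F$, and then to exploit that membership in $\L^q$ with $q<1$ is strictly stronger than trace class: it is precisely what makes the entropy series $-\sum_n\l_n\log\l_n$ converge.

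First I would relate $(1-P)P_F P$ to the commutator of $P_F$ with the complex structure $i(2P-1)$, that is, to the operator ``$[P_F,i]$'' of Corollary \ref{pp'}. By Theorem \ref{th11} we have $P_F=M_g+M_hR$, a combination of multiplication operators and the parity map $R$, so $P_F$ commutes with pointwise scalar multiplication by $i$; hence
\[
[P_F,\,i(2P-1)] = i\,[P_F,\,2P-1] = 2i\,[P_F,P].
\]
Writing operators in block form along $\H=P\H\oplus(1-P)\H$, the commutator $[P_F,P]$ is off-diagonal, with blocks $(1-P)P_F P$ and $-PP_F(1-P)=-\big((1-P)P_F P\big)^{*}$; its singular values are therefore those of $(1-P)P_F P$, each doubled. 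Consequently $(1-P)P_F P\in\L^q$ forces $[P_F,i(2P-1)]\in\L^q$.

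Next I would pass to the modular side. The proof of Corollary \ref{pp'} gives $[P_F,i]=J_F A_F^{1/2}$ with $A_F=4\Delta_F/(\Delta_F+1)^2$, so the singular values of $[P_F,i(2P-1)]$ are the square roots of the eigenvalues of $A_F$, and these are, up to multiplicities, the numbers $\l_n^{1/2}$, since $A_F|_F=P_FP_{F'}|_F$ has the same eigenvalues as $\s$. Thus the hypothesis yields $\sum_n\l_n^{q/2}<\infty$, i.e. $\s\in\L^{p}$ with $p:=q/2<1$.

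It remains to deduce finiteness of the entropy from $\s\in\L^{p}$, $p<1$. Since $F$ is factorial, $1\notin\sp(\Delta_F)$, so the $\l_n$ lie in $(0,1)$ and every term $-\l_n\log\l_n$ is positive. As $\l\to0^{+}$ one has $-\l\log\l=\l^{p}\cdot(-\l^{1-p}\log\l)$ with $-\l^{1-p}\log\l\to0$ because $1-p>0$; hence $-\l_n\log\l_n\le\l_n^{p}$ for all large $n$, and the convergence of $\sum_n\l_n^{p}$ gives $S(\s)=-\sum_n\l_n\log\l_n<\infty$. I expect the only delicate bookkeeping to be the singular-value multiplicities in the two identifications above, which however enter through finite factors and do not affect Schatten-class membership; the essential point is that the hypothesis is assumed with $q<1$ rather than merely $q=1$, which is exactly the gap between the type $I$ (finite trace) property and finiteness of the entanglement entropy.
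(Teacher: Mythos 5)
Your proof is correct, and it takes a genuinely different middle route from the paper's. The paper argues purely algebraically: it writes $P_{F'}=(2P-1)(1-P_F)(2P-1)$ — which, just as in your first step, tacitly uses that $P_F=M_g+M_hR$ commutes with pointwise multiplication by $i$; you make this explicit, the paper does not — and then invokes the one-line identity $\s=P_F(2P-1)(1-P_F)(2P-1)P_F=4\,[P_F,P]\,(1-P_F)\,[P,P_F]$, so that the $\L^q$ quasi-triangle inequality and the ideal property give $\s\in\L^q$ with no modular theory at all. You instead feed the same commutator $[P_F,P]$ into the standard-subspace machinery of Section 2, via $[P_F,i(2P-1)]=J_FA_F^{1/2}$ from the proof of Corollary \ref{pp'} and the identification $A_F|_F=P_FP_{F'}|_F$ of Lemma \ref{FG}. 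All of your intermediate steps check out: the block-matrix computation showing $[P_F,P]$ has the singular values of $(1-P)P_FP$ doubled, the antiunitarity of $J_F$ preserving singular values, the real-versus-complex multiplicity bookkeeping (which is exactly what Lemma \ref{FG} is for), and the closing bound $-\l\log\l\le\l^{p}$ near $0$, which is the paper's own final estimate, used there with exponent $q$ where you use $p=q/2$. Indeed your route yields the marginally stronger conclusion $\s\in\L^{q/2}$, since you pass through $A_F^{1/2}\in\L^q$; the paper's factorized identity would give the same via the Schatten H\"older inequality, but it neither states nor needs it. The trade-off is clear: the paper's proof is shorter and self-contained within elementary Schatten calculus, while yours exhibits the connection to the abstract angle-operator and modular picture and surfaces the model-specific commutation $[P_F,M_i]=0$ on which the paper's first displayed identity silently relies.
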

\begin{proof}
We note that
\begin{align*}
P_F P_{F'}P_F &= P_F((2P -1)(1 - P_F)(2P- 1)P_F\\
&= 4[P_F , P](1 - P_F) [P, P_F]
\end{align*}
and
\[
[P_F , P] = (1 -P) P_F P - P P_F(1-P) = (1 -P) P_F P -\big((1 -P) P_F P\big)^*.
\]
Hence, if $(1 -P) P_F P\in \L^q$, then $ P_F P_{F'}P_F\in \L^q$.

Let $\{\l_n\}$ be the singular values of $P_F P_{F'}P_F$; then
$S(P_F P_{F'}P_F) = -\sum_n\l_n\log\l_n$. Since $\sum_n\l_n^q
<\infty$, and $-x\log x < x^q$ when $x$ is close to zero, the lemma
is proved.
\end{proof}
By Lemma \ref{lemma1}, we need to look at
\[
(1 -P) P_F P = (1 -P) P_{12} P = (1-P)M_g P + (1-P)M_h RP\, .
\]
Note that $RP = PR = (1-P)M_g P + (1-P)M_h RP$.

In terms of the basis ${z^n}$ of $L^2(S^1)$, we have
\[
P M_g (1-P)(z^n) = \sum_{k\geq 0} g_{k-n}z^k \ ,
\]
these are Hankel operators (cf. \cite{Peller}).
\begin{theorem}\label{th2}
Suppose $f =\sum_n f_n z^n$, $f_n = O(n^{-\a})$ with $\a > \frac32$.
Then $P M_f (1-P)\in \L^q$ with $1 > q > \frac1{\a -\frac12}$.
\end{theorem}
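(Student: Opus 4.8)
The plan is to use the Hankel structure of $PM_f(1-P)$ directly, decomposing it dyadically along its anti-diagonals and estimating each block by interpolating between its Hilbert--Schmidt norm and its rank. First I would write the matrix explicitly. In the orthonormal basis $\{z^n\}_{n\in\ZZ}$ of $L^2(S^1)$, $P$ is the projection onto $\{z^k:k\geq0\}$ and $1-P$ onto $\{z^m:m\leq-1\}$. Since $M_f z^m=\sum_n f_n z^{n+m}$, one gets for $k\geq0$, $m\leq-1$
\[
\big(z^k,\,PM_f(1-P)z^m\big)=f_{k-m}\ ,
\]
so $\Gamma:=PM_f(1-P)$ is the Hankel operator whose entry at position $(k,m)$ depends only on $d:=k-m\geq1$ and equals $f_d$. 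The $d$-th anti-diagonal $\{(k,m):k-m=d,\ k\geq0,\ m\leq-1\}$ contains exactly $d$ entries, each equal to $f_d$, with $|f_d|\leq C d^{-\a}$ by hypothesis.

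Next I would organise the anti-diagonals into dyadic blocks: for $N\geq0$ let $\Gamma_N$ be the part of $\Gamma$ supported on $d\in[2^N,2^{N+1})$, so that $\Gamma=\sum_{N\geq0}\Gamma_N$. Two quantities control each block. Its Hilbert--Schmidt norm satisfies
\[
\|\Gamma_N\|_2^2=\sum_{d=2^N}^{2^{N+1}-1} d\,|f_d|^2\leq C^2\sum_{d=2^N}^{2^{N+1}-1} d^{\,1-2\a}\leq C' 2^{\,2N(1-\a)}\ ,
\]
using $1-2\a<0$, while its rank is at most $2^{N+1}$, since every non-zero entry lies in the corner $0\leq k<2^{N+1}$, $0\leq -m<2^{N+1}$. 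For a finite-rank operator of rank $r$, Hölder's inequality applied to the singular values (with exponents $2/q$ and $2/(2-q)$) gives $\|\Gamma_N\|_q\leq\|\Gamma_N\|_2\, r^{1/q-1/2}$, whence
\[
\|\Gamma_N\|_q\leq C'' 2^{\,N(1-\a)}\,2^{\,N(1/q-1/2)}=C'' 2^{\,N(1/2-\a+1/q)}\ .
\]

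Finally I would sum the blocks. For $0<q\leq1$ the functional $\|\cdot\|_q^q=\Tr|\cdot|^q$ is subadditive, so
\[
\|\Gamma\|_q^q\leq\sum_{N\geq0}\|\Gamma_N\|_q^q\leq (C'')^q\sum_{N\geq0}2^{\,N[\,1+q(1/2-\a)\,]}\ ,
\]
and this geometric series converges precisely when $1+q(1/2-\a)<0$, i.e. when $q(\a-\tfrac12)>1$, that is $q>\frac{1}{\a-1/2}$. Since $\a>\frac32$ the threshold is $<1$, so the entire range $\frac{1}{\a-1/2}<q<1$ is admissible and $\Gamma\in\L^q$ there, as claimed.

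I expect the genuinely load-bearing step to be the interpolation $\|\Gamma_N\|_q\leq\|\Gamma_N\|_2\,r^{1/q-1/2}$ together with the sharp counting of the anti-diagonal lengths: it is the exact balance between the Hilbert--Schmidt size $2^{N(1-\a)}$ and the rank $2^N$ that reproduces the precise exponent $\frac{1}{\a-1/2}$, rather than a weaker sufficient condition. The only other point requiring care is the $q$-triangle inequality $\|\sum_N\Gamma_N\|_q^q\leq\sum_N\|\Gamma_N\|_q^q$ for $q<1$, which is a standard property of the Schatten quasi-norms (cf. \cite{Peller}).
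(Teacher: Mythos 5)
Your argument is correct and yields exactly the theorem as stated, but it follows a genuinely different route from the paper, which gives two proofs of its own. The paper's first proof (following Howland) decomposes the Hankel operator \emph{column by column} into rank-one pieces, $PM_f(1-P)=\sum_{n>0}(\,\cdot\,,z^{-n})\,\xi_n$ with $\xi_n=\sum_{k\geq0}f_{k+n}z^k$, and applies the same $q$-subadditivity $\|T+R\|_q^q\leq\|T\|_q^q+\|R\|_q^q$ of \cite[Appendix 1]{Peller} directly to these rank-one summands; since $\|\xi_n\|=\big(\sum_{k\geq0}|f_{k+n}|^2\big)^{1/2}=O(n^{-\a+\frac12})$, it gets $\|PM_f(1-P)\|_q^q\leq\sum_n\|\xi_n\|^q<\infty$ precisely when $q(\a-\tfrac12)>1$ --- the same threshold you obtain, so your claim that only the dyadic rank/Hilbert--Schmidt balance reproduces the sharp exponent is a little overstated. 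The paper's second proof instead verifies Peller's Besov-space criterion for $f^1(z)=\sum_{n\geq0}f_{n+1}z^n$ by integral estimates on the disk. Your dyadic anti-diagonal blocking, with the interpolation $\|\Gamma_N\|_q\leq\|\Gamma_N\|_2\,r^{1/q-1/2}$ (H\"older with exponents $2/q$ and $2/(2-q)$ on the singular values), is essentially the Littlewood--Paley-style mechanism underlying Peller's theorem carried out at the matrix level: slightly longer than Howland's decomposition, but more robust, since it uses only the averaged size $\|\Gamma_N\|_2^2=\sum_{d\sim 2^N}d|f_d|^2$ and the rank bound, and so would survive weaker, merely $\ell^2$-averaged hypotheses on the coefficients. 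All your individual steps check: the entry formula $(z^k,\Gamma z^m)=f_{k-m}$, the anti-diagonal length $d$, the bounds $\|\Gamma_N\|_2^2\leq C'2^{2N(1-\a)}$ and $\mathrm{rank}\,\Gamma_N\leq 2^{N+1}$ from the corner support, and the $q$-triangle inequality for $0<q<1$; the only point you leave implicit --- that $\sum_N\Gamma_N$ actually converges to $\Gamma$ in $\L^q$ --- is harmless, since the quasi-norm series is summable, $\L^q$ is complete, and the partial sums converge to $\Gamma$ in operator norm.
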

\begin{proof}
(First proof.) As in \cite{How}, let $\xi_n = \sum_{k \geq 0}
f_{k+n}z^k$, $ n>0$,
\[
P M_f (1-P)(z^n) = \sum_{k \geq 0} f_{k-n}z^k = \xi_{-n},\quad n<0\ .
\]
It follows that
\[
P M_f (1-P) = \sum_{n< 0} (\cdot\, ,z^n )\xi_{-n} =
\sum_{n> 0} (\cdot\, ,z^{-n}) \xi_{n}\ .
\]
By \cite[Appendix 1]{Peller}, we have $||T + R||^q_q \leq ||T ||^q_q + ||R||^q_q$ for $0 < q < 1$.
It follows that
\[
||P M_f (1-P)||^q_q \leq \sum_{n\geq 0} ||\xi_n||^q\ .
\]
Note that $||\xi_n|| = \big(\sum_{k\geq 0}|f_{k+n}|^2\big)^\frac12 =
O(n^{-\a + \frac12})$, so the theorem follows.
\smallskip

\indent (Second proof.) According to Page 243 of \cite{Peller} $P
M_f (1-P)\in \L^q$ if the function $f^1(z) = \sum_{n\geq 0}f_{n+1}
z^n$ is in the Besoz space $B_q$, i.e.
\[
\big( 1 - |z|^2\big)^k {f^1}^{(k)} (z) \in L^p(\mathbb D, dA/(1 - |z|^2 ) \ ,
\]
for any integer $pk>1$. Here $dA$ is the Lebesgue measure on the disk $\mathbb D$.

Let us show that if $f_n = O(n^{-\a})$, then $f^1\in B_q$.

Let $k = k_1 + k_2$ with $qk_1 >1$, $0<q<1$. Then, by the Cauchy-Schwartz inequality, we have
\begin{multline*}
\iint_{\mathbb D}(1 -|z|^2)^{k_1 q}\big((1 -|z|^2)^{k_2}| {f^1}^{(k)}(z)\big)^q   \frac1{(1-|z|^2)^2} dA
\\ \leq
\left(\iint_{\mathbb D}\frac{(1 -|z|^2)^{k_1 q}}{(1 - |z|^2)^2}
\big( (1 -|z|^2)^{k_2}| {f^1}^{(k)}(z)|\big)^2dA\right)^\frac12
\left(\iint_{\mathbb D}(1 -|z|^2)^{k_1 q - 2}dA\right)^\frac12 \ .
\end{multline*}
It is sufficient to check that
\[
\iint_{\mathbb D}(1 -|z|^2)^{k_1 q - 2}\big((1 -|z|^2)^{2k_2}| {f^1}^{(k)}(z)|\big)^2 dA< \infty\ .
\]
We have
\begin{align*}
\iint_{\mathbb D}(1 -|z|^2)^{k_1 q - 2} & \big((1 -|z|^2)^{2k_2}| {f^1}^{(k)}(z)|\big)^2 dA\\
&\leq c_1 \iint_{\mathbb D}(1 -|z|^2)^{k_1 q - 2 + 2 k_2}\sum_{n\geq 1} n^{2k - 2\a}|z|^{2n}dA\\
&\leq c_2 \sum_{n\geq 1} n^{2k - 2\a} \cdot  \iint_{\mathbb D}(1 -|z|^2)^{k_1 q - 2 + 2 k_2} dA
\iint_{\mathbb D}(1 -|z|^2)^{k_1 q - 2 + 2 k_2}dA \ .
\end{align*}
Using polar coordinates, we have
\begin{align*}
\iint_{\mathbb D}(1 -|z|^2)^{k_1 q - 2 + 2 k_2}|z|^{2n}dA
&= 2\pi \int_0^1 (1 -r^2)^{k_1 q - 2 + 2 k_2} r^{2n} rdr \\
&= \frac{2\pi}{2} \int_0^1 (1 -r^2)^{k_1 q - 2 + 2 k_2} r^{n} dr  \\
&= \pi\,  \frac{\Ga( {k_1 q + 2 k_2 -1)}\Ga(n+1)   }{  \Ga (k_1 q + 2 k_2 -1 + n -1) }\ .
\end{align*}
By using Sterling's formula $\Ga(n+1) \sim \sqrt{2\pi n}
\left(\frac{n}{e}\right)^n$, we have
\[
\frac{\Ga( {k_1 q + 2 k_2 -1)}\Ga(n+1)   }{  \Ga (k_1 q + 2 k_2 -1 + n -1) }\leq c_3 \,n^{1 - k_1 q - 2k_2}
\]
when $n\to\infty$.

It follows that
\[
\iint_{\mathbb D}(1 -|z|^2)^{k_1 q - 2}  \big((1 -|z|^2)^{2k_2}| {f^1}^{(k)}(z)|\big)^2 dA
\leq c_4  \sum_{nn\geq 1} n^{2k - 2\a +1  - k_1 q - 2k_2} < \infty
\]
if $-k_1 q + 2k_2 -1  + 2 \a - 2k >1$. Note that $k_1q > 1$.

These two inequalities imply that
\[
\frac1q < k_1 < \frac{2\a -2}{2-q} \ ,
\]
which is possible iff $  \frac{2\a -2}{2-q} - \frac1q > 0$ iff $q > \frac1{\a - \frac12}$\, .
\end{proof}

\subsection{Finiteness of von Neumann entropy for free Fermi nets}
Let $K_1 = L^2(I_1)$, $K_2 = L^2(I_1\cup I_2 \cup -I_2)$, $F =
L^2(I_1) + j L^2(I_1)$  as in the beginning of section
\ref{general}. Note that we have the inclusion
\[
\M(K_1) \subset \M(F) \subset \M(K_2) \ .
\]
\begin{theorem}\label{th3}
\begin{itemize}
\item[$(1)$]  $(1 - P)P_F P \in \L^q$ for $\frac23 < q < 1$ \ .
\item[$(2)$]
The von Neumann entropy associated with $\M(F), \Omega$, denote by
$S(F)$, is finite.
\end{itemize}
\end{theorem}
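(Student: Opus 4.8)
The plan is to derive part $(2)$ from part $(1)$ using the abstract machinery already in place, and to prove part $(1)$ by feeding the explicit description of $P_F$ into the Hankel operator estimate of Theorem \ref{th2}. The bridge between the geometric object $F$ and one-variable harmonic analysis is that, since $F = L^2(I_1) + jL^2(I_1)$ is an \emph{orthogonal} direct sum, its real orthogonal projection is exactly $P_{12} = M_{I_1} + jM_{I_1}j$, and Theorem \ref{th11} gives the concrete formula $P_F = M_g + M_h R$ with the displayed $g,h$.

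For part $(1)$, I would first expand
\[
(1-P)P_F P = (1-P)M_g P + (1-P)M_h R P .
\]
Because $R$ acts as $(-1)^n$ on the basis $\{z^n\}$, it commutes with the Hardy projection $P$ and is unitary; hence $(1-P)M_h R P = \big((1-P)M_h P\big) R$, and right multiplication by $R$ preserves every class $\L^q$. Both $(1-P)M_g P$ and $(1-P)M_h P$ are the adjoints of the Hankel operators $P M_{\bar g}(1-P)$ and $P M_{\bar h}(1-P)$ of the form treated in Theorem \ref{th2}, and passing to adjoints leaves the $\L^q$ quasi-norm unchanged; moreover $\bar g,\bar h$ inherit the coefficient decay of $g,h$. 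By Proposition \ref{lp2} one has $|g_n| = O(n^{-2})$ and $|h_n| = O(n^{-2})$, so Theorem \ref{th2} applies with $\a = 2$, giving membership in $\L^q$ for every $q > \tfrac{1}{\a - 1/2} = \tfrac23$. Finally, since $\L^q$ with $0<q<1$ satisfies the quasi-triangle inequality $\|T+S\|_q^q \leq \|T\|_q^q + \|S\|_q^q$ (already used in the first proof of Theorem \ref{th2}), the sum of the two terms lies in $\L^q$ for $\tfrac23 < q < 1$, which is part $(1)$.

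For part $(2)$, I would apply Lemma \ref{lemma1} directly to part $(1)$: choosing any $q \in (\tfrac23,1)$ gives $(1-P)P_F P \in \L^q$ with $q<1$, whence the one-particle entropy $S(\s) = S(P_F P_{F'} P_F) < \infty$, i.e.\ $S(F) < \infty$ in the sense of \eqref{SF}. In particular $P_F P_{F'} P_F$ is trace class, so $F$ is of type $I$ by Corollary \ref{pp'} and the Fermi density matrix $\r'_F$ is well defined; Corollary \ref{S(a)} then upgrades the finiteness of $S(F)$ to finiteness of the Fermi entanglement entropy $S(\r'_F)$, the von Neumann entropy associated with $\M(F)$ and $\Omega$. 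The genuinely hard work lies upstream rather than here: the main obstacle is the decay $|g_n|,|h_n| = O(n^{-2})$ of Proposition \ref{lp2} (resting on the continuity and piecewise $C^2$ regularity controlled by Lemma \ref{lemmaprop2}) together with the sharp Hankel--Schatten estimate of Theorem \ref{th2}; once those are available, Theorem \ref{th3} is an assembly, the only delicate points being the bookkeeping with $R$, the passage to adjoints, and the use of the $\L^q$ quasi-triangle inequality for $q<1$.
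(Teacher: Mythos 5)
Your proposal is correct and takes essentially the same route as the paper, whose proof of Theorem \ref{th3} is precisely the assembly you describe: part $(1)$ from Theorem \ref{th11}, Proposition \ref{lp2} and Theorem \ref{th2} (with $\a=2$, giving $q>\tfrac23$), and part $(2)$ from $(1)$, Lemma \ref{lemma1} and Corollary \ref{S(a)}. Your extra bookkeeping --- commuting $R$ past $P$, passing to adjoints to reach the Hankel form $PM_f(1-P)$, and invoking the $\L^q$ quasi-triangle inequality for $q<1$ --- simply makes explicit what the paper leaves implicit.
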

\begin{proof}
$(1)$ follows by Theorem \ref{th11}, Proposition \ref{lp2} and
Theorem \ref{th2}. $(2)$ follows from $(1)$, Lemma \ref{lemma1} and Corollary \ref{S(a)}.
\end{proof}
\begin{remark}
(1) It is an interesting question to calculate the entropy in Th.
\ref{th3}. Note that by Th. 3.18  of \cite{LXu} that the  entropy in
Th. \ref{th3} is bounded from below by $\frac{1}{6}\ln \frac{1}{\cos
(\phi/2)}$. We expect that the theory of Hankel operators as in
\cite{Peller} will be useful in solving this question. For heuristic
computations using replica methods in some holographic models, see
\cite{Fal}.

\par

(2)  Since any open intervals $I,  \tilde I$ with $\bar{I}\subset \tilde I$ can be
mapped to the symmetric intervals  in the above theorem by elements
in $\mob$, it follows that the above theorem is also true for $K_1
=L^2(I)$, $K_2 = L^2(\tilde I).$ \end{remark}

\subsubsection{Real fermion case}
Define conjugate linear operator $Q= M_{z^{-1}} C$ on $\H$ where $C$
is complex  conjugation and  $M_{z^{-1}}$ is multiplication by
$z^{-1}$. We have $(Qf)(z) = z^{-1} \overline{f(z)}.$ On basis $z^n$
of $\H$ we have \ben\label {10g} Q(z^n)= z^{-n-1} \, . \een

\begin{lemma}\label{lemr}
(1) $Q^2=I$ and $Q$ commutes with $U_m$ as defined in equation
(\ref{10f}) and flip operator  $F_1$ as defined after  equation
(\ref{10f});
\par (2) $QPQ= 1-P, Q i(2P-I) Q =  i(2P-I)$;
\par (3) $QM_fQ = M_{\overline{f}}$ and $Q$ commutes with $P_{12}$ as in
Th. \ref{th3}.

\end{lemma}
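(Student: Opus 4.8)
The plan is to verify every assertion by direct computation on the unit circle, exploiting the single identity $\bar z = z^{-1}$ valid for $|z|=1$, together with the explicit formulas already recorded for $U_m$, $F_1$, $P$ and $P_{12}$.

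First I would record the elementary algebra of $Q$. Since $(Qf)(z) = z^{-1}\overline{f(z)}$, a one-line computation gives $(Q^2 f)(z) = z^{-1}\overline{z^{-1}\overline{f(z)}} = z^{-1}\cdot z\, f(z) = f(z)$, so $Q^2 = I$ (equivalently $Q(z^n)=z^{-n-1}$ and $Q^2(z^n)=z^n$ on the basis). For the commutation with $U_m$, I would substitute the explicit form \eqref{11f}, $(U_m f)(z) = \alpha_m(z)f(m^{-1}z)$ with $\alpha_m(z)=(\alpha-\bar\beta z)^{-1}$, and compare $(QU_m f)(z) = z^{-1}\overline{\alpha_m(z)}\,\overline{f(m^{-1}z)}$ with $(U_m Qf)(z) = \alpha_m(z)(m^{-1}z)^{-1}\,\overline{f(m^{-1}z)}$. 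These agree once one checks the scalar identity $\alpha_m(z)(m^{-1}z)^{-1} = z^{-1}\overline{\alpha_m(z)}$ for $|z|=1$; using $m^{-1}z = (\bar\alpha z-\beta)/(\alpha-\bar\beta z)$ and $\bar z = z^{-1}$ both sides collapse to $(\bar\alpha z-\beta)^{-1}$. The same substitution with $(F_1 f)(z)=z^{-1}f(z^{-1})$ yields $(QF_1 f)(z)=(F_1 Qf)(z)=\overline{f(z^{-1})}$, completing $(1)$.

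For $(2)$ I would evaluate $QPQ$ on the orthonormal basis. As $Q(z^n)=z^{-n-1}$ and $P$ projects onto $\{z^n:n\geq 0\}$, for $n\geq 0$ the exponent $-n-1$ is negative so $PQ(z^n)=0$, while for $n\leq -1$ one gets $PQ(z^n)=z^{-n-1}$ and then $QPQ(z^n)=z^n$; hence $QPQ$ is the projection onto $\{z^n:n\leq -1\}$, i.e. $1-P$. The second identity is then purely formal: $Q$ is antilinear, so $Q\,iX\,Q=-i\,QXQ$ for every complex-linear $X$, whence $Q\,i(2P-I)\,Q = -i(2QPQ-I) = -i(2(1-P)-I) = i(2P-I)$. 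For $(3)$, the identity $QM_fQ=M_{\overline f}$ is immediate from $(QM_fQ g)(z)=z^{-1}\overline{f(z)\,z^{-1}\overline{g(z)}}=\overline{f(z)}\,g(z)$, using $\overline{z^{-1}}=z$. To show $Q$ commutes with $P_{12}=M_g+M_h R$ (Theorem \ref{th11}), I would first compute directly that $QRQ=-R$ from $(Rf)(z)=f(-z)$, and then write $QP_{12}Q = M_{\overline g}+M_{\overline h}(QRQ) = M_{\overline g}-M_{\overline h}R$. It remains to observe that on $|z|=1$ the function $g$ is real-valued and $h$ is purely imaginary: both follow from the explicit formulas of Theorem \ref{th11} together with $|u|=1$ whenever $|z|=1$ (because $u^2=m_1(z^2)$ and $m_1$ maps the unit circle onto itself), which give $\overline{(u\pm z)^2/(4uz)}=(u\pm z)^2/(4uz)$ and $\overline{(z^2-u^2)/(4uz)}=-(z^2-u^2)/(4uz)$. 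Hence $\overline g=g$, $\overline h=-h$, and $QP_{12}Q=M_g+M_h R=P_{12}$.

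The genuinely routine parts are entirely formal; the one step requiring care is the commutation with $U_m$, where the cocycle $\alpha_m$ and the M\"obius inverse $m^{-1}$ must be tracked together, and the computation collapses only because $m$ has unit determinant and $\bar z=z^{-1}$ on the circle. The analogous delicate point in $(3)$ is the reality (resp. purely imaginary character) of $g$ (resp. $h$), which rests again on $m_1$ preserving the unit circle; once this is in hand the commutation with $P_{12}$ is immediate.
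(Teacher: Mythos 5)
Your proposal is correct and takes essentially the same route as the paper: direct verification on the circle, checking $QPQ=1-P$ on the basis $z^n$ and using antilinearity of $Q$ for the second identity in (2), and reducing the commutation with $P_{12}=M_g+M_hR$ to the identities $\overline{g}=g$ and $\overline{h}=-h$, which the paper merely asserts and you rightly justify from $|u|=1$ whenever $|z|=1$. The only cosmetic difference is that you compute directly with $Q$ (via $QM_fQ=M_{\overline f}$ and $QRQ=-R$) where the paper factorizes $Q=M_{z^{-1}}C$ and records $CM_gC=M_g$, $CM_hC=-M_h$, $CR=RC$, $RM_{z^{-1}}R=-M_{z^{-1}}$ --- the same computation --- and your explicit check of the scalar identity $\alpha_m(z)\,(m^{-1}z)^{-1}=z^{-1}\overline{\alpha_m(z)}$ supplies the detail behind the paper's ``(1) follows from definitions,'' while also correctly reading $g,h$ from Theorem \ref{th11} (the paper's pointer to Theorem \ref{th3} there is a slip).
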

\begin{proof}
(1) follows from definitions. For (2), it is enough to check that
 $QPQ= 1-P$ since $Q$ is conjugate linear. One checks easily that
  $QPQ= 1-P$ on $z^n$ from equation (\ref{10g}). To prove (3), note
  that $\overline{g} =g, \overline{h}= -h$ where $g,h$  are as in
  Th. \ref{th3}. (3) follows from $CM_g C= M_g, \ \ CM_h C= -M_h, \
  CR=RC, \
  \
  RM_{z^{-1}}R = -M_{z^{-1}}$.
\end{proof}
Denote by $Q_{\pm 1}:= \frac{1}{2}(I\pm Q)$ the projections from
$\H$ to the eigenspaces $\H_{\pm 1}$ of $Q$ with eigenvalues $\pm
1$. Note that $\H =\H_{+1} \bigoplus \H_{-1}$ while by slightly
abuse of notations in this section only we use $\bigoplus$ to mean
orthogonal with respect to the real part of inner product on $\H$.
Note that $M_{-i}\H_{+1} =\H_{-1}$ since $Q$ is conjugate linear.

For an interval $I$ we let $L^2(I)_{\pm 1}= L^2(I)\cap \H_{\pm 1}.$
The real free fermion net associates to interval $I$ the von neumann
algebra $\M( L^2(I)_{+ 1})$ generated by $a(f)+ a(f)^*, \forall f\in
L^2(I)_{+ 1}.$ We note that $\M( L^2(I))$  is the graded tensor
product of $\M( L^2(I)_{+ 1})$  and $\M( L^2(I)_{- 1})$.

Let $K_1 = L^2(I_1)$, $K_2 = L^2(I_1\cup I_2 \cup -I_2)$, $F =
L^2(I_1) + j L^2(I_1)$  as in the beginning of section
\ref{general}. Note by Lemma \ref{lemr} $K_1, K_2, F$ are
$Q$-invariant. Let $(K_1)_{\pm 1}:= K_1\cap \H_{\pm 1}, (K_2)_{\pm
1}:= K_2\cap \H_{\pm 1}, F_{\pm 1}:= F\cap \H_{\pm 1}$. Then we have
$\M((K_1)_{\pm 1})\subset \M(F_{\pm 1})\subset \M((K_2)_{\pm 1}),$
where $ \M(F_{\pm 1})$ is the canonical type $I$ factor.  Note that
we have $S(F)= S(F_{+1}) + S(F_{-1})$. Since $\M(F_{- 1}) = \Lambda
(-i) \M(F_{- 1}) \Lambda (i)$ and the vacuum is preserved by
$\Lambda(i)$, it follows that $S(F_{+1}) = S(F_{-1})= \frac{1}{2}
S(F)$. We have therefore proved the following:

\begin{theorem}\label{th3r}
We have $\M((K_1)_{\pm 1})\subset \M(F_{\pm 1})\subset \M((K_2)_{\pm
1}), $ and
 $S(F_{+1}) = S(F_{-1})=
\frac{1}{2} S(F) < \infty $ where $S(F)$ is the  von Neumann entropy
in Th. \ref{th3}.
\end{theorem}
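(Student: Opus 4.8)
The plan is to derive Theorem~\ref{th3r} from the finiteness $S(F)<\infty$ of Theorem~\ref{th3}, using the conjugation $Q$ to split $F$ into its eigenspace components $F_{\pm 1}=F\cap\H_{\pm 1}$ and showing that these contribute equally to $S(F)$. First I would record the inclusions. By Lemma~\ref{lemr} the subspaces $K_1$, $K_2$ and $F$ are $Q$-invariant: for $K_1=L^2(I_1)$ and $K_2=L^2(I_1\cup I_2\cup -I_2)$ this is immediate from $QM_fQ=M_{\bar f}$, while for $F$ it is exactly the commutation $[Q,P_{12}]=0$ of part~(3). Since $Q^2=I$, intersecting the chain of standard subspaces $K_1\subset F\subset K_2$ with the eigenspaces $\H_{\pm 1}$ preserves the inclusions, giving $(K_1)_{\pm 1}\subset F_{\pm 1}\subset (K_2)_{\pm 1}$; applying the real Fermi second quantisation then yields $\M((K_1)_{\pm 1})\subset\M(F_{\pm 1})\subset\M((K_2)_{\pm 1})$. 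That $\M(F_{\pm 1})$ is the canonical intermediate type~$I$ factor follows from the corresponding property of $\M(F)$ and the factorisation below, a graded tensor factor of a type~$I$ factor being again type~$I$; alternatively one applies Corollary~\ref{tI}, the modular data of $F_{\pm 1}$ being the restrictions to $\H_{\pm 1}$ of those of $F$, since $Q$ is real-orthogonal, commutes with the complex structure, and preserves $F$.

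Next I would establish the additivity $S(F)=S(F_{+1})+S(F_{-1})$. The splitting $\H=\H_{+1}\oplus\H_{-1}$, orthogonal for the real part of the inner product, induces the graded tensor product factorisation $\M(F)=\M(F_{+1})\,\hat\otimes\,\M(F_{-1})$, across which the Fock vacuum $\Om$ is a product vector. By Corollary~\ref{RF} the density matrices $\r'_{F_{+1}}$ and $\r'_{F_{-1}}$ are even operators (functions of the one-particle modular operator), so the graded and ordinary tensor products agree on them and $\r'_F=\r'_{F_{+1}}\otimes\r'_{F_{-1}}$ in $\M(F)$. Since $S(\cdot)=-\Tr(\,\cdot\,\log\,\cdot\,)$ is additive on tensor products of density matrices, $S(F)=S(F_{+1})+S(F_{-1})$.

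Finally, for $S(F_{+1})=S(F_{-1})$ I would use the unitary $\Lambda(i)$, the Fermi second quantisation of scalar multiplication $M_i$. The operator $M_i$ commutes with $M_g$, $M_h$ and with $R$, hence with $P_{12}$, so it preserves $F$; combined with $M_{-i}\H_{+1}=\H_{-1}$ this gives $M_{-i}F_{+1}=F_{-1}$ and therefore $\Lambda(-i)\M(F_{+1})\Lambda(i)=\M(F_{-1})$. Because $\Lambda(i)$ fixes $\Om$, this is a vacuum-preserving isomorphism of type~$I$ factors and so preserves the von Neumann entropy, giving $S(F_{+1})=S(F_{-1})$. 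Together with the additivity this yields $S(F_{+1})=S(F_{-1})=\tfrac12 S(F)$, which is finite by Theorem~\ref{th3}.

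I expect the one genuinely delicate point to be the bookkeeping in the additivity step: one must verify that the Klein-transform twist in the Fermi second quantisation turns the real-orthogonal splitting $\H_{+1}\oplus\H_{-1}$ into precisely the graded tensor product $\M(F_{+1})\,\hat\otimes\,\M(F_{-1})$, with $\Om$ restricting to a genuine product state, so that $\r'_F$ factorises with no cross terms and the entropy is strictly additive. Once this factorisation is secured, both the inclusion and the symmetry arguments are routine.
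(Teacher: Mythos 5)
Your proposal is correct and follows essentially the same route as the paper's proof: $Q$-invariance of $K_1, F, K_2$ from Lemma \ref{lemr} gives the inclusions $\M((K_1)_{\pm 1})\subset \M(F_{\pm 1})\subset \M((K_2)_{\pm 1})$, the graded tensor product factorisation of $\M(F)$ over the real-orthogonal splitting $\H=\H_{+1}\oplus\H_{-1}$ gives $S(F)=S(F_{+1})+S(F_{-1})$, and conjugation by the vacuum-fixing unitary $\Lambda(-i)$, using $M_{-i}\H_{+1}=\H_{-1}$, gives $S(F_{+1})=S(F_{-1})$. You merely make explicit some details the paper leaves implicit (the evenness of the density matrices reconciling graded and ordinary tensor products, and the commutation of $M_{i}$ with $P_{12}$ yielding $M_{-i}F_{+1}=F_{-1}$), so no further comparison is needed.
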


\subsection{Passing to free bosons}
\label{bos}

Note first that
\ben\label{Fcan}
\M(F) = \M\big(L^2(I_1)\big) \vee kJ_{\M(K_1)^c\cap \M(K_2)} \M\big(L^2(I_1)\big) J_{\M(K_1)^c\cap
\M(K_2)} k^{-1} \ ,
\een
where $\M(K_1)^c$ denotes the graded commutant of $\M(K_1)$, and $k$
is the Klein transformation.

Let $g$ be an automorphism of the net $\M(L^2(I))$ implemented
by a unitary $U(g)$ on $\La(\H)$ which commutes with grading
operator.  This means  $U(g)\Omega = \Omega$ and $U(g) \M(L^2(I))
U(g)^* = \M(L^2(I))$ and $U(g)$ commutes with $\Gamma=k^2$ with $k$
the Klein transformation.  It is clear that such adjoint action of
$U(g)$ preserves $\M(F)$ as a set.

Example of such automorphisms are $\Ga = k^2$ with $k$ the Klein
transformation and more generally  the automorphisms implemented by
the multiplication by $\l$, $|\l| = 1$, on $L^2(I)$. This is the
action of $U(1)$ on free fermion net. In fact one can also check
from Th. \ref{th11} that the automorphisms implemented by this
$U(1)$  action preserves $\M(F)$ as a set.

In the notation of \cite{X1}, the free fermion net is written as
$\A_{U(1)_1}$. The free bosonic net with central charge $c =1$ is
the $U(1)$  the fixed point net of $\A_{U(1)_1}$ under the action of
$U(1)$ as described above (cf. Page 186 of \cite{X1}). Denote by $G
:= U(1)$. Then $U(g) \M_F U(g)^* = \M(F)$, $g\in G$.

We have
\[
\M(K_1)^G \subset \M(F)^G \subset \M(K_2)^G \ ,
\]
where $\M(F)^G$ is the fixed point of $\M(F)$ under the action of $G$. As $\M(F)$ is a type $I$ factor and $G$ is compact, $\M(F)^G$ is a type $I$ von Neumann algebra, but $\M(F)^G$ is not a factor, rather $\M(F)^G$ is the direct sum of countably many type $I$ factors.

Since the vacuum state $\omega = (\Omega , \cdot\Omega)$ is
preserved by the action of $G$, by Proposition 6.7 of \cite{OP} we
have
\[
S(\omega|_{\M(F)^G}) \leq S(\omega|_{\M(F)}) < \infty \ ,
\]
where last inequality follows by Theorem \ref{th3}. By the remark
(2)  after  Theorem \ref{th3} we have therefore proved the
following:
\begin{theorem}\label{th4}
Let $I, \tilde I$ be open intervals with $\bar{I}\subset \tilde I$.  For the free
bosonic net $\A^G_{U(1)_1}$ we have that
\[
\A^G_{U(1)_1}(I) \subset \M(F)^G\subset \A^G_{U(1)_1}(\tilde I) \ ,
\]
where $\M(F)^G$ is a type $I$ discrete, von Neumann algebra and the vacuum von Neumann entropy
\[
S\big(\omega |_{\M(F)^G}\big) < \infty \ .
\]
\end{theorem}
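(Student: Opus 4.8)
The plan is to reduce everything to the finiteness already proved for the free Fermi net in Theorem \ref{th3}, by passing to the fixed points of the gauge group $G = U(1)$. First I would recall, from the discussion preceding the statement, that the gauge action $g \mapsto U(g)$ (multiplication by $\lambda$, $|\lambda| = 1$, on $L^2(I)$) is implemented by unitaries commuting with the grading and fixing $\Omega$, and that it preserves $\M(F)$ as a set — a point checked directly from Theorem \ref{th11}. Since the same action fixes $\M(K_1)$ and $\M(K_2)$, applying $(\cdot)^G$ is functorial and yields
\[
\M(K_1)^G \subset \M(F)^G \subset \M(K_2)^G .
\]
By definition the fixed-point net $\A^G_{U(1)_1}$ is the $c=1$ free bosonic ($U(1)$-current) net; using Remark (2) after Theorem \ref{th3} together with covariance under the M\"obius group $\mob$ to pass from the symmetric configuration to arbitrary intervals $\bar I \subset \tilde I$ — so that $K_1 = L^2(I)$ and $K_2 = L^2(\tilde I)$ — the chain above reads exactly the asserted inclusion $\A^G_{U(1)_1}(I) \subset \M(F)^G \subset \A^G_{U(1)_1}(\tilde I)$.

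Next I would determine the type of $\M(F)^G$. By Theorem \ref{th3}, $\M(F) \cong B(\K)$ is a type $I$ factor, and the restriction of the gauge action to $\M(F)$ is implemented by a unitary representation of the compact group $G$ on $\K$. Decomposing $\K = \bigoplus_n \K_n$ into the isotypic subspaces of this representation, the fixed-point algebra is the relative commutant of $U(G)$,
\[
\M(F)^G = U(G)' \cap B(\K) = \bigoplus_n B(\K_n) ,
\]
a countable direct sum of type $I$ factors, i.e. a discrete type $I$ von Neumann algebra as claimed.

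For the entropy bound I would exploit the $G$-invariance of the vacuum. Because $U(g)\Omega = \Omega$, the average $E(x) = \int_G \alpha_g(x)\, dg$, with $\alpha_g = \Ad\,U(g)$, defines a normal conditional expectation of $\M(F)$ onto $\M(F)^G$ satisfying $\omega \circ E = \omega$. Monotonicity of the von Neumann entropy under such a state-preserving conditional expectation (Proposition 6.7 of \cite{OP}) then gives
\[
S\big(\omega|_{\M(F)^G}\big) \leq S\big(\omega|_{\M(F)}\big) = S(F) < \infty ,
\]
the last value being finite by Theorem \ref{th3}.

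The genuinely hard analytic work has already been done: the finiteness of $S(F)$ for the Fermi net rests on the Hankel-operator estimates of Proposition \ref{lp2} and Theorem \ref{th2}. The only new feature here is that $\M(F)^G$ is no longer a factor, so its von Neumann entropy acquires a classical contribution from the center $\bigoplus_n \mathbb C\, 1_{\K_n}$; I expect this to be the point requiring the most care. Since $\omega$ is $G$-invariant, the vacuum density $\r'_F$ is block diagonal for the isotypic decomposition, so this contribution is precisely the Shannon entropy of the weights $\{\Tr(\r'_F|_{\K_n})\}$ and the inequality of \cite{OP} is in fact an equality; what I would verify carefully is that the trace on $\M(F)^G$ entering $\r'_F$ is the restriction of the trace on $\M(F)$, so that the comparison is the one intended by Proposition 6.7 of \cite{OP}.
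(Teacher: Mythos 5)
Your proposal is correct and follows essentially the same route as the paper's proof: $G$-invariance of the canonical type $I$ factor $\M(F)$ (checked from Theorem \ref{th11}), the fixed-point chain $\M(K_1)^G \subset \M(F)^G \subset \M(K_2)^G$, Proposition 6.7 of \cite{OP} applied via the $G$-invariant vacuum, and M\"obius covariance to pass from the symmetric configuration to arbitrary $\bar I \subset \tilde I$. Your added details --- the isotypic decomposition $\M(F)^G = U(G)'\cap B(\K) = \bigoplus_n B(\K_n)$ and the observation that block-diagonality of the vacuum density matrix makes the entropy inequality an equality --- are correct elaborations of steps the paper asserts without spelling out.
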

\begin{remark}
By taking the graded tensor product of $r$-copies of the net
$\A^G_{U(1)_1}$, Theorems \ref{th3} and \ref{th4} immediately
generalise to the case of the conformal net of $r$ free fermions.
\end{remark}

\section{Conclusion} Let $O_R$ denote the double cone in the Minkowski spacetime that is the causal envelop of a time zero ball of radius $R>0$ centered at the origin. We have shown a natural way to define rigorously the entanglement entropy $S_\A(O_{R'}, O_{R})$ of a QFT net of von Neumann algebras $\A$ associated with an inclusion $O_{R'} \subset O_{R}$, $R' < R$.  We have shown that $S_\A(O_{R'}, O_{R})$ is finite, in particular in low dimensional free models.

It will be natural to establish that this entropy is finite, also for higher dimensional models, in a model independent ground. We expect that $S_\A(O_{R'}, O_{R})$ is finite if the modular nuclearity condition holds, see \cite{Haag}.

As $R'\to R$, the entropy $S_\A(O_{R'}, O_{R})$ should diverge. However, the divergence leading term should grow proportionally to the area of the sphere of radius $R$, according to a seemingly universal feature (area theorems).

Results in this direction would put several fundamental issues in theoretical physics on a rigorous mathematical ground.

\bigskip

\noindent
{\bf Acknowledgements.}
Our collaboration took place, in particular, during the CERN workshop on ``Advances in Quantum Field Theory" in March-April 2019 and the program ``Operator Algebras and Quantum Physics'' at the Simons Center for Geometry and Physics at Stony Brook in June 2019. We are grateful to both institutions for the invitations.  

\medskip
\noindent
R.L. acknowledges the MIUR Excellence Department Project awarded to the Department of Mathematics, University of Rome Tor Vergata, CUP E83C18000100006.

\end{document}